\documentclass[conference]{IEEEtran}
\IEEEoverridecommandlockouts
\usepackage{tikz}

\usepackage{amsmath}
\usepackage{amssymb}
\usepackage{amsfonts}
\usepackage{mathtools}         % adds \mathclap, etc.

\allowdisplaybreaks

\usepackage{hyphenat}          % provides \hyp for breakable hyphens

\newcommand{\runinhead}[1]{\noindent\textbf{#1}\nobreak\hspace{0.5em}\ignorespaces}

\usepackage[most]{tcolorbox}

\usepackage{graphicx}
\usepackage{adjustbox}
\usepackage{booktabs}
\usepackage{xcolor}
\usepackage{colortbl}
\definecolor{TableHeaderYellow}{HTML}{FFF8D6}
\colorlet{tabheader}{TableHeaderYellow}
\definecolor{TableOursGreen}{HTML}{F0FFF4}
\colorlet{oursrow}{TableOursGreen}
\definecolor{PalSystem}{HTML}{5B91C2}
\definecolor{PalSystemDark}{HTML}{3E6E9C}
\definecolor{PalExact}{HTML}{D27F74}
\graphicspath{{../figures/}{figures/}}
\usetikzlibrary{positioning}

\usepackage{multirow}
\usepackage{makecell}
\usepackage{array}
\usepackage{tabularx}
\usepackage{siunitx}
\usepackage{algorithm}
\usepackage{algpseudocode}     % do NOT also load algorithmic

\usepackage{stmaryrd}
\SetSymbolFont{stmry}{bold}{U}{stmry}{m}{n}
\usepackage{fontawesome5}
\usepackage{capt-of}
\usepackage{placeins}          %  to force floats to flush at section boundaries

\usepackage{cite}
\usepackage[colorlinks=true,
            linkcolor={blue!60!black},
            citecolor={blue!60!black},
            urlcolor={blue!60!black}]{hyperref}
\usepackage{xurl}

\makeatletter
\def\BibTeX{{\rm B\kern-.05em{\sc i\kern-.025em b}\kern-.08em
  T\kern-.1667em\lower.7ex\hbox{E}\kern-.125emX}}
\makeatother

\usepackage{amsthm}
\theoremstyle{plain}
\newtheorem{theorem}{Theorem}

\theoremstyle{definition}

\newcommand{\system}{FESC}

\newcommand{\mambabase}{Mamba-base}

\usepackage{tikz}
\newsavebox{\rwYesBox}
\newsavebox{\rwHalfBox}
\newsavebox{\rwNoBox}
\AtBeginDocument{%
  \savebox{\rwYesBox}{\begin{tikzpicture}[baseline=-0.55ex]\fill[black] (0,0) circle (0.48ex);\end{tikzpicture}}%
  \savebox{\rwHalfBox}{\begin{tikzpicture}[baseline=-0.55ex]\draw[black, line width=0.18pt] (0,0) circle (0.48ex);\fill[black] (0,0) -- (0,0.48ex) arc(90:-90:0.48ex) -- cycle;\end{tikzpicture}}%
  \savebox{\rwNoBox}{\begin{tikzpicture}[baseline=-0.55ex]\draw[black, line width=0.25pt] (0,0) circle (0.48ex);\end{tikzpicture}}%
}
\newcommand{\rwYes}{\usebox{\rwYesBox}}
\newcommand{\rwHalf}{\usebox{\rwHalfBox}}
\newcommand{\rwNo}{\usebox{\rwNoBox}}

\graphicspath{{figures/}{./}}
\begin{document}

\title{\system{}: Remodeling Long-Context Private Inference with Encrypted State-Space Models}

% --- arXiv: real authors, no NDSS review banner ---
\author{%
  \IEEEauthorblockN{Yufan Zhu\IEEEauthorrefmark{1},\quad
                    Chao Jin\IEEEauthorrefmark{2},\quad
                    Khin Mi Mi Aung\IEEEauthorrefmark{2},\quad
                    Xiaokui Xiao\IEEEauthorrefmark{1}}\\[0.6ex]
  \IEEEauthorblockA{%
    \IEEEauthorrefmark{1}National University of Singapore, Singapore\\
    \IEEEauthorrefmark{2}Agency for Science, Technology and Research, Singapore}
}

\maketitle
\pagestyle{plain}

% --- content: auto-synced from ../tex/sections (single source of truth) ---
\begin{abstract}
Processing long, sensitive documents with machine-learning models requires efficient, privacy-preserving long-context inference.
Prior private inference systems optimize or distribute encrypted Transformer
attention, but its quadratic token-pair work remains the bottleneck as
sequence length grows.
Selective state-space models (SSMs) offer linear-time recurrence, yet direct
encrypted implementation incurs linear multiplicative depth, sequence-wide state
residency, or dense FHE--MPC conversion.
We present Factorized Encrypted Scan-Contract (\system{}), a hybrid FHE--MPC system for
private long-context selective SSM inference.
Its factorized scan-contract keeps input-dependent transitions compact across
conversion boundaries, composes them without dense expansion, streams state
chunks on demand, and contracts outputs before conversion.
We demonstrate interface compatibility of the scan-contract implementation across invariant and selective SSM architectures.
For our Mamba-2 instantiation, we design GPU-optimized CKKS kernels for linear computations, MPC protocols for SiLU, softplus,
exponential, and RMSNorm, with approximation-aware fine-tuning.
To our knowledge, \system{} is the first private long-document inference
system to complete native end-to-end execution at $L{\ge}1{,}024$ on a single GPU\@.
At $L{=}2{,}048$, a $12$-layer \mambabase{} model completes inference in
$77.3$ minutes on one A100 GPU with a peak memory footprint of
$32.7$ GB, while maintaining near-plaintext accuracy on the evaluated
long-document tasks.
\end{abstract}

% !TEX root = ../main.tex
\section{Introduction}

The growing deployment of Machine Learning as a Service (MLaaS) enables users to outsource inference to the cloud without hosting models locally.
However, inputs such as legal contracts, medical records, and financial reports are sensitive and should remain hidden from the model provider~\cite{lee2019biobert,chalkidis2020legal,yang2023finchainbert}.
Private inference addresses this problem by evaluating the model without revealing the user's input, typically using cryptographic methods such as fully homomorphic encryption (FHE) and secure multi-party computation (MPC)~\cite{gilad-bachrach2016cryptonets,mohassel2017secureml,juvekar2018gazelle}.
Recent hybrid FHE--MPC systems make Transformer inference practical for short-context tasks~\cite{huang2022cheetah,pang2024bolt,lu2025bumblebee,xu2025blb,zhu2026encformer}, but their encrypted attention scales quadratically with sequence length in both operations and memory.
For workloads spanning hundreds or thousands of tokens, this quadratic cost dominates latency and can exceed available hardware memory~\cite{huang2026beyondlatency}.
Existing mitigations reduce this cost through token pruning or multi-GPU placement~\cite{zhang2025cipherprune,gong2026aegis}, improving practical efficiency while retaining quadratic attention over the executed tokens.
As Figure~\ref{fig:observations} illustrates, long-context private inference therefore needs a different encrypted sequence primitive.

\begin{figure}[t]
\centering
\includegraphics[width=\linewidth,page=1,trim=488pt 323pt 495pt 237.04pt,clip]{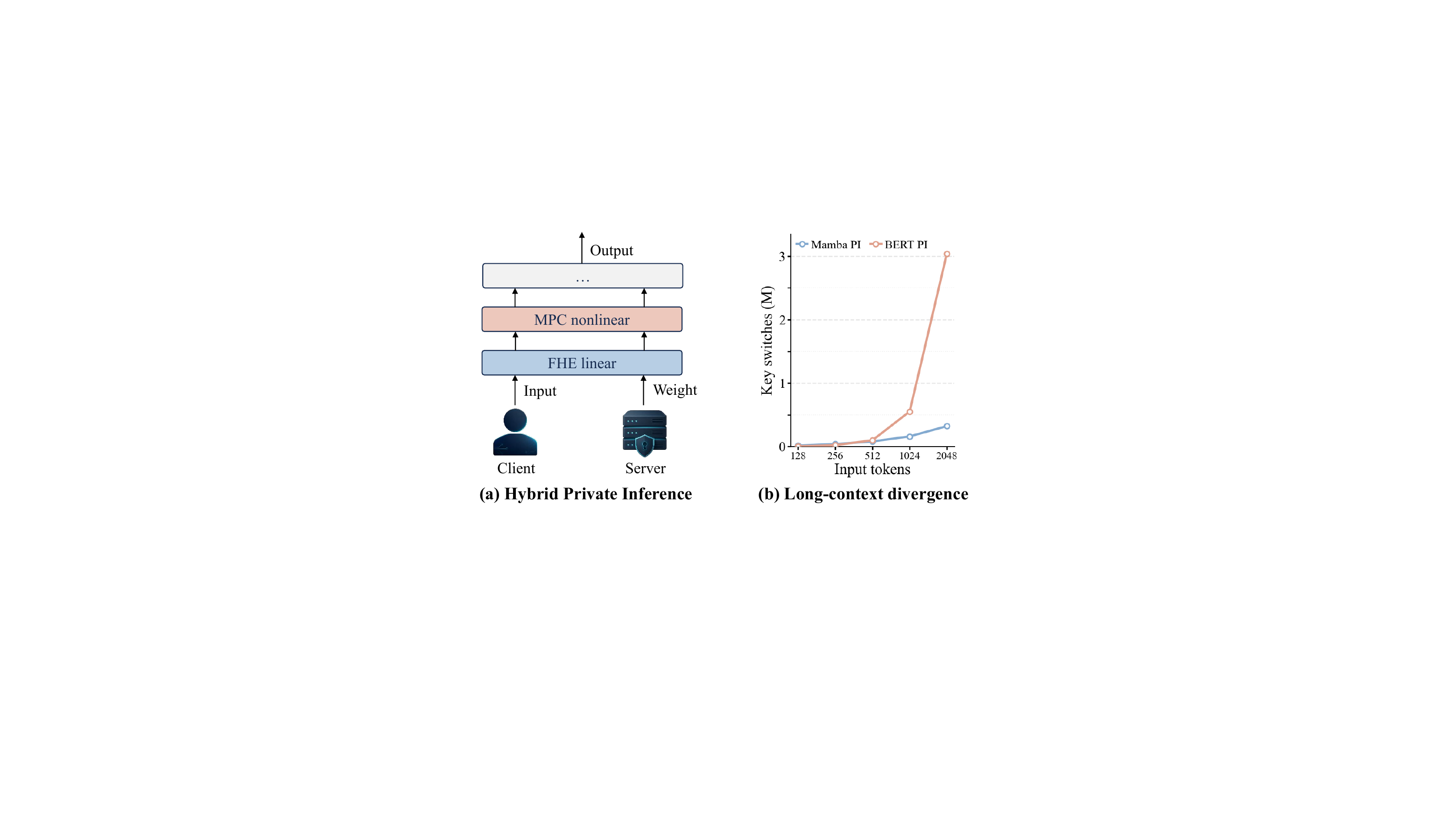}
\caption{\system{} motivation. (a) Hybrid FHE--MPC inference. (b) Encrypted Mamba key-switching scales linearly with sequence length, while encrypted BERT attention is quadratic.}
\label{fig:observations}
\end{figure}

Selective SSMs~\cite{gu2024mamba,dao2024mamba2} replace token-pair attention with an input-selective recurrence that scales linearly in sequence length and performs strongly on long-sequence tasks~\cite{gu2022efficiently}.
However, direct encrypted execution breaks this linear-scaling advantage in three ways, as summarized in Figure~\ref{fig:naive_failure}.
Sequentially evaluating the recurrence creates a multiplicative chain of length $\Theta(L)$ and exceeds the leveled-FHE multiplicative depth budget.
A parallel prefix scan~\cite{blelloch1990prefix} reduces the depth to $\Theta(\log L)$, but keeping all token-state ciphertexts live across the scan exceeds hardware memory at long context.
Also, a naive CKKS--MPC boundary materializes the affine transition ciphertexts before conversion, requiring full-state communication and high boundary payload.
These naive implementations remain impractical, so selective SSMs require redesigned FHE kernels and MPC protocols under encryption.

\begin{figure}[t]
\centering
\includegraphics[width=\linewidth,page=2,trim=620pt 238pt 382pt 111.04pt,clip]{figures/FESC_diagrams.pdf}
\caption{Naive encrypted selective SSM failure modes. (a) Sequential recurrence exceeds the multiplicative depth budget. (b) A full prefix scan exceeds GPU memory. (c) A materialized dense boundary communicates the full recurrent state volume.}
\label{fig:naive_failure}
\end{figure}

We present Factorized Encrypted Scan-Contract (\system{}) as a hybrid FHE--MPC system for private long-context selective SSM inference with a logarithmic-depth encrypted dependency chain, resident encrypted state, and low online communication.
Its scan-contract primitive keeps compact private factors, composes the same affine recurrence under CKKS, and exports contracted outputs.
We deploy \system{} on Mamba-2 as the headline end-to-end model, evaluate it against prior encrypted inference systems on matched tasks, and validate the same scan-contract implementation across representative invariant and selective SSM recurrences.
This establishes the long-context cost advantage of encrypted selective SSM inference while retaining high task accuracy, and shows that the same encrypted primitive applies beyond Mamba-2.
\system{} is the first system to realize encrypted input-selective SSM inference at long context, breaking the quadratic complexity bottleneck that has confined prior private systems to short sequences.

We make the following contributions.

\begin{enumerate}
\item \textbf{FHE Scan-Contract Algorithm.}
\system{} introduces a factorized FHE scan-contract algorithm for blocked private input-selective affine recurrences, formalized in \S\ref{sec:ks-he} and evaluated in \S\ref{sec:performance}, that resolves all three naive implementation failures. A parallel-summary scan has logarithmic dependency depth and linear encrypted work, the resident schedule cuts the $L{=}2048$ live-state footprint from $59.2$\,GB to $12.8$\,GB, and compact boundary factors compose selective decay without dense state expansion.

\item \textbf{Specialized MPC Nonlinear Protocols.}
\system{} designs MoSiLU, MoSoft, NeRMS, and MMExp, four MPC protocols formalized in \S\ref{sec:mpc} that replace generic elementary-function circuits with structure-specific polynomial, comparison, mux, and Newton components. Their fixed data-independent schedules reduce MPC online operations and rounds, yielding up to a $7.7\times$ online nonlinear-protocol cost reduction in \S\ref{sec:performance}.

\item \textbf{End-to-End GPU Implementation and Evaluation.}
\system{} builds an end-to-end private inference system covering $L\in\{128,256,512,1024,2048,4096\}$ with GPU-optimized kernels, detailed in \S\ref{sec:experiments}, running on a single GPU where prior hybrid systems~\cite{juvekar2018gazelle,huang2022cheetah,pang2024bolt,lu2025bumblebee,xu2025blb,zhu2026encformer} require multi-GPU execution or remain limited to short sequences, while also scaling to four GPUs. On a single A100, \system{} runs $L{=}2048$ in $77.3$ minutes, faster than the reported four-GPU AEGIS~\cite{gong2026aegis} result, and a four-A100 path reaches $36.0$ minutes. \S\ref{sec:performance} reports latency, memory, communication, accuracy, approximation, ablations, model-shape sweeps, and cross-GPU portability, separate from the cross-family SSM implementations in Appendix~\ref{app:cross-ssm-family}.
\end{enumerate}

% !TEX root = ../main.tex
\section{Background and Threat Model}
\label{sec:preliminaries}

\subsection{Fully Homomorphic Encryption and CKKS}
\label{sec:prelim_crypto}

A public-key FHE scheme provides $\mathsf{KeyGen}$, $\mathsf{Enc}$, $\mathsf{Dec}$, and $\mathsf{Eval}$ such that an evaluator can compute on ciphertexts without seeing the plaintext
\[
  \mathsf{Dec}_{sk}
  \bigl(\mathsf{Eval}_{evk}(f,\mathsf{Enc}_{pk}(x))\bigr)
  \approx f(x).
\]
For exact schemes such as BGV~\cite{brakerski2012leveled}, the output is exact modulo a plaintext ring. The CKKS scheme~\cite{cheon2017ckks} instead provides approximate arithmetic over complex values, with error from encoding, rounding, and noise growth. A leveled CKKS instantiation fixes a modulus chain $Q=\prod_i q_i$ for a target multiplicative depth, consumed by evaluation until the level budget is exhausted, with bootstrapping available to refresh it at the cost of substantial latency and additional approximation error. Since selective SSM inference is finite-precision and dominated by real-valued linear algebra, \system{} follows prior hybrid private-inference systems~\cite{juvekar2018gazelle,huang2022cheetah,pang2024bolt,lu2025bumblebee,xu2025blb} and uses leveled CKKS without bootstrapping, as detailed in Appendix~\ref{app:he_choice}.

CKKS encrypts vectors over the cyclotomic ring $R_Q=\mathbb{Z}_Q[X]/(X^N+1)$ and packs $n_{\mathrm{ckks}}=N/2$ complex slots into one ciphertext for SIMD-style evaluation. A key-switching rewrites a ciphertext after an automorphism or multiplication so it remains usable under the evaluation keys. Slot rotations, conjugation, and relinearization require auxiliary special-prime moduli and key-switching through public evaluation keys, which dominate packed GPU CKKS kernels. Separately, plaintext--ciphertext and ciphertext--ciphertext multiplications raise the scale and require rescaling that drops one modulus prime and one level, while additions, rotations, conjugation, and relinearization do not by themselves consume levels. Packing conventions and GPU primitive latencies are given in Appendix~\ref{app:he-shifts} and Table~\ref{tab:he_primitive_bench}.

\subsection{Secure Multiparty Computation}
\label{sec:prelim_mpc}

Secure multiparty computation allows parties to jointly evaluate a function over private inputs while revealing only the intended output.
\system{} uses two-party arithmetic secret sharing to evaluate the nonlinear components of selective SSM inference without exposing intermediate activations.
A real value $r$ is encoded in fixed point over $\mathbb{Z}_{2^\ell}$ and split into an additive share pair $[[r]]=(r_0,r_1)$ with $r_0+r_1=r\bmod 2^\ell$, so neither party holds $r$ in the clear.
Additions and public-scalar operations are local and free.
Multiplying two secret-shared values uses a precomputed Beaver triple~\cite{beaver1992efficient} $([[a]],[[b]],[[c]])$ with $c=ab$. Given $[[x]]$ and $[[y]]$, the parties open $\alpha=x-a$ and $\beta=y-b$, then compute
\[
  [[xy]] = [[c]]+\alpha[[b]]+\beta[[a]]+\alpha\beta,
\]
where $\alpha\beta$ is a public correction absorbed into one share.
Fixed-point multiplication is followed by secure truncation to restore scale $2^F$.
Secret-share multiplication and secure comparison introduce interaction, so \S\ref{sec:mpc-approx} reports both critical-path rounds and total interactive operation counts.
We evaluate SiLU, softplus, the discretization exponential, and RMSNorm under these primitives in the semi-honest real--ideal model~\cite{yao1986how,goldreich1987how}, with full protocol specifications in \S\ref{sec:mpc}.

\subsection{Threat Model and Security Goal}
\label{sec:prelim_model}

We follow the standard two-party semi-honest private-inference setting for
MLaaS~\cite{juvekar2018gazelle,huang2022cheetah,pang2024bolt,lu2025bumblebee,xu2025blb,zhu2026encformer}.
The client $P_0$ holds the private input $x$ and the CKKS secret key. The
server $P_1$ holds the private model parameters $\theta$ and the CKKS public
and evaluation keys. We assume one static semi-honest corruption without
collusion, where the corrupted party follows the protocol but may inspect its
full execution view.

Security is defined for the fixed deployed encrypted computation
$f^{\mathrm{dep}}_\theta$, including its approximation, fixed-point encoding, packing,
and FHE--MPC schedule, with $\mu$ covering the architecture, tensor shapes,
sequence lengths, CKKS parameters, approximation degrees, bucket counts, loop
bounds, and fixed execution schedule, treating numerical approximation error
separately from protocol privacy. The protocol achieves two-sided privacy, so
a corrupted server learns nothing about $x$ beyond $\mu$, message sizes, timing
induced by the fixed public schedule, and abort status, while a corrupted client
learns nothing about $\theta$ beyond its output $y$ and the same public
quantities.

We formalize the ideal functionality as
\[
  \mathcal{F}_{\mathrm{SelSSM}}\langle x,\theta,\mu\rangle
  \mapsto
  \langle f^{\mathrm{dep}}_\theta(x),\bot\rangle ,
\]
where $x$ is supplied by $P_0$, $\theta$ is supplied by $P_1$, and $\mu$ is
public. The first output is delivered to $P_0$, while $P_1$ receives $\bot$.

\begin{theorem}[End-to-end semi-honest security]
\label{thm:main-security}
Assume semantic security of CKKS, semi-honest simulation security of the MPC
primitives, semi-honest simulation security of the CKKS--MPC conversion
protocols, and valid input-independent preprocessing correlations. Then
$\Pi_{\mathsf{Infer}}$ securely realizes $\mathcal{F}_{\mathrm{SelSSM}}$ for
the deployed function $f^{\mathrm{dep}}_\theta$ against one static semi-honest
corruption, leaking only $\mu$, message sizes and timing induced by the fixed
public schedule, and abort status.
\end{theorem}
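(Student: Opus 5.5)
We prove Theorem~\ref{thm:main-security} with a standard hybrid argument in the real--ideal paradigm, handled separately for the two corruption cases. First we fix the structure of $\Pi_{\mathsf{Infer}}$. It is a fixed, public, input-independent sequence of sub-protocols determined by $\mu$: CKKS linear kernels (including the scan-contract) evaluated by $P_1$ on ciphertexts under $P_0$'s key; CKKS-to-MPC conversions that produce additive shares via masking with server randomness; the MPC nonlinear protocols (MoSiLU, MoSoft, NeRMS, MMExp) built from Beaver multiplication, truncation, and comparison; and MPC-to-CKKS conversions. We will work in a hybrid model in which every MPC primitive and conversion is replaced by its ideal functionality. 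This is justified by the assumed simulation security together with the sequential composition theorem for semi-honest protocols. Because the schedule, loop bounds, degrees, and bucket counts are fixed by $\mu$, the order and number of sub-protocol calls, the message sizes, and the timing are all computable from $\mu$ alone. This is the source of the stated leakage.

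\textbf{Corrupted server $P_1$.} The simulator $\mathcal{S}_1$ receives $\theta$ and $\mu$, and outputs $\bot$. Its job is to produce $P_1$'s view, which contains the following:
\begin{enumerate}
\item fresh CKKS encryptions from $P_0$ of the input and of values re-encrypted after MPC; $\mathcal{S}_1$ replaces these by encryptions of $0$ with the same parameters;
\item outputs of ideal MPC and conversion functionalities, which are uniformly random shares in $\mathbb{Z}_{2^\ell}$; $\mathcal{S}_1$ samples fresh uniform shares.
\end{enumerate}
Indistinguishability follows from a hybrid sequence. We first swap real sub-protocols for ideal ones, one call at a time, using composition. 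We then replace each client ciphertext by an encryption of $0$, one at a time, using CKKS semantic security. The server never decrypts anything, so the approximate-decryption leakage issue of CKKS (the Li--Micciancio concern) does not arise on this side.

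\textbf{Corrupted client $P_0$.} The simulator $\mathcal{S}_0$ receives $x$, $\mu$, and the output $y=f^{\mathrm{dep}}_\theta(x)$. The client's view is its shares together with everything it decrypts. At each CKKS-to-MPC boundary, $P_0$ decrypts a ciphertext $\mathsf{ct}+\mathsf{Enc}(r)$, where $r$ is a server mask. We must argue that the decrypted value is distributed independently of $\theta$. Two things are needed:
\begin{enumerate}
\item $r$ is sampled from a distribution that statistically hides the bounded plaintext;
\item the server applies noise flooding or re-randomization before sending, so that the CKKS decryption noise, which depends on $\theta$ through the evaluated circuit, is statistically hidden.
\end{enumerate}
We expect this step to be the main obstacle. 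CKKS decryption is approximate, so the residual noise can leak information about the circuit, and in particular about $\theta$-dependent multiplication structure such as the factorized transitions in the scan-contract. Our first approach is to fold this into the assumed simulation security of the conversion protocol: that protocol, as specified in \S\ref{sec:mpc}, includes flooding noise of statistical distance $2^{-\kappa}$. Given this, the conversion outputs are simulatable as uniform shares. The only non-random message is the final reconstruction, which $\mathcal{S}_0$ programs to be consistent with $y$.

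Finally, we combine the two simulators and check two points. Aborts occur only on public parameter or size failures, so abort status is a function of $\mu$. The preprocessing correlations (Beaver triples and comparison keys) are input-independent by assumption, so they can be generated by the simulator. The total distinguishing advantage is bounded by the number of hybrids, which is polynomial and fixed by $\mu$, times the negligible CKKS and MPC advantages, plus the statistical flooding terms.
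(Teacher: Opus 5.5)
Your skeleton matches the paper's proof. You treat $\Pi_{\mathsf{Infer}}$ as a fixed public composition $\Pi_1\circ\cdots\circ\Pi_J$ and simulate one component at a time in a hybrid sequence. A corrupted server is handled by CKKS semantic security plus the MPC simulators, a corrupted client by masked CKKS-to-MPC conversion plus the MPC simulators, and all residual leakage is read off $\mu$. The server side, the preprocessing remark, and the composition bookkeeping are fine.

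The weak step is the client-side conversion, which you justify by a mechanism the protocol does not contain. No noise flooding is specified anywhere: \S\ref{sec:mpc} is the nonlinear-protocol section, and the conversion is Algorithm~\ref{alg:complex-ckks-to-mpc} in Appendix~\ref{app:conversion}.

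The idea you are missing is the one the paper uses. The server samples $\hat{\mathbf r}$ \emph{uniformly from the entire plaintext ring} $R_q$, not from a bounded masking distribution, and adds it as a plaintext. The client therefore decrypts $\hat{\mathbf p}+\mathbf e+\hat{\mathbf r}\bmod q$, which is exactly uniform and independent of both the encoded intermediate $\hat{\mathbf p}$ and the $\theta$-dependent noise $\mathbf e$. This one-time pad in $R_q$ removes your ``main obstacle'' for intermediate decryptions: your condition (1) holds perfectly, and condition (2) is not needed to hide $\mathbf e$. The only statistical slack in the paper's bound comes from $\Pi_{\mathrm{Field2Ring}}$ and $\Pi_{\mathrm{Ring2Field}}$ (error $2^{-\sigma_{\mathrm{stat}}}$), not from flooding. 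Since the theorem assumes simulation security of the conversions, your conclusion survives once you replace the flooding citation with this argument.

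Your worry is legitimate in two places, but both are handled by hypothesis or definition rather than by flooding, in the paper as in your proof. First, the plaintext mask does not re-randomize the component of $\langle\mathbf d\rangle$ that multiplies the secret key. The paper absorbs this into the conversion-security hypothesis. Second, the client decrypts the final prediction directly, so that decryption's CKKS noise is visible. The paper covers this by defining $f^{\mathrm{dep}}_\theta$ to include the deployed numerical error, and your programmed final message relies on the same definition.
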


The proof appears in Appendix~\ref{app:pipeline-security}. Appendix~\ref{app:offline-budget}
quantifies the required input-independent preprocessing.

\subsection{State Space Models}
\label{sec:ssm_model}

Structured state space models map an input sequence through a latent state governed by a linear dynamical system,
\begin{align*}
  h'(t) &= \mathbf A h(t)+\mathbf B u(t),\\
  y(t)  &= \mathbf C h(t)+D_{\mathrm{skip}}u(t),
\end{align*}
where $\mathbf A$ is the state transition, $\mathbf B$ and $\mathbf C$ are input and output maps, and $D_{\mathrm{skip}}$ is a skip parameter.
After discretization with step size $\delta$, the model becomes an affine recurrence
\[
  h_k=\bar{\mathbf A}h_{k-1}+\bar{\mathbf B}u_k,
  \qquad
  y_k=\mathbf C h_k+D_{\mathrm{skip}}u_k,
\]
with $\bar{\mathbf A}=\exp(\delta\mathbf A)$ and the corresponding zero-order-hold input map $\bar{\mathbf B}$.
This affine form exposes the decay and input-update factors that the scan-contract primitive composes under encryption.
Appendix~\ref{app:mamba-details} gives the closed-form discretization details used by Mamba-style selective SSMs.

\subsection{Mamba-2 Recurrence}
\label{sec:mamba_model}

Mamba-1~\cite{gu2024mamba} makes the SSM parameters input-dependent, allowing the recurrence to select information as it scans the sequence, and Mamba-2~\cite{dao2024mamba2} exposes a grouped scan form that is especially useful for encrypted execution.
\system{} uses Mamba-2 as the deployed model, while Appendix~\ref{app:cross-ssm-family} instantiates the same scan-packet view for representative invariant and selective SSM implementations, including Mamba-3.
For each token $k$, the Mamba-2 mixer produces compact factors, illustrated in Figure~\ref{fig:mamba-arch}. The scan input is
\[
  x_k[h,p]=\Delta_k[h]x^{\mathrm{raw}}_k[h,p],
\]
with per-head decay $a_k[h]$, grouped factors $B_k,C_k\in\mathbb{R}^{G\times d_s}$, raw skip branch $x^{\mathrm{raw}}_k$, and output gate $g^{\mathrm{out}}_k$, whose full construction appears in Appendix~\ref{app:mamba-details}.

Let $g_h=g(h)\in\{1,\ldots,G\}$ be the group assigned to head $h$, with $H$ heads, per-head width $P$, and state dimension $d_s$.
The grouped Mamba-2 recurrence is
\begin{align*}
  h_k[h,p,i] &= a_k[h]\,h_{k-1}[h,p,i]+x_k[h,p]\,B_k[g_h,i],\\
  y_k[h,p] &= \sum_i h_k[h,p,i]\,C_k[g_h,i]
  +D_{\mathrm{skip}}[h,p]\,x^{\mathrm{raw}}_k[h,p],\\
  \tilde y_k[h,p] &= y_k[h,p]\,g^{\mathrm{out}}_k[h,p].
\end{align*}

Thus each $(h,p)$ channel maintains a $d_s$-dimensional recurrent state, while heads assigned to the same group share the dynamic $B_k$ and $C_k$ factors.
\system{} preserves this model-side factorization, where the scan input $x_k$ is compact over $HP$ channels, the decay $a_k$ is compact over heads, the additive update factors as $x_k\otimes_g B_k$, the state is contracted with $C_k$ before the skip and gate path, and the block output is $o_k=\hat y_k W_{\mathrm{out}}$ with $\hat y_k=\operatorname{RMSNorm}_{\boldsymbol\gamma_{\mathrm{rms}}}(\tilde y_k)$ flattened to length $HP$.

\begin{figure}[t]
\centering
\includegraphics[width=0.8\columnwidth,page=4,trim=530pt 277pt 495pt 227.04pt,clip]{figures/FESC_diagrams.pdf}
\caption{Illustration of the Mamba-2 SSM mixer, producing the scan factors $(x,a,B,C)$ and post-SSM factors $x^{\mathrm{raw}},g^{\mathrm{out}}$.}
\label{fig:mamba-arch}
\end{figure}

\subsection{Parallel Prefix Scans}
\label{sec:prelim_prefix_scan}

The Mamba-2 recurrence can be evaluated as a parallel prefix scan, where each token update defines an affine map $F_k(z)=a_k z+s_k$ with $s_k=x_k\otimes_g B_k$, where $\otimes_g$ denotes the grouped outer product induced by $g(h)$.
Composing two such maps gives
\[
  (a_L,s_L)\bullet(a_R,s_R)
  =
  (a_{R}a_{L},\;a_{R}s_{L}+s_{R}),
\]
which is associative because it is function composition, so an inclusive scan
\[
  T_k=t_1\bullet t_2\bullet\cdots\bullet t_k,
  \qquad k=1,\ldots,L,
\]
over the token-local maps produces the same states as sequential evaluation in $\Theta(\log L)$ dependency depth.
Among classic scan networks~\cite{kogge1973parallel,sklansky1960conditional,brent1982regular}, \system{} uses the Brent--Kung scan, shown in Figure~\ref{fig:scan-tree}, because it achieves the fastest encrypted runtime by minimizing total composition nodes at logarithmic depth, since each composition is expensive under CKKS\@.
Appendix~\ref{app:scan-comparison} compares the scan topologies.

\begin{figure}[t]
\centering
\includegraphics[width=0.8\columnwidth,page=3,trim=380pt 304pt 478pt 260.04pt,clip]{figures/FESC_diagrams.pdf}
\caption{Brent--Kung prefix network for $L{=}8$. Each node applies the associative affine-composition operator.}
\label{fig:scan-tree}
\end{figure}

% !TEX root = ../main.tex
\section{Related Work}
\label{sec:related}

\newcommand{\fescbackend}[1]{%
  \rotatebox[origin=c]{90}{\bfseries #1}}
\newcommand{\fescrot}[1]{%
  \rlap{\hspace*{-1.6em}\rotatebox[origin=lB]{45}{#1}}}

\begin{table*}[t]
\centering
\caption{Comparison of private-inference systems.}
\label{tab:private_sequence_landscape}
\scriptsize
\setlength{\tabcolsep}{5pt}
\renewcommand{\arraystretch}{1.03}

\begin{adjustbox}{max width=\textwidth}
\begin{tabular}{@{}c|l|ccc@{\hspace{18pt}}ccc@{\hspace{18pt}}ccccc@{\hspace{18pt}}cccccc@{\hspace{18pt}}cccc@{\hspace{18pt}}cccccc@{}}
\toprule
\rowcolor{tabheader}
\multicolumn{2}{c|}{}
& \multicolumn{3}{c}{\textbf{Properties}}
& \multicolumn{3}{c}{\textbf{Workloads}}
& \multicolumn{5}{c}{\textbf{Sequence Design}}
& \multicolumn{6}{c}{\textbf{Nonlinear Methods}}
& \multicolumn{4}{c}{\textbf{Evaluation}}
& \multicolumn{6}{c}{\textbf{Systems}} \\
\midrule
\rule{0pt}{6ex}\textbf{Family}
& \multicolumn{1}{c|}{\textbf{System}}
& \fescrot{Setting}
& \fescrot{Sec.\ model}
& \fescrot{Artifact}
& \fescrot{Text}
& \fescrot{Vision}
& \fescrot{Long-doc}
& \fescrot{Attention}
& \fescrot{SSM}
& \fescrot{$O(L)$}
& \fescrot{C-Cvt.}
& \fescrot{NoBS}
& \fescrot{Poly.}
& \fescrot{Piece.}
& \fescrot{Iter.}
& \fescrot{LUT/FSS}
& \fescrot{FS/ODE}
& \fescrot{Adapt.}
& \fescrot{Latency}
& \fescrot{Memory}
& \fescrot{Enc.\ acc.}
& \fescrot{Ablations}
& \fescrot{GPU Opts.}
& \fescrot{Multi-GPU}
& \fescrot{Network}
& \fescrot{Breadth}
& \fescrot{$L_{\mathrm{run}}$}
& \fescrot{$L_{\mathrm{acc}}$} \\
\midrule
\multirow{7}{*}{\fescbackend{FHE-only}} & THE-X~\cite{chen2022thex}
& 2P/CA & \rwNo & \rwHalf & \rwYes & \rwNo & \rwNo & \rwYes & \rwNo & \rwNo & -- & \rwYes & \rwNo & \rwNo & \rwNo & \rwNo & \rwNo & \rwYes & \rwNo & \rwNo & \rwYes & \rwYes& \rwNo & \rwNo & -- & \rwNo & NR & NR  \\

 & NEXUS~\cite{zhang2025nexus}
& 2P/NI & \rwYes & \rwYes & \rwYes & \rwNo & \rwNo & \rwYes & \rwNo & \rwNo & -- & \rwNo & \rwYes & \rwYes & \rwYes & \rwNo & \rwNo & \rwNo & \rwHalf & \rwNo & \rwYes & \rwYes& \rwYes & \rwNo & -- & \rwYes & 128t & 128t  \\

 & THOR~\cite{moon2025thor}
& 2P/NI & \rwHalf & \rwYes & \rwYes & \rwNo & \rwNo & \rwYes & \rwNo & \rwNo & -- & \rwNo & \rwYes & \rwNo & \rwYes & \rwNo & \rwNo & \rwNo & \rwYes & \rwNo & \rwYes & \rwHalf& \rwYes & \rwNo & -- & \rwNo & 128t & 128t  \\

 & Powerformer~\cite{park2025powerformer}
& 2P/NI & \rwHalf & \rwYes & \rwYes & \rwNo & \rwNo & \rwYes & \rwNo & \rwNo & -- & \rwNo & \rwYes & \rwNo & \rwNo & \rwNo & \rwNo & \rwYes & \rwYes & \rwNo & \rwYes & \rwYes& \rwYes & \rwNo & -- & \rwNo & 128t & 128t  \\

 & EncryptedLLM~\cite{decastro2025encryptedllm}
& 2P/NI & \rwHalf & \rwYes & \rwYes & \rwNo & \rwNo & \rwYes & \rwNo & \rwNo & -- & \rwNo & \rwYes & \rwYes & \rwYes & \rwNo & \rwNo & \rwNo & \rwHalf & \rwNo & \rwHalf & \rwYes& \rwYes & \rwNo & -- & \rwNo & 128t & NR  \\

 & AEGIS~\cite{gong2026aegis}
& 2P/NI & \rwHalf & \rwNo & \rwYes & \rwNo & \rwNo & \rwYes & \rwNo & \rwNo & -- & \rwNo & \rwYes & \rwNo & \rwNo & \rwNo & \rwNo & \rwNo & \rwYes & \rwYes & \rwYes & \rwYes& \rwYes & \rwYes & -- & \rwNo & 2048t & NR  \\

 & MOAI~\cite{zhang2026moai}
& 2P/NI & \rwNo & \rwYes & \rwYes & \rwNo & \rwNo & \rwYes & \rwNo & \rwNo & -- & \rwNo & \rwYes & \rwNo & \rwYes & \rwNo & \rwNo & \rwNo & \rwYes & \rwNo & \rwYes & \rwHalf & \rwYes & \rwNo & -- & \rwHalf & 128t & 128t  \\
\midrule
\multirow{6}{*}{\fescbackend{MPC-only}} & MPCFormer~\cite{li2023mpcformer}
& 2P+P & \rwHalf & \rwYes & \rwYes & \rwNo & \rwYes & \rwYes & \rwNo & \rwNo & -- & -- & \rwYes & \rwNo & \rwNo & \rwNo & \rwNo & \rwYes & \rwYes & \rwNo & \rwHalf & \rwYes& \rwHalf & \rwNo & \rwHalf & \rwYes & 512t & 512t  \\

 & SIGMA~\cite{gupta2024sigma}
& 2P+P & \rwYes & \rwYes & \rwYes & \rwNo & \rwNo & \rwYes & \rwNo & \rwNo & -- & -- & \rwNo & \rwYes & \rwNo & \rwYes & \rwNo & \rwNo & \rwYes & \rwHalf & \rwYes & \rwYes& \rwYes & \rwNo & \rwYes & \rwYes & 1024t & 128t  \\

 & SecFormer~\cite{luo2024secformer}
& 2P+P & \rwYes & \rwYes & \rwYes & \rwNo & \rwNo & \rwYes & \rwNo & \rwNo & -- & -- & \rwYes & \rwYes & \rwYes & \rwNo & \rwYes & \rwYes & \rwYes & \rwNo & \rwHalf & \rwYes& \rwHalf & \rwNo & \rwHalf & \rwYes & 512t & NR  \\

 & PUMA~\cite{dong2025puma}
& 3P & \rwYes & \rwYes & \rwYes & \rwNo & \rwNo & \rwYes & \rwNo & \rwNo & -- & -- & \rwYes & \rwYes & \rwYes & \rwNo & \rwNo & \rwNo & \rwYes & \rwNo & \rwYes & \rwYes& \rwNo & \rwNo & \rwHalf & \rwYes & 256t & 128t  \\

 & SHAFT~\cite{kei2025shaft}
& 2P+P & \rwYes & \rwYes & \rwYes & \rwYes & \rwNo & \rwYes & \rwNo & \rwNo & -- & -- & \rwHalf & \rwYes & \rwYes & \rwNo & \rwYes & \rwNo & \rwYes & \rwNo & \rwYes & \rwYes& \rwHalf & \rwNo & \rwYes & \rwYes & 128t & 128t  \\

 & MPCMamba~\cite{yu2025mpcmamba}
& 2P+P & \rwHalf & \rwNo & \rwNo & \rwYes & \rwNo & \rwNo & \rwYes & \rwYes & -- & -- & \rwYes & \rwNo & \rwYes & \rwNo & \rwNo & \rwNo & \rwYes & \rwNo & \rwHalf & \rwNo& \rwHalf & \rwHalf & \rwNo & \rwHalf & NR & NR  \\
\midrule
\multirow{7}{*}{\fescbackend{FHE--MPC}} & Iron~\cite{hao2022iron}
& 2P & \rwYes & \rwNo & \rwYes & \rwNo & \rwNo & \rwYes & \rwNo & \rwNo & \rwNo & \rwYes & \rwNo & \rwNo & \rwHalf & \rwYes & \rwNo & \rwNo & \rwYes & \rwNo & \rwYes & \rwYes& \rwNo & \rwNo & \rwHalf & \rwYes & 128t & 128t  \\

 & BOLT~\cite{pang2024bolt}
& 2P & \rwHalf & \rwYes & \rwYes & \rwNo & \rwNo & \rwYes & \rwNo & \rwNo & \rwNo & \rwYes & \rwYes & \rwYes & \rwHalf & \rwNo & \rwNo & \rwYes & \rwYes & \rwNo & \rwHalf & \rwYes& \rwNo & \rwNo & \rwYes & \rwNo & 128t & 128t  \\

 & BumbleBee~\cite{lu2025bumblebee}
& 2P & \rwYes & \rwYes & \rwYes & \rwYes & \rwNo & \rwYes & \rwNo & \rwNo & \rwNo & \rwYes & \rwYes & \rwYes & \rwYes & \rwNo & \rwNo & \rwNo & \rwYes & \rwNo & \rwYes & \rwYes& \rwNo & \rwNo & \rwYes & \rwYes & 128t & 128t  \\

 & BLB~\cite{xu2025blb}
& 2P & \rwYes & \rwYes & \rwYes & \rwNo & \rwNo & \rwYes & \rwNo & \rwNo & \rwHalf & \rwYes & \rwYes & \rwYes & \rwYes & \rwNo & \rwNo & \rwNo & \rwYes & \rwNo & \rwYes & \rwYes& \rwYes & \rwNo & \rwYes & \rwYes & 128t & 128t  \\

 & CipherPrune~\cite{zhang2025cipherprune}
& 2P & \rwYes & \rwYes & \rwYes & \rwNo & \rwNo & \rwYes & \rwNo & \rwNo & \rwNo & \rwYes & \rwYes & \rwYes & \rwYes & \rwNo & \rwNo & \rwYes & \rwYes & \rwNo & \rwHalf & \rwYes& \rwNo & \rwNo & \rwYes & \rwYes & 512t & 128t  \\

 & EncFormer~\cite{zhu2026encformer}
& 2P & \rwYes & \rwYes & \rwYes & \rwNo & \rwNo & \rwYes & \rwNo & \rwNo & \rwYes & \rwYes & \rwYes & \rwYes & \rwNo & \rwNo & \rwNo & \rwYes & \rwYes & \rwHalf & \rwYes & \rwYes& \rwYes & \rwNo & \rwYes & \rwYes & 128t & 128t  \\

\rowcolor{oursrow}
\cellcolor{white} & \system{}
& 2P & \rwYes & \rwYes & \rwYes & \rwNo & \rwYes & \rwNo & \rwYes & \rwYes & \rwYes & \rwYes & \rwYes & \rwYes & \rwYes & \rwNo & \rwNo & \rwYes & \rwYes & \rwYes & \rwYes & \rwYes& \rwYes & \rwYes & \rwYes & \rwYes & \textbf{8192t}$^{\S}$ & \textbf{2048t}  \\
\bottomrule
\end{tabular}
\end{adjustbox}

\vspace{0.1ex}
\begin{minipage}{0.995\textwidth}
\tiny
\rwYes: present; \rwHalf: partial; \rwNo: absent or unreported; ``--'': not applicable; NR: not reported or released.
Sec. model: explicit threat model and security proof; CA: client-assisted; NI: non-interactive; C-Cvt.: complex and compact FHE--MPC conversion;
NoBS: no bootstrapping; LUT/FSS: lookup-table or function-secret-sharing; FS/ODE: Fourier-series or ODE-based.
Multi-GPU: one inference across GPUs.
Network: multiple network profiles; Breadth: multiple models, sizes, or architectures.
$L_{\mathrm{run}}$/$L_{\mathrm{acc}}$: largest paper-reported encrypted-runtime and encrypted-accuracy lengths; $t$, $p$: tokens, patches;
$^{\S}$: $L{=}8192$ with a secure carry refresh;  baseline scope: Appendix~\ref{app:baseline_sources}.
\end{minipage}
\end{table*}

\runinhead{Private SSM Inference.} Structured state-space models such as S4~\cite{gu2022efficiently},
Mamba~\cite{gu2024mamba}, and Mamba-2~\cite{dao2024mamba2}
provide linear-time alternatives to attention for long-context sequence
processing.
MPCMamba~\cite{yu2025mpcmamba} establishes secure Mamba inference for short
image-patch vision sequences under an MPC-only system. \system{} is the first
hybrid FHE--MPC system to realize private selective SSM inference on
long-document text.

\runinhead{Private Transformer Systems.}
Private Transformer systems fall into three backend families.
THE-X~\cite{chen2022thex} is client-assisted, while non-interactive
FHE-only systems including NEXUS~\cite{zhang2025nexus},
THOR~\cite{moon2025thor}, Powerformer~\cite{park2025powerformer},
EncryptedLLM~\cite{decastro2025encryptedllm},
MOAI~\cite{zhang2026moai}, and AEGIS~\cite{gong2026aegis}
evaluate the model homomorphically,
replacing nonlinearities with HE-compatible approximations and managing
depth, packing, key-switching, and, where used, bootstrapping.
MPC-only systems, including MPCFormer~\cite{li2023mpcformer},
PUMA~\cite{dong2025puma}, SecFormer~\cite{luo2024secformer},
SIGMA~\cite{gupta2024sigma}, and
SHAFT~\cite{kei2025shaft}, evaluate the model through secret sharing
together with arithmetic, comparison, and, where used, function-secret-sharing
protocols for attention, normalization, activations,
quantization, and fixed-point arithmetic, avoiding homomorphic depth
but incurring interaction, communication, and preprocessing.
Hybrid systems, including Iron~\cite{hao2022iron},
BOLT~\cite{pang2024bolt}, BumbleBee~\cite{lu2025bumblebee},
BLB~\cite{xu2025blb}, and EncFormer~\cite{zhu2026encformer},
assign wide linear algebra to FHE and nonlinear operators to MPC,
reducing deep homomorphic nonlinear evaluation and communication-heavy
MPC matrix multiplication, but adding conversion and packing boundaries.
These systems establish practical private Transformer inference for short
contexts, but encrypted attention keeps their rotation, key-switch,
communication, or materialization costs quadratic in sequence length.

\runinhead{Long-Context Strategies.}
Recent systems extend private Transformer inference without replacing
attention.
CipherPrune~\cite{zhang2025cipherprune} adaptively prunes encrypted
tokens layer by layer and assigns lower-degree polynomials to less
important retained tokens, reporting GPT-2 runtime scaling through
$L{=}512$.
AEGIS~\cite{gong2026aegis} retains full BERT-base self-attention and
scales FHE execution through $L{=}2048$ on four GPUs using
application- and encryption-aware placement together with
polynomial-operator reordering.
CipherPrune shortens the sequence processed by later layers, whereas
AEGIS partitions and overlaps the dense-attention workload.
Neither replaces attention with a full-token linear-work sequence
operator.
\system{} instead preserves every input token and replaces token-pair
attention with an associative selective recurrence, as positioned
against each family in Table~\ref{tab:private_sequence_landscape}.

% !TEX root = ../main.tex
\section{System Overview}
\label{sec:overview}

\begin{figure}[t]
\centering
\includegraphics[width=\columnwidth,keepaspectratio,page=5,trim=508pt 115pt 520pt 87.04pt,clip]{figures/FESC_diagrams.pdf}
\caption{\system{} per-block pipeline alternating CKKS linear kernels and MPC nonlinear computations.}
\label{fig:pipeline}
\end{figure}

The \system{} block follows the alternating FHE--MPC pipeline in Figure~\ref{fig:pipeline}, with encrypted kernels developed in \S\ref{sec:ks-he}, nonlinear protocols in \S\ref{sec:mpc}, and the shared SSM interface in Appendix~\ref{app:cross-ssm-family}.
It runs HEInProj and HEConv in CKKS, uses MoSiLU, MoSoft, and MMExp in MPC to build the compact HEScan packet, evaluates the selective recurrence in CKKS, then applies the post-scan skip, gate, and NeRMS path before HEOutProj emits the encrypted block output.
HEOutProj feeds the next \system{} block as a CKKS ciphertext, so the client decrypts only the final prediction.
The packet contains the private token input, decay, and $\mathbf B,\mathbf C$ factors, while skip and output-gate streams remain in MPC until the contracted scan output returns.
This split keeps wide linear maps and recurrent processing in CKKS, places nonlinear factor construction in MPC, and makes each backend crossing proportional to compact packets rather than expanded recurrent state.
Each crossing uses $\Pi_{\mathrm{C2M}}^{\mathbb{C}}$ and $\Pi_{\mathrm{M2C}}^{\mathbb{C}}$ from \S\ref{sec:complex-conversion}, and the full protocol details appear in Appendices~\ref{app:resident-hescan-protocol} and~\ref{app:fesc-block-protocol}.

% !TEX root = ../main.tex
\section{Encrypted Kernel Design}
\label{sec:ks-he}

This section first defines the CKKS primitives and complex projection and convolution kernels, then develops HEScan as the encrypted recurrence primitive that makes the scan contract executable under FHE.

% ============================================================================
\subsection{CKKS Primitives and Slot Maps}
\label{sec:ks-he:packing}

We write $\langle\mathbf{v}\rangle$ for a CKKS ciphertext with $n_{\mathrm{ckks}}$ slots, and $[[\mathbf{v}]]$ for additive MPC shares.
Unbracketed CKKS operands are public plaintext weights, masks, or constants. Table~\ref{tab:he_ops} lists the CKKS primitives used below. The cyclic left shift by $r$ slots is
\[
  \rho(\mathbf{v};r)_j
  =
  v_{(j+r)\bmod n_{\mathrm{ckks}}}.
\]
The symbol $\odot$ denotes elementwise multiplication. On CKKS operands, it is implemented by \textsf{ptmul} with one ciphertext operand and by \textsf{ctmul} with two.

% Table~\ref{tab:notation} disambiguates overloaded symbols such as $N$, $n_{\mathrm{ckks}}$, and $\Delta$.

\begin{table}[t]
\centering
\caption{CKKS primitives used by the encrypted kernels.}
\label{tab:he_ops}
\scriptsize
\setlength{\tabcolsep}{3.5pt}
\renewcommand{\arraystretch}{1.15}
\begin{tabular}{@{}l|ccccc@{}}
\toprule
\rowcolor{tabheader}
 & \textsf{add} & \textsf{ptmul} & \textsf{ctmul} & \textsf{rot} & \textsf{conj} \\
\midrule
in  & $\langle\mathbf{v}\rangle,\, \mathbf{u}\ \text{or}\ \langle\mathbf{u}\rangle$ & $\langle\mathbf{v}\rangle,\, \mathbf{w}$ & $\langle\mathbf{v}\rangle,\, \langle\mathbf{w}\rangle$ & $\langle\mathbf{v}\rangle,\, r$ & $\langle\mathbf{v}\rangle$ \\
out & $\langle\mathbf{v}{+}\mathbf{u}\rangle$ & $\langle\mathbf{v}{\odot}\mathbf{w}\rangle$ & $\langle\mathbf{v}{\odot}\mathbf{w}\rangle$ & $\langle\rho(\mathbf{v};r)\rangle$ & $\langle\overline{\mathbf{v}}\rangle$ \\
shorthand & $+$ & $\odot$ & $\odot$ & $\rho(\cdot;r)$ & $\overline{(\cdot)}$ \\
\bottomrule
\end{tabular}
\end{table}

\system{} uses public broadcast and reduction slot-map families.
The map $\mathrm{br}$ expands a compact factor into the state packing, $\mathrm{br}^{(\kappa)}$ restricts it to chunk $\kappa$, and $\Sigma_d$ sums an axis of length $d$ while aligning chunk-local partial sums to common output slots,
\[
  \mathrm{br}^{(\kappa)}(\mathbf{a}\odot\mathbf{a}')
  =
  \mathrm{br}^{(\kappa)}(\mathbf{a})\odot\mathrm{br}^{(\kappa)}(\mathbf{a}'),
  \quad
  \Sigma_d(\mathbf{u})[e]
  =
  \sum_{j=0}^{d-1}u[e,j].
\]
On ciphertext inputs, $\rho$, $\mathrm{br}$, $\mathrm{br}^{(\kappa)}$, and $\Sigma_d$ denote their homomorphic realizations, with exact packings and reduction accounting in Appendices~\ref{app:hescan-packing} and~\ref{app:complex-reduction-accounting}.

% ============================================================================
\subsection{Plaintext--Ciphertext Projections}
\label{sec:ks-he:ctpt}

\system{} uses plaintext-weight linear projections before and after the selective recurrence, evaluated with BSGS plaintext--ciphertext matrix multiplication~\cite{pang2024bolt,park2025powerformer} under segment-column packing and complex pair compression~\cite{zhu2026encformer}.
The HEInProj kernel computes packed slices of $\langle\mathbf{z}\rangle$, $\langle\mathbf{xBC}\rangle$, and $\langle\mathbf{dt}\rangle$, the HEOutProj kernel applies $\mathbf{W}_{\mathrm{out}}$ after the MPC post-SSM stage, and all weight diagonals are encoded offline.
Appendix~\ref{app:he-kernels:proj} expands this projection kernel into its slot packing, diagonal encoding, and online cost.

% ============================================================================
\subsection{Causal Depthwise Convolution}
\label{sec:ks-he:conv}

The causal depthwise convolution sits between the HEInProj kernel and the nonlinear factor builder.
At the plaintext level, HEInProj produces $\mathbf{q}_k=\mathbf{xBC}_k\in\mathbb{R}^{HP+2Gd_s}$ alongside $\mathbf{z}_k$ and $\mathbf{dt}_k$.
For channel $c$, the HEConv kernel evaluates
\begin{equation}
\begin{aligned}
  \eta_k[c]
  &=
  b^{\mathrm{conv}}_c
  +
  \sum_{r=0}^{d_{\mathrm{conv}}-1}
    w^{\mathrm{conv}}_{c,r}\,q_{k-r}[c],\\
  q_t[c]
  &=0,
  \qquad t<0.
\end{aligned}
\label{eq:conv1d_depthwise}
\end{equation}
With the HEInProj packing, the HEConv kernel realizes Eq.~\ref{eq:conv1d_depthwise} in CKKS as
\[
  \langle\boldsymbol{\eta}\rangle
  =
  \mathbf{b}^{\mathrm{conv}}
  +
  \sum_{r=0}^{d_{\mathrm{conv}}-1}
    \mathbf{M}^{\mathrm{conv}}_{r}
    \odot
    \rho\!\left(\langle\mathbf{q}\rangle;\delta_r\right).
\]
The public plaintext vector $\mathbf{M}^{\mathrm{conv}}_r$ fuses tap weights with the causal boundary mask, and $\delta_r$ is a public rotation offset.
At the CKKS--MPC boundary, \system{} converts $\langle\boldsymbol{\eta}\rangle$, $\langle\mathbf{z}\rangle$, and $\langle\mathbf{dt}\rangle$ in one boundary crossing, after which the nonlinear factor builder forms the compact private factors from the resulting shares.

% ============================================================================
\subsection{Factorized HEScan Kernel}
\label{sec:ks-he:scan}

HEInProj and HEConv produce the encrypted pre-activations from which the MPC factor builder derives one packet per token. For the deployed Mamba-2 block and token $k$, the packet contains
\[
\begin{aligned}
  \mathbf{x}_k
  &\in
  \mathbb{R}^{H\times P},
  &
  \mathbf{a}_k
  &\in
  \mathbb{R}^{H},
  &
  \mathbf{B}_k,\mathbf{C}_k
  &\in
  \mathbb{R}^{G\times d_s}.
\end{aligned}
\]
With the grouped outer product $\otimes_g$ defined in \S\ref{sec:mamba_model}, HEScan realizes the scan contract
\begin{align}
  \mathbf{h}_k
  &=
  \mathrm{br}(\mathbf{a}_k)\odot\mathbf{h}_{k-1}
  +
  \mathbf{x}_k\otimes_g\mathbf{B}_k,
  \qquad
  \mathbf{h}_0=\mathbf{0},
  \label{eq:fesc_recurrence}\\
  \mathbf{m}_k
  &=
  \Sigma_{d_s}\!\left(\mathbf{h}_k\odot\mathrm{br}(\mathbf{C}_k)\right).
  \label{eq:fesc_contract}
\end{align}
Equations~\ref{eq:fesc_recurrence} and~\ref{eq:fesc_contract} define the logical recurrence and output contraction.
The mechanisms below realize this contract without representing $\{\mathbf{h}_k\}_{k=1}^{L}$ as either a hybrid-boundary payload or a sequence-resident ciphertext tensor.
Figure~\ref{fig:hescan-factorized-dataflow} traces one factor packet through the encrypted prefix scan and output contraction on a toy packing, using compact labels without boldface or share brackets.

\begin{figure*}[t]
\centering
\includegraphics[width=\textwidth,keepaspectratio,page=7,trim=69pt 177pt 157pt 137pt,clip]{figures/FESC_diagrams.pdf}
\caption{Factorized HEScan dataflow on a toy packing with $L{=}4$, $H{=}2$, $P{=}4$, $G{=}1$, $d_s{=}2$, and $K_s{=}2$, so $E{=}HP{=}8$ expanded channels are split into two state chunks. The numbered panels follow one factor packet from broadcast through the blocked prefix scan to HEScan contraction and MPC gating, and the inset gives the affine composition at each scan node.}
\label{fig:hescan-factorized-dataflow}
\end{figure*}

\phantomsection\label{sec:ks-he:factorized-boundary}
\runinhead{Factorized Boundary.} A direct implementation expands the compact decay and input factors into state-shaped tensors,
\[
  \mathrm{br}(\mathbf{a}_k),
  \qquad
  \mathbf{x}_k\otimes_g\mathbf{B}_k,
  \qquad
  \mathbf{h}_k
  \in
  \mathbb{R}^{H\times P\times d_s}.
\]
Transferring these representations across the hybrid boundary over the sequence incurs $\Theta(LHPd_s)$ logical real values.
\system{} instead converts only the factorized packet defined above,
\[
  \bigl([[\mathbf{x}_k]],[[\mathbf{a}_k]],[[\mathbf{B}_k]],[[\mathbf{C}_k]]\bigr),
  \quad
  x_k[h,p]=\Delta_k[h]x_k^{\mathrm{raw}}[h,p].
\]
The skip and gate streams $D_{\mathrm{skip}}\mathbf{x}^{\mathrm{raw}}_k$ and $\mathbf{g}^{\mathrm{out}}_k$, shown as $Dx$ and $\gamma$ in Figure~\ref{fig:hescan-factorized-dataflow}, remain in MPC, while HEScan returns only the contracted value $\mathbf{m}_k\in\mathbb{R}^{H\times P}$.
Consequently, the logical inbound and outbound boundary payloads over $L$ tokens are
\[
  N_{\mathrm{in}}
  =
  L(HP+H+2Gd_s),
  \qquad
  N_{\mathrm{out}}
  =
  LHP,
\]
rather than quantities proportional to $LHPd_s$.
These are logical-real counts, and complex pairing carries two independent real entries in each CKKS slot.
Table~\ref{tab:ablation_boundary} quantifies how this factorized contract removes dense-state conversion from the pipeline.

\phantomsection\label{sec:ks-he:hescan-packing}
\runinhead{State-Chunk Packing.} Let $E=HP$ and choose an active state capacity $s_{\mathrm{state}}\le n_{\mathrm{ckks}}$ divisible by $d_s$, so that each ciphertext stores $c_{\mathrm{state}}=s_{\mathrm{state}}/d_s$ expanded channels and
\[
  K_s
  =
  \left\lceil
    \frac{Ed_s}{s_{\mathrm{state}}}
  \right\rceil
\]
chunks cover the recurrent state.
Indexing an expanded channel by $e=hP+p$ and a state coordinate by $(e,i)$ for $0\le i<d_s$, chunk $\kappa$ owns the channel set $\mathcal{E}_{\kappa}$ and stores $(e,i)$ at slot $\lambda_{\kappa}(e,i)$, where
\begin{equation}
\begin{aligned}
  \mathcal{E}_{\kappa}
  &=
  \left\{
    e\in\mathbb{Z}
    \,\middle|\,
    \substack{
      0\le e<E\\
      \kappa c_{\mathrm{state}}\le e<(\kappa+1)c_{\mathrm{state}}
    }
  \right\},
  \\
  \lambda_{\kappa}(e,i)
  &=
  (e-\kappa c_{\mathrm{state}})d_s+i.
\end{aligned}
\label{eq:hescan-chunk-map}
\end{equation}
Appendix~\ref{app:hescan-packing} gives the exact source indices with the padded broadcast and reduction maps.

\phantomsection\label{sec:ks-he:affine-prefix}
\runinhead{Affine Prefix Composition.} For token $k$ and chunk $\kappa$, the induced affine update is represented as $\mathcal{T}_k^{(\kappa)}=(\langle\mathbf{a}_k\rangle,\allowbreak \langle\mathbf{s}_k^{(\kappa)}\rangle)$.
Its ideal plaintext semantics are the chunk-local affine map
\[
  F_k^{(\kappa)}(\mathbf{z})
  =
  \mathrm{br}^{(\kappa)}(\mathbf{a}_k)
  \odot
  \mathbf{z}
  +
  \mathbf{s}_k^{(\kappa)}.
\]
When map $R$ follows map $L$, their composition is
\begin{align}
  \mathcal{T}_R^{(\kappa)}
  \circ
  \mathcal{T}_L^{(\kappa)}
  =
  \Bigl(
  &
    \langle\mathbf{a}_R\rangle
    \odot
    \langle\mathbf{a}_L\rangle,
  \notag\\
  &
    \mathrm{br}^{(\kappa)}
    \!\left(\langle\mathbf{a}_R\rangle\right)
    \odot
    \langle\mathbf{s}_L^{(\kappa)}\rangle
    +
    \langle\mathbf{s}_R^{(\kappa)}\rangle
  \Bigr).
  \label{eq:compact_affine_compose}
\end{align}
Because $\mathrm{br}^{(\kappa)}$ preserves elementwise products, the composed head-wise decay remains in compact packing, while the additive term remains local to chunk $\kappa$.
An encrypted Brent--Kung network evaluates $O(L)$ such composition nodes with $O(\log L)$ composition depth, and applying each resulting prefix map to the zero initial state yields $\langle\mathbf{h}_k^{(\kappa)}\rangle$.

\phantomsection\label{sec:ks-he:construct-contract}
\runinhead{On-Demand Construction and Contraction.} HEScan constructs each additive update only while processing its chunk and contracts each prefix before releasing the chunk workspace,
\begin{align}
  \langle\mathbf{s}_k^{(\kappa)}\rangle
  &=
  \mathrm{br}^{(\kappa)}
  \!\left(\langle\mathbf{x}_k\rangle\right)
  \odot
  \mathrm{br}^{(\kappa)}
  \!\left(\langle\mathbf{B}_k\rangle\right),
  \notag\\
  \langle\mathbf{m}_k\rangle
  &=
  \sum_{\kappa=0}^{K_s-1}
  \Sigma_{d_s}
  \!\left(
    \langle\mathbf{h}_k^{(\kappa)}\rangle
    \odot
    \mathrm{br}^{(\kappa)}
    \!\left(\langle\mathbf{C}_k\rangle\right)
  \right).
  \label{eq:hescan-contract-output}
\end{align}
Because the sets $\{\mathcal{E}_{\kappa}\}_{\kappa=0}^{K_s-1}$ partition the expanded channels, and hence the recurrent coordinates, while every $\Sigma_{d_s}$ reduction targets the same output slots, summing the chunk contributions exactly recovers the full contraction in Eq.~\ref{eq:fesc_contract}.
The output therefore remains in its contracted packing throughout the chunk sweep.
For the state-axis reduction, \system{} uses complex reduction, placing two chunks in the real and imaginary channels of one rotate--add tree before one conjugation separates their partial sums.
Relative to reducing the two chunks independently, this pairing halves the state-axis rotation count.
Appendix~\ref{app:complex-reduction-accounting} gives the unpacking identities and exact rotation and conjugation counts.

% ============================================================================
\subsection{Resident Block Schedule}
\label{sec:ks-he:resident-schedule}

For $L>B_{\mathrm{blk}}$, HEScan splits the sequence into $K_{\mathrm{blk}}=\lceil L/B_{\mathrm{blk}}\rceil$ blocks and evaluates each state chunk with a two-pass resident schedule, so blocks partition tokens while chunks partition recurrent coordinates.
With $\mathcal{T}_{j,t}^{(\kappa)}$ denoting the compact affine map for global token $k=jB_{\mathrm{blk}}+t$ in block $j$, the schedule uses
\[
\begin{aligned}
  B_j
  &=
  \min(B_{\mathrm{blk}},L-jB_{\mathrm{blk}}),
  \\
  \mathcal{I}^{(\kappa)}
  &=
  \left(\langle\mathbf{1}_H\rangle,\langle\mathbf{0}^{(\kappa)}\rangle\right),
  \qquad
  \mathcal{G}_{0}^{(\kappa)}
  =
  \mathcal{I}^{(\kappa)},
\end{aligned}
\]
where $B_j$ counts the valid positions in block $j$ and the identity map $\mathcal{I}^{(\kappa)}$ pads the final partial block to the prefix-network capacity.
The local prefixes, block summaries, incoming carries, and corrected prefixes are
\begin{equation}
\begin{alignedat}{2}
  \mathcal{P}_{j,t}^{(\kappa)}
  &=
  \mathcal{T}_{j,t}^{(\kappa)}\circ\cdots\circ\mathcal{T}_{j,0}^{(\kappa)},
  \qquad&
  \mathcal{S}_{j}^{(\kappa)}
  &=
  \mathcal{P}_{j,B_j-1}^{(\kappa)},
  \\
  \mathcal{G}_{j}^{(\kappa)}
  &=
  \mathcal{S}_{j-1}^{(\kappa)}\circ\cdots\circ\mathcal{S}_{0}^{(\kappa)},
  \quad j\ge1,
  \qquad&
  \widehat{\mathcal{P}}_{j,t}^{(\kappa)}
  &=
  \mathcal{P}_{j,t}^{(\kappa)}\circ\mathcal{G}_{j}^{(\kappa)}.
\end{alignedat}
\label{eq:resident-block-schedule}
\end{equation}
The first pass computes block summaries, an exclusive scan over them gives incoming carries, and the second pass replays one block at a time to rebuild the corrected prefixes.
After carry injection, each rebuilt prefix is applied and contracted into the output, allowing the chunk workspace to be reused before the next state chunk.
The replay therefore increases total work only by a constant factor, as illustrated in Figure~\ref{fig:hescan-blocked-carry}.

\begin{figure}[t]
\centering
\includegraphics[width=0.85\linewidth,keepaspectratio,page=8,trim=508pt 243pt 556pt 264pt,clip]{figures/FESC_diagrams.pdf}
\caption{Resident HEScan kernel schedule for one state chunk. Only block summaries persist between passes, and corrected prefixes are rebuilt on demand through carry injection.}
\label{fig:hescan-blocked-carry}
\end{figure}

% ============================================================================
\subsection{Correctness and Complexity}
\label{sec:ks-he:correctness-cost}

\begin{theorem}[Correctness of the HEScan Kernel]
\label{thm:hescan-correctness}
For private factor packets and broadcast maps satisfying the product-preservation property above, HEScan with $\mathbf{h}_0=\mathbf{0}$ has ideal plaintext semantics equal to the packet-induced recurrence and contraction in Eqs.~\ref{eq:fesc_recurrence} and~\ref{eq:fesc_contract}.
The decoded deployed outputs differ from these ideal values only because of CKKS arithmetic error and CKKS--MPC conversion quantization.
\end{theorem}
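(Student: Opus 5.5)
The argument has two parts. The first is an exact identity in the ideal plaintext semantics, where every ciphertext $\langle\cdot\rangle$ is replaced by the vector it encrypts. The second attributes every deviation of the deployed outputs to CKKS arithmetic or to conversion. For the identity, fix a chunk $\kappa$ and show by induction on $k$ that the chunk-local map $F_k^{(\kappa)}$ agrees with the restriction of Eq.~\ref{eq:fesc_recurrence} to the coordinates $(e,i)$ with $e\in\mathcal{E}_\kappa$. Under the slot map $\lambda_\kappa$ of Eq.~\ref{eq:hescan-chunk-map}, slot $(e,i)$ with $e=hP+p$ receives $a_k[h]$ from $\mathrm{br}^{(\kappa)}(\mathbf a_k)$. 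It receives $x_k[h,p]B_k[g_h,i]$ from $\mathrm{br}^{(\kappa)}(\mathbf x_k)\odot\mathrm{br}^{(\kappa)}(\mathbf B_k)$, which is exactly $(\mathbf x_k\otimes_g\mathbf B_k)^{(\kappa)}$. Consequently $\mathbf h_k^{(\kappa)}=F_k^{(\kappa)}(\mathbf h_{k-1}^{(\kappa)})$.

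Next I show that the compact representation is a homomorphism for composition. Writing $\Phi(\mathbf a,\mathbf s)(\mathbf z)=\mathrm{br}^{(\kappa)}(\mathbf a)\odot\mathbf z+\mathbf s$, Eq.~\ref{eq:compact_affine_compose} gives
\[
\Phi(\mathcal T_R\circ\mathcal T_L)=\Phi(\mathcal T_R)\circ\Phi(\mathcal T_L).
\]
This follows by expanding and using the product-preservation property $\mathrm{br}^{(\kappa)}(\mathbf a_R\odot\mathbf a_L)=\mathrm{br}^{(\kappa)}(\mathbf a_R)\odot\mathrm{br}^{(\kappa)}(\mathbf a_L)$. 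The composition on compact pairs is therefore associative, because it mirrors function composition, and the identity $\mathcal I^{(\kappa)}$ maps to the identity function because $\mathrm{br}^{(\kappa)}(\mathbf 1_H)=\mathbf 1$. Associativity is exactly what the Brent--Kung network needs: any correct prefix network outputs $\mathcal T_k\circ\cdots\circ\mathcal T_1$ for each $k$. Applying $\Phi$ of that prefix to $\mathbf 0$ then gives $F_k\circ\cdots\circ F_1(\mathbf 0)=\mathbf h_k^{(\kappa)}$, by the induction above with $\mathbf h_0=\mathbf 0$.

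For the resident schedule of Eq.~\ref{eq:resident-block-schedule}, I prove by induction on the block index $j$ that $\mathcal G_j^{(\kappa)}$ is the composite of all token maps before block $j$. From this, $\widehat{\mathcal P}_{j,t}^{(\kappa)}$ is the global prefix ending at token $jB_{\mathrm{blk}}+t$. Padding the final block with identities leaves all valid prefixes unchanged and does not affect $\mathcal S_j$, since $\mathcal S_j$ is taken at position $B_j-1$.

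The contraction step uses that the sets $\mathcal E_\kappa$ partition $\{0,\dots,E-1\}$ and that each $\Sigma_{d_s}$ writes to common output slots. Hence
\[
\sum_\kappa\Sigma_{d_s}\bigl(\mathbf h_k^{(\kappa)}\odot\mathrm{br}^{(\kappa)}(\mathbf C_k)\bigr)[e]=\sum_i h_k[e,i]\,C_k[g_h,i],
\]
which is Eq.~\ref{eq:fesc_contract}. The complex-pair reduction needs its own exact identity: if the real channel carries $u$ and the imaginary channel carries $v$, the tree yields $\Sigma u+\mathrm{i}\,\Sigma v$, and the conjugation recovers $\Sigma u=\tfrac12(w+\overline w)$ and $\Sigma v=\tfrac1{2\mathrm i}(w-\overline w)$ from the tree output $w$. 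Both the reduction and the unpacking are linear, so this identity is exact in the ideal semantics.

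The main obstacle is making precise that every step above is an exact identity on the decoded slot vectors. The error claim then reduces to a bookkeeping argument. All CKKS operations used (add, ptmul, ctmul, rot, conj, and rescale) realize their ideal slot semantics up to additive encoding, rounding, and rescale noise. The only other departures from the ideal computation are the fixed-point quantizations introduced by $\Pi^{\mathbb C}_{\mathrm{C2M}}$ and $\Pi^{\mathbb C}_{\mathrm{M2C}}$. Because the ideal circuit is a fixed composition of these operations, a telescoping argument replaces one operation at a time by its ideal version and shows that the deployed output differs from the ideal one only through those error terms. I do not give quantitative bounds here; they depend on the decay magnitudes $|a_k|\le1$ and would be deferred to the approximation analysis.
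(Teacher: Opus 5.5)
Your proposal is correct and follows essentially the same route as the paper's proof. Product preservation makes the compact composition of Eq.~\ref{eq:compact_affine_compose} exact function composition; the Brent--Kung network and the summary/carry schedule of Eq.~\ref{eq:resident-block-schedule} then yield the global prefixes; and the chunk contractions summed over the partition $\{\mathcal{E}_\kappa\}$ recover Eq.~\ref{eq:fesc_contract}, leaving only CKKS and conversion error.

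One inference should be tightened: ``associative because it mirrors function composition'' does not follow as stated, because $\Phi$ is not injective (a chunk's broadcast sees only the heads it contains). Either check associativity directly by expanding with product preservation, or note that the homomorphism property alone already gives $\Phi$ of any parenthesized prefix equal to $F_k\circ\cdots\circ F_1$.
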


Appendix~\ref{app:hescan-correctness} proves the theorem from Eqs.~\ref{eq:compact_affine_compose},~\ref{eq:resident-block-schedule}, and~\ref{eq:hescan-contract-output}, while Appendix~\ref{app:hescan-cost} gives the primitive-count ledger.
For fixed model dimensions and public slot maps, HEScan satisfies
\[
\begin{aligned}
  N_{\mathsf{ctmul}},N_{\mathsf{ks}}
  &=
  O(LK_s),
  \\
  D_{\mathsf{mult}}
  &=
  O\!\left(
    \log B_{\mathrm{blk}}
    +
    \log K_{\mathrm{blk}}
  \right)
  =
  O(\log L),
  \\
  N_{\mathsf{live}}
  &=
  O\!\left(
    B_{\mathrm{blk}}
    +
    K_{\mathrm{blk}}
    +
    K_y
    +
    K_{\mathrm{fac}}
  \right).
\end{aligned}
\]
The live set comprises one block of prefixes, one summary per block, the contracted-output inventory $K_y=\lceil LHP/s_y\rceil$ for active output capacity $s_y$, and the resident compact-factor inventory $K_{\mathrm{fac}}$.
Because state chunks are processed serially, $K_s$ scales total work but does not enter the peak-residency bound.
Appendix~\ref{app:longseq} audits the level budget.

% !TEX root = ../main.tex
\section{MPC Nonlinear Protocols}
\label{sec:mpc}

This section presents the MPC nonlinear protocols used by the
\system{} block, where MoSiLU evaluates the convolved input and output-gate
branches, MoSoft produces positive timesteps, MMExp derives selective decays,
and NeRMS computes inverse-RMS factors.
Across SiLU, softplus, exponential, and RMSNorm, these protocols use fixed
data-independent schedules and replace generic elementary-function circuits
with structure-specific polynomial, comparison, mux, and Newton components.

% ============================================================================
\subsection{MPC Primitives}
\label{sec:mpc-primitives}

All MPC real values are additively shared over $\mathbb{Z}_{2^\ell}$, with
$r$ encoded as $\lfloor r2^F\rceil$ and $[[\mathbf{x}]]$ denoting shares at
scale $2^F$.
For public $c$, $c\odot_{\mathsf{pub}}[[x]]$ is local.
Additions, public-scalar multiplications, and reductions over shares also require no
online communication.
Table~\ref{tab:mpc_ops} therefore lists only the communicating primitives.
Protocol schedules count one communicating layer per primitive invocation.
Comparisons return unscaled shared bits $[[\mathbf{b}]]_{\mathsf{bit}}$ under
the centered signed interpretation of ring elements.
Correctness assumes that no centered intermediate wraps, while security follows
by sequential composition of $\Pi_\times$, $\Pi_{\mathrm{cmp}}$, and
$\Pi_{\mathrm{mux}}$ in the semi-honest model.
Appendix~\ref{app:mpc-nonlinear-details} gives the preprocessing, range audit,
deployed counts, and approximation coefficients.

\begin{table}[t]
\centering
\caption{Online communicating MPC primitives over $\mathbb{Z}_{2^\ell}$.}
\label{tab:mpc_ops}
\scriptsize
\setlength{\tabcolsep}{4pt}
\renewcommand{\arraystretch}{1.2}
\begin{tabular}{@{}l|l|cc@{}}
\toprule
\rowcolor{tabheader}
Primitive & Functionality & Comm.\ layers & Online comm. \\
\midrule
$\Pi_{\times}$~\cite{beaver1992efficient} &
$([[a]],[[b]])\!\to\![[\lfloor ab/2^F\rceil]]$ &
$1$ & $2\ell$ \\

$\Pi_{\mathrm{cmp}}$~\cite{rathee2021sirnn} &
$([[x]],\tau)\!\to\![[\mathbf{1}[x<\tau]]]_{\mathsf{bit}}$ &
$1$ & $\lambda\ell$ \\

$\Pi_{\mathrm{mux}}$~\cite{rathee2021sirnn} &
$([[u]],[[b]]_{\mathsf{bit}})\!\to\![[bu]]$ &
$1$ & $2\ell$ \\
\bottomrule
\end{tabular}
\end{table}

% ============================================================================
\subsection{Even-Residual SiLU and Softplus}
\label{sec:mpc-mosilu-mosoft}
\label{sec:mpc-nonlinear}

MoSiLU and MoSoft exploit the same symmetry in SiLU and softplus.
Both
$f_{\mathsf{silu}}(x)=x\sigma(x)$ and
$f_{\mathsf{soft}}(x)=\log(1+e^x)$ satisfy
\[
  f(-x)=f(x)-x.
\]
Hence the residual is even and factors through $x^2$,
\[
  g(x)
  =
  f(x)-\frac{x}{2}
  =
  h(x^2).
\]
For an even degree $d=2r$, \system{} fits $h$ on $[0,\tau^2]$ by minimax
approximation and evaluates
\[
  \widehat f(x)
  =
  \frac{x}{2}
  +
  \sum_{i=0}^{r} a_i u^i,
  \qquad
  u=x^2.
\]
The two functions therefore use the same secure graph and differ only in their
public coefficients.
Two parallel comparisons select the left-tail approximation $0$, the polynomial
inside $[-\tau,\tau]$, or the right-tail approximation $x$.
The shared even-residual form gives MoSiLU and MoSoft one fixed MPC execution
path, reducing each nonlinearity to $r$ secure products and constant
comparison and mux overhead instead of generic elementary-function chains.

\begin{algorithm}[t]
\caption{Even-residual protocol for MoSiLU and MoSoft}
\label{alg:mosilu-mosoft}
\label{alg:silu}
\begin{algorithmic}[1]
\Require Shares $[[x]]$, threshold $\tau$, coefficients $a_0,\ldots,a_r$, $r\ge1$
\Ensure Shares $[[\widehat f(x)]]$ approximating $f(x)$,
        for $f\in\{\operatorname{SiLU},\operatorname{softplus}\}$
\State $[[u]]\gets\Pi_\times([[x]],[[x]])$
\If{$r=1$}
  \State $[[p]]\gets
         \frac12\odot_{\mathsf{pub}}[[x]]
         +a_0
         +a_1\odot_{\mathsf{pub}}[[u]]$
\Else
  \State $[[q]]\gets a_r\odot_{\mathsf{pub}}[[u]]+a_{r-1}$
  \For{$i=r{-}2,\ldots,1$}
    \State $[[q]]\gets\Pi_\times([[q]],[[u]])+a_i$
  \EndFor
  \State $[[p]]\gets
         \frac12\odot_{\mathsf{pub}}[[x]]
         +a_0
         +\Pi_\times([[q]],[[u]])$
\EndIf
\State $[[b_-]]_{\mathsf{bit}}\gets \Pi_{\mathrm{cmp}}([[x]],-\tau)$
\State $[[b_{<}]]_{\mathsf{bit}}\gets \Pi_{\mathrm{cmp}}([[x]],\tau)$
\State $[[b_+]]_{\mathsf{bit}}\gets 1-[[b_{<}]]_{\mathsf{bit}}$
\State $[[b_{\mathrm{in}}]]_{\mathsf{bit}}
       \gets [[b_{<}]]_{\mathsf{bit}}-[[b_-]]_{\mathsf{bit}}$
\State $[[y]]\gets \Pi_{\mathrm{mux}}([[p]],[[b_{\mathrm{in}}]]_{\mathsf{bit}})
       +\Pi_{\mathrm{mux}}([[x]],[[b_+]]_{\mathsf{bit}})$
\State \Return $[[y]]$
\end{algorithmic}
\end{algorithm}

% ============================================================================
\subsection{Split Inverse-RMS Normalization}
\label{sec:mpc-nerms}

NeRMS splits RMSNorm at the elementwise square, forming
$\langle u_i\rangle=\langle x_i\rangle\odot\langle x_i\rangle$ in the
surrounding CKKS path and reducing the resulting shares in MPC to
\[
  [[v]]
  =
  \frac{1}{D}
  \sum_{i=1}^{D}[[u_i]]
  +\epsilon.
\]
After this reduction, the remaining nonlinear step is the inverse root, where
parallel comparisons against $K-1$ public bucket boundaries select an
initializer together with its precomputed square from public tables.
Both table values feed the first Newton update, avoiding an additional secure
square,
\[
  y
  \gets
  y(1.5-0.5vy^2).
\]
Algorithm~\ref{alg:nerms} gives the resulting protocol.

\begin{algorithm}[t]
\caption{Secure protocol for NeRMS}
\label{alg:nerms}
\small
\begin{algorithmic}[1]
\Require Shares $[[u_i]]$ of CKKS-formed squares $u_i=x_i^2$, $D$, $\epsilon$,
         ordered boundaries $b_1<\cdots<b_{K-1}$,
         initializers $\{y_j^{(0)}\}_{j=0}^{K-1}$,
         steps $t_{\mathrm{NR}}$
\Ensure Shares $[[\widehat s]]$ approximating
        $1/\sqrt{D^{-1}\sum_i u_i+\epsilon}$
\State $[[v]]\gets
       \frac1D\odot_{\mathsf{pub}}\sum_{i=1}^{D}[[u_i]]+\epsilon$
\State $[[y]]\gets y_0^{(0)}$
\State $[[y^2]]\gets (y_0^{(0)})^2$
\For{$j=1,\ldots,K-1$ \textbf{in parallel}}
  \State $[[e_j]]_{\mathsf{bit}}
         \gets 1-\Pi_{\mathrm{cmp}}([[v]],b_j)$
\EndFor
\For{$j=1,\ldots,K-1$}
  \State $[[y]]\gets [[y]]
         +\bigl(y_j^{(0)}-y_{j-1}^{(0)}\bigr)
          \odot_{\mathsf{pub}}[[e_j]]_{\mathsf{bit}}$
  \State $[[y^2]]\gets [[y^2]]
         +\bigl((y_j^{(0)})^2-(y_{j-1}^{(0)})^2\bigr)
          \odot_{\mathsf{pub}}[[e_j]]_{\mathsf{bit}}$
\EndFor
\For{$t=1,\ldots,t_{\mathrm{NR}}$}
  \State $[[y]]\gets
         \Pi_\times\!\left(
           [[y]],
           \frac32-\frac12\odot_{\mathsf{pub}}
           \Pi_\times([[v]],[[y^2]])
         \right)$
  \If{$t<t_{\mathrm{NR}}$}
    \State $[[y^2]] \gets \Pi_\times([[y]],[[y]])$
  \EndIf
\EndFor
\State \Return $[[y]]$
\end{algorithmic}
\end{algorithm}

The bucket indicators stay secret-shared, while table selection uses local
linear arithmetic.
With $K$ buckets and $t_{\mathrm{NR}}$ Newton steps, NeRMS requires $K-1$
parallel comparisons and $3t_{\mathrm{NR}}-1$ secure products, with interaction
depth independent of $D$.
The surrounding path applies the inverse-RMS factor and RMSNorm weights
after the subprotocol returns.

% ============================================================================
\subsection{Private Selective-Decay Exponential}
\label{sec:mpc-mmexp}

MMExp exploits the one-sided selective-decay domain induced by
$A_h=-\exp(A_{\log,h})<0$ and $\Delta_h\ge0$,
\[
  z_h=\Delta_h A_h\le0,
  \qquad
  a_h=\exp(z_h).
\]
Here $a_h$ is the decay factor consumed by the HEScan packet, staying near one
at small $|z_h|$ and vanishing in the negative tail, so the protocol only needs
the negative half-line rather than a general signed-input circuit.
MMExp approximates $\exp(z)$ over $[-\tau,0]$ using the public-coefficient
minimax polynomial
\[
  p_d(z)=\sum_{i=0}^{d}c_i z^i,
\]
and muxes the negligible tail $z<-\tau$ to zero with one secret comparison,
avoiding repeated-squaring and range-reduction machinery.
All coefficients, thresholds, and the per-head schedule are public, while the
timestep, decay argument, tail bit, and returned factor stay secret-shared, so
the executed schedule is data-independent.
Algorithm~\ref{alg:mmexp} gives the protocol, and its deployed D4 instantiation
is the five-product, one-comparison, one-mux row in
Appendix~\ref{app:mpc-nonlinear-details}.

\begin{algorithm}[t]
\caption{Secure protocol for MMExp}
\label{alg:mmexp}
\begin{algorithmic}[1]
\Require Shares $[[\boldsymbol{\Delta}]]$, shares $[[A_h]]$, threshold $\tau$, coefficients $c_0,\ldots,c_d$
\Ensure Shares $[[\widehat a_h]]$ approximating $\exp(\Delta_hA_h)$
        for each head $h$
\ForAll{heads $h$ \textbf{in parallel}}
  \State $[[z_h]] \gets \Pi_\times([[\Delta_h]],[[A_h]])$
  \State $[[p_h]] \gets c_d\odot_{\mathsf{pub}}[[z_h]]+c_{d-1}$
  \For{$i=d{-}2,\ldots,0$}
    \State $[[p_h]] \gets \Pi_\times([[p_h]],[[z_h]])+c_i$
  \EndFor
  \State $[[b_h]]_{\mathsf{bit}} \gets \Pi_{\mathrm{cmp}}([[z_h]],-\tau)$
  \State $[[a_h]] \gets \Pi_{\mathrm{mux}}([[p_h]],1-[[b_h]]_{\mathsf{bit}})$
\EndFor
\State \Return $[[\mathbf{a}]]$
\end{algorithmic}
\end{algorithm}

% !TEX root = ../main.tex
\section{CKKS--MPC Conversion}
\label{sec:complex-conversion}

At each CKKS--MPC boundary, \system{} converts between CKKS ciphertexts and MPC additive shares over $\mathbb{Z}_{2^\ell}$.
\system{} instantiates the complex-lane CKKS--MPC conversion primitives of EncFormer~\cite{zhu2026encformer}, whose masking and share-modulus conversion follow BLB and SiRNN~\cite{xu2025blb,rathee2021sirnn}.
We use these inherited primitives at the factorized selective-SSM boundary, where the \system{} contribution is the compact boundary packet and its placement in the pipeline.
We write $\Pi_{\mathrm{C2M}}^{\mathbb{C}}$ for complex CKKS-to-MPC conversion, which maps a CKKS ciphertext to additive shares, and $\Pi_{\mathrm{M2C}}^{\mathbb{C}}$ for the reverse MPC-to-CKKS conversion, which maps additive shares back to a CKKS ciphertext.
Following EncFormer~\cite{zhu2026encformer}, \system{} applies modulus trimming to reduce ciphertext payload before conversion by dropping unused RNS primes.
Appendix~\ref{app:conversion} records the exact interfaces, parameterization, and security argument used by the protocol.

% !TEX root = ../main.tex
\section{Experimental Setup}
\label{sec:experiments}

\noindent Evaluation follows the \mambabase{} pipeline under the same semi-honest, single-query document classification setting used by private-inference baselines.
The main latency measurements use NVIDIA A100 GPUs, with one-GPU and multi-GPU results in Table~\ref{tab:e2e_comparison}, while Appendix~\ref{app:gpu-memory-envelope} extends the feasibility checks to additional GPU memory classes.

\subsection{System Setup}
\label{sec:exp-setup}

\runinhead{FHE Settings.} The encrypted kernels are implemented with CKKS and executed on PhantomFHE's GPU backend~\cite{yang2024phantom}.
Table~\ref{tab:ckks_config} summarizes the parameter set for each stage, including the polynomial modulus degree, usable slots, multiplicative depth, modulus chain, and scale.
The projection stages and the NeRMS--HEOutProj path use the $N{=}32768$ setting, while HEScan moves to $N{=}65536$ for $L{\ge}512$ to support the longer scan with the matching slot and level budget.
For \mambabase{}, scan packing keeps $s_{\mathrm{state}}{=}16{,}384$ active state slots per ciphertext, which yields $K_s=12$ state chunks.
Appendix~\ref{app:bootstrap-boundary} gives the level-budget audit.

\begin{table}[t]
\centering
\caption{CKKS parameters for each encrypted stage.}
\label{tab:ckks_config}
\scriptsize
\setlength{\tabcolsep}{3pt}
\renewcommand{\arraystretch}{1.12}
\resizebox{\columnwidth}{!}{%
\begin{tabular}{@{}l|ccccc@{}}
\toprule
\rowcolor{tabheader}
Stage & $N$ & $n_{\mathrm{ckks}}$ & Depth & Chain bits & CKKS scale \\
\midrule
HEInProj kernel  & 32768 & 16384 & 2 & $[60,40,40,60]$ & $2^{30}$ \\
HEScan kernel, $L{\le}256$ & 32768 & 16384 & $D_{\mathrm{scan}}(L)$\textsuperscript{$\dagger$} & $[60,40^{\times D_{\mathrm{scan}}(L)},60]$ & $2^{40}$ \\
HEScan kernel, $L{\ge}512$ & 65536 & 32768 & $D_{\mathrm{scan}}(L)$\textsuperscript{$\dagger$} & $[60,40^{\times D_{\mathrm{scan}}(L)},60]$ & $2^{40}$ \\
NeRMS + HEOutProj kernel & 32768 & 16384 & 3 & $[60,40,40,40,60]$ & $2^{30}$ \\
\bottomrule
\end{tabular}%
}
\vspace{2pt}
\parbox{\linewidth}{\scriptsize $^{\dagger}$Adapts to $L$, for example $22$ at $L{=}2048$, with values in Appendix~\ref{app:longseq}.}
\end{table}

\runinhead{MPC Settings.} The pipeline evaluates all nonlinear stages with a SCI/EzPC-style two-party arithmetic-sharing backend~\cite{chandran2019ezpc,rathee2020cryptflow2}.
MPC values are signed fixed-point shares over $\mathbb{Z}_{2^\ell}$ with fractional precision $F{=}19$ and arithmetic bit width $\ell{=}44$ in the full encrypted pipeline.
Secure products, comparisons, and muxes use the interactive backend, while additions and public-coefficient products are local share operations.
Standalone protocol sweeps keep the deployed protocol graphs and polynomial degrees, and report their fixed-point setting when it differs.

\runinhead{Network Profiles.} Following encrypted-inference evaluations in prior works~\cite{lu2025bumblebee,xu2025blb,zhu2026encformer}, latency combines online computation with aggregate online traffic under LAN at 1\,Gbps and 0.3\,ms, WAN1 at 400\,Mbps and 4\,ms, WAN2 at 100\,Mbps and 4\,ms, and WAN3 at 100\,Mbps and 80\,ms.

\runinhead{OOM Threshold.} For feasibility reporting, OOM denotes out-of-memory execution, meaning a run that fails due to memory exhaustion or exceeds $40$\,GB of live GPU allocation or $256$\,GB of peak host allocation per inference, with threshold provenance in Appendix~\ref{app:oom-threshold}.

\subsection{Model and Workload Setup}
\label{sec:workload-setup}

\runinhead{Deployed Model and Reference.} The main implementation, \mambabase{}, is a $12$-layer student constructed through task-specific distillation from Mamba-2-130M~\cite{huggingface} to match the depth and width of BERT-base-uncased~\cite{devlin2019bert} used in prior private-inference evaluations.
For each dataset and target length, we adapt the public $24$-layer Mamba-2 checkpoint for one epoch and distill the $12$-layer student for one epoch, while BERT-base is fine-tuned directly for two epochs.
Both recipes use the same dataset splits, target sequence length, document-prefix policy, seed set, and validation-loss checkpoint selection, so the comparison controls task accuracy under fixed training recipes.
\mambabase{} keeps model width $d_{\mathrm{model}}{=}768$, expanded width $d_{\mathrm{inner}}{=}E{=}1536$, and state dimension $d_s{=}128$ in the encrypted scan interface.
\mambabase{} is the end-to-end model used for accuracy, latency, communication, residency, and multi-GPU scaling results, while Appendix~\ref{app:cross-ssm-family} validates the same scan-contract implementation across different model shapes and representative invariant and selective SSM implementations, including Mamba-3.

\runinhead{Workloads and Datasets.} Following the single-query inference setting used by prior private inference systems~\cite{pang2024bolt,lu2025bumblebee,xu2025blb,zhu2026encformer}, \system{} targets \emph{single-pass} document classification and scoring.
Table~\ref{tab:datasets} summarizes the three long-document benchmarks, each evaluated at $L \in \{128,256,512,1024,2048,4096\}$ tokens to test long-context encrypted inference beyond prior systems' windows.
Fine-tuning and evaluation hyperparameters are summarized in Appendix~\ref{app:hyperparams}.

\begin{table}[t]
\centering
\caption{Long-document classification datasets.}
\label{tab:datasets}
\scriptsize
\setlength{\tabcolsep}{4pt}
\renewcommand{\arraystretch}{1.12}
\begin{tabular}{@{}l|llcc@{}}
\toprule
\rowcolor{tabheader}
Dataset & Domain & Task & Classes & \# Docs \\
\midrule
SCOTUS~\cite{chalkidis2022lexglue}   & Legal      & Issue area classif. & 13 & $1{,}400$ \\
arXiv~\cite{clement2019arxiv}        & Scientific & Topic classif.      & 11 & $2{,}500$ \\
Patent~\cite{sharma2019bigpatent}    & Technical  & CPC classif.        & 9  & $5{,}000$ \\
\bottomrule
\end{tabular}%
\end{table}

\runinhead{Approximation Methodology.} We evaluate two approximation regimes.
\emph{Post-swap} substitutes polynomial approximations only at inference on a matched exact \mambabase{} checkpoint.
\emph{Approximation-aware fine-tuning} (AAT) fine-tunes \mambabase{} with the polynomial approximations installed.
The exact \mambabase{} control and AAT initialize from the same distilled student and each receive one additional task epoch, isolating the effect of the secure-friendly approximations.
MoSiLU and MoSoft sweep degree $d_{\mathrm{poly}} \in \{2,4,6\}$ at threshold $\tau{=}4$, NeRMS sweeps $t_{\mathrm{NR}}\in\{1,2,3\}$ Newton iterations at $K{=}8$ buckets, and MMExp sweeps degree $d_{\mathrm{poly}} \in \{2,4,6,8\}$.
Because a full joint sweep spans thousands of AAT runs, we use a progressive approximation search that replaces nonlinear operators stage by stage with fine-tuning, following prior FHE methodology~\cite{tong2024smartpaf} and the HE-friendly practice of training around fixed approximations~\cite{park2025powerformer}.

\subsection{Baselines and Measurement Methodology}
\label{sec:exp-baselines}

Direct end-to-end latency compares measured \system{} runs with executable or published semi-honest two-party private-inference rows under common network profiles.
The baseline set follows the FHE-only, MPC-only, then hybrid order, covering AEGIS~\cite{gong2026aegis}, MPCFormer~\cite{li2023mpcformer}, SIGMA~\cite{gupta2024sigma}, SHAFT~\cite{kei2025shaft}, BOLT~\cite{pang2024bolt}, BumbleBee~\cite{lu2025bumblebee}, BLB~\cite{xu2025blb}, and EncFormer~\cite{zhu2026encformer}.
Mamba-family ablations use MPCMamba~\cite{yu2025mpcmamba} as the prior encrypted-SSM protocol reference for the MPC nonlinear path.
The direct comparison retains AEGIS because it is the only prior FHE-only system that attempts $L{\ge}1024$, reporting full-sequence encrypted execution through $L{=}2048$ on four A100s.
Other FHE-only systems are positioned in related work because their workloads are less directly aligned, and CipherPrune~\cite{zhang2025cipherprune} is excluded because it evaluates pruned-token inference rather than the full-sequence workload.
Section~\ref{sec:related} positions all systems by backend and sequence primitive, and Appendix~\ref{app:baseline_sources} gives artifact provenance and counting rules. 

% !TEX root = ../main.tex
\section{Performance}
\label{sec:performance}

\noindent We evaluate \system{} across kernel cost, accuracy, approximation accuracy, encrypted correctness, end-to-end latency, memory, communication, GPU scaling, and portability.
We separate reach into accuracy-audited secure inference through $L{=}2{,}048$, native one-GPU latency through $L{=}4{,}096$, and backend reach through $L{=}8{,}192$ with secure refresh.
The results show near-linear scaling, preserved task accuracy, OOM-free execution beyond the audited range, portability across six GPU classes, and kernel compatibility with SSM families including Mamba-3.

\subsection{Microbenchmarks}
\label{sec:gpu-bench}

\runinhead{FHE Computation Cost.} Figure~\ref{fig:ks_scaling} compares one \system{} Mamba-base layer with one BERT-base layer from EncFormer~\cite{zhu2026encformer} and BLB~\cite{xu2025blb}, showing the key-switch advantage of a linear encrypted sequence primitive over quadratic encrypted attention.
Transformer primitives retain token-pair cost, so the crossover occurs by $L{=}512$ and widens at long context, with equations and one-layer KSw totals in Appendices~\ref{app:ks-scaling-model} and~\ref{app:baseline-ks-accounting}.

\begin{figure}[t]
  \centering
  \includegraphics[width=0.92\linewidth]{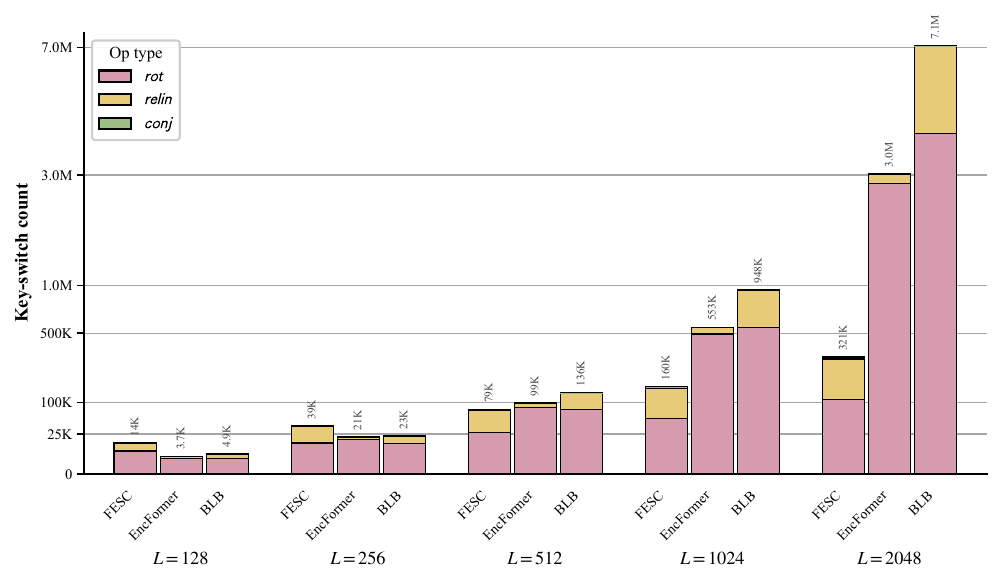}
  \caption{Key-switch cost by operation type for one layer of \system{}, EncFormer~\cite{zhu2026encformer}, and BLB~\cite{xu2025blb}.}
  \label{fig:ks_scaling}
\end{figure}

\runinhead{Scan Topology Selection.} As introduced in \S\ref{sec:prelim_prefix_scan}, SSM recurrences admit several parallel-prefix scan topologies, and Table~\ref{tab:real-phantom-scan} selects the best one for FHE execution.
The comparison uses the Mamba-base scan shape $d_{\mathrm{inner}}{=}1536$, $d_s{=}128$, and active \system{} state packing $s_{\mathrm{state}}{=}16{,}384$ slots, and the times are s/layer under the one-GPU online latency accounting used in Table~\ref{tab:runtime-breakdown}.
Sequential recurrence has the lowest work but depth $L{-}1$, which exceeds the multiplicative budget.
Among parallel-prefix scans, Brent--Kung has the fewest key-switches and lowest latency at every length, so \system{} adopts Brent--Kung inside each encrypted scan block.

\begin{table}[t]
\centering
\caption{Prefix-scan topology comparison for one \system{} scan block.}
\label{tab:real-phantom-scan}
\scriptsize
\setlength{\tabcolsep}{2.4pt}
\renewcommand{\arraystretch}{1.08}
\begin{adjustbox}{max width=\columnwidth}
\begin{tabular}{@{}c|ccc|ccc|ccc|ccc@{}}
\toprule
\rowcolor{tabheader}
\multicolumn{1}{c|}{}
& \multicolumn{3}{c|}{Sequential}
& \multicolumn{3}{c|}{Kogge--Stone}
& \multicolumn{3}{c|}{Sklansky}
& \multicolumn{3}{c}{Brent--Kung} \\
\rowcolor{tabheader}
$L$ & KSw & \makecell{(s/layer)} & dep. & KSw & \makecell{(s/layer)} & dep. & KSw & \makecell{(s/layer)} & dep. & KSw & \makecell{(s/layer)} & dep. \\
\midrule
128  & $3.0\mathrm{K}$  & --- & $127$  & $18.5\mathrm{K}$  & $48.5$      & $7$  & $10.8\mathrm{K}$  & $28.3$  & $7$  & $5.9\mathrm{K}$  & $15.6$  & $13$ \\
256  & $6.1\mathrm{K}$  & --- & $255$  & $43.0\mathrm{K}$  & $115.1$     & $8$  & $24.6\mathrm{K}$  & $65.7$  & $8$  & $12.0\mathrm{K}$ & $32.2$  & $15$ \\
512  & $12.3\mathrm{K}$ & --- & $511$  & $98.3\mathrm{K}$  & $256.6$     & $9$  & $55.3\mathrm{K}$  & $144.3$ & $9$  & $24.3\mathrm{K}$ & $63.5$  & $17$ \\
1024 & $24.6\mathrm{K}$ & --- & $1023$ & $221.2\mathrm{K}$ & $631.1$     & $10$ & $122.9\mathrm{K}$ & $350.6$ & $10$ & $48.9\mathrm{K}$ & $139.4$ & $19$ \\
2048 & $49.1\mathrm{K}$ & --- & $2047$ & $491.5\mathrm{K}$ & $1{,}425.6$ & $11$ & $270.3\mathrm{K}$ & $784.0$ & $11$ & $98.0\mathrm{K}$ & $284.2$ & $21$ \\
\bottomrule
\end{tabular}
\end{adjustbox}
\end{table}

\subsection{Accuracy and Approximation Results}
\label{sec:mpc-approx}

\runinhead{Long-Sequence Model Accuracy.} Figure~\ref{fig:exact_accuracy} compares \mambabase{} with the BERT-base reference used by prior private-inference systems under the controlled setup in \S\ref{sec:workload-setup}, matching splits, target lengths, seeds, validation selection, and two task epochs.
At $L{\ge}512$, \mambabase{} matches or exceeds BERT-base in all nine evaluated cells and averages $2.66$ percentage points higher, showing that the linear selective-scan primitive improves long-sequence task quality while preserving the scaling needed for encrypted execution.
Because each length uses the checkpoint deployed at that operating point, these accuracies are paired with the corresponding encrypted latency and memory measurements.

\newcommand{\accdelta}[2]{$#1{\scriptscriptstyle\, (#2)}$}

\begin{figure}[t]
\centering
\includegraphics[width=\columnwidth]{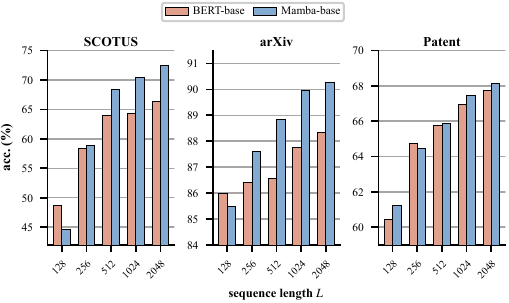}
\caption{Three-seed mean task accuracy across sequence lengths.}
\label{fig:exact_accuracy}
\end{figure}

\phantomsection\label{sec:sweep-illustration}
\runinhead{Nonlinear Protocol Sweep.} Figure~\ref{fig:nonlinear_sweep_pareto} reports the progressive sweep from \S\ref{sec:workload-setup} and selects MoSiLU/MoSoft~D4, NeRMS~N1, and MMExp~D4 as the deployed nonlinear stack. MoSiLU/MoSoft D4 gives the best low-round accuracy--cost tradeoff, NeRMS~N1 avoids extra Newton-step cost without stable accuracy loss, and MMExp~D4 is sufficient because higher degrees do not improve the selected AAT point enough to justify their cost.

\begin{figure}[t]
\centering
\includegraphics[width=\linewidth]{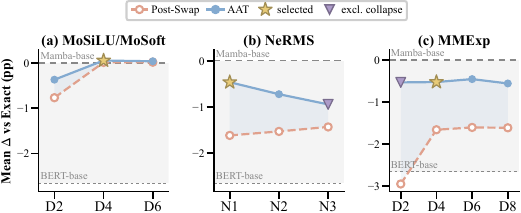}
\caption{Nonlinear protocol selection across three progressive sweep stages, averaged over all datasets and lengths. Per-cell three-seed deviations appear in Appendix~\ref{app:full-results}.}
\label{fig:nonlinear_sweep_pareto}
\end{figure}

AAT gives the highest accuracy by training around the fixed nonlinear protocols offline, and this preparation is outside online latency. Post-swap needs no retraining and remains a competitive fallback, showing that the selected D4/N1/D4 stack is stable under direct protocol substitution. We use the AAT-selected stack for deployed encrypted measurements, while Appendices~\ref{app:mpc-nonlinear-details} and~\ref{app:full-results} report the calibration curves, pointwise errors, and full grids.

\runinhead{MPCMamba Protocol Comparison.} MPCMamba~\cite{yu2025mpcmamba} is the closest MPC-only Mamba system, but uses generic elementary-function protocols for Mamba nonlinearities, incurring high communication overhead.
Figure~\ref{fig:mpc_protocol_cost} compares \system{} with the nonlinear protocols reported in that paper, using repeated-squaring $\exp$ at $n_{\exp}{=}8$ and Newton/Householder refinements at $t{=}3$, including SiLU sign comparison and RMSNorm/softplus exponential initialization.
Across the Mamba nonlinear stack, \system{} uses up to $7.7\times$ fewer online communicating operations than the MPCMamba protocols.

\begin{figure}[t]
\centering
\includegraphics[width=\linewidth]{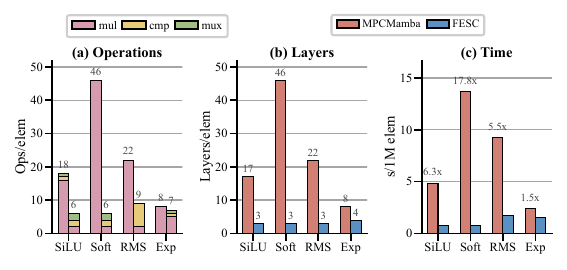}
\caption{MPC nonlinear cost per output element, reporting (a) online operation mix, (b) communicating layers, and (c) measured runtime at $L{=}2048$.}
\label{fig:mpc_protocol_cost}
\end{figure}

\phantomsection\label{sec:longseq}
\runinhead{Approximation Fidelity.} Table~\ref{tab:longseq_accuracy} separates exact-model accuracy from the deployed approximation stack.
Across the nine full-test cells, Plain \system{} differs from exact \mambabase{} by $-0.53$ signed points on average and $0.62$ absolute points on average.
The selected D4/N1/D4 stack therefore preserves the useful accuracy of the exact \mambabase{} control.

\runinhead{Encrypted Accuracy.} A full encrypted sweep of the three datasets would require an estimated $823$ GPU-days per seed on one A100, so we perform an end-to-end encrypted accuracy audit on a stratified $10\%$ subset with $2{,}670$ inferences per seed.
By running Plain and Encrypted \system{} on the same stratified examples, the audit directly isolates the encrypted-execution gap from subset sampling noise.
Across $8{,}010$ audited predictions, Encrypted \system{} stays within $0.48$ percentage points of Plain \system{} in every row and within $0.15$ percentage points in aggregate.
Encrypted \system{} therefore preserves the Plain \system{} accuracy profile across the audited settings, showing that leveled-CKKS evaluation and CKKS--MPC conversion introduce no meaningful task-level loss on the stratified subset.
Appendix~\ref{app:encrypted-audit-scope} reports the audit construction, native-runtime estimate, and residual-stream measurements.

\begin{table}[t]
\centering
\caption{Full-test accuracy and encrypted audit accuracy [\%].}
\label{tab:longseq_accuracy}
\scriptsize
\setlength{\tabcolsep}{2.0pt}
\renewcommand{\arraystretch}{1.12}
\begin{adjustbox}{max width=\columnwidth}
\begin{tabular}{@{}l|c|ccc|cc@{}}
\toprule
\rowcolor{tabheader}
\multicolumn{1}{l|}{} & \multicolumn{1}{c|}{} & \multicolumn{3}{c|}{Full-test} & \multicolumn{2}{c}{Audit} \\
\rowcolor{tabheader}
Dataset & $L$ & BERT-base & Mamba-base & Plain \system{} & Plain \system{} & Encrypted \system{} \\
\midrule
\multirow{3}{*}{SCOTUS}
& 512  & $64.07$ & $68.36$ & $67.71$ & $69.05$ & $68.81$ \\
& 1024 & $64.36$ & $70.50$ & $69.57$ & $70.48$ & $70.00$ \\
& 2048 & $66.36$ & $72.50$ & $70.93$ & $70.00$ & $69.76$ \\
\midrule
\multirow{3}{*}{arXiv}
& 512  & $86.56$ & $88.84$ & $89.08$ & $87.47$ & $87.33$ \\
& 1024 & $87.76$ & $89.96$ & $89.36$ & $88.00$ & $88.00$ \\
& 2048 & $88.36$ & $90.28$ & $89.80$ & $88.40$ & $88.00$ \\
\midrule
\multirow{3}{*}{Patent}
& 512  & $65.78$ & $65.88$ & $66.06$ & $65.20$ & $65.07$ \\
& 1024 & $66.98$ & $67.44$ & $67.00$ & $66.47$ & $66.40$ \\
& 2048 & $67.72$ & $68.12$ & $67.60$ & $66.93$ & $66.87$ \\
\bottomrule
\end{tabular}
\end{adjustbox}
\end{table}

\subsection{End-to-End Latency}
\label{sec:e2e-latency}

Table~\ref{tab:e2e_comparison} compares end-to-end encrypted document-classification latency and long-context coverage across the stated private-inference settings and GPU counts. The rows place the FHE-only reference first, the MPC-only baselines next, and the hybrid baselines together with \system{} last. Several hybrid BERT systems are faster at $L{=}128$, but the long-context columns expose the execution boundary, where Transformer baselines become unavailable or run out of memory while \system{} reports native latency measurements through $L{=}4{,}096$. Baseline provenance and the multi-GPU HEScan algorithm appear in Appendices~\ref{app:baseline_sources} and~\ref{app:multi-gpu-hescan}.

\begin{table*}[t]
\centering
\caption{Encrypted end-to-end latency and GPU-minutes.}
\label{tab:e2e_comparison}
\label{tab:gpu_scaling}
\scriptsize
\setlength{\tabcolsep}{2pt}
\renewcommand{\arraystretch}{1.15}
\begin{tabular}{@{}c|l|lcc|c|cc|cc|cc|cc|cc|cc@{}}
\toprule
\rowcolor{tabheader}
& & & & & & \multicolumn{2}{c|}{$L{=}128$} & \multicolumn{2}{c|}{$L{=}256$} & \multicolumn{2}{c|}{$L{=}512$} & \multicolumn{2}{c|}{$L{=}1{,}024$} & \multicolumn{2}{c|}{$L{=}2{,}048$} & \multicolumn{2}{c}{$L{=}4{,}096$\textsuperscript{\scriptsize \S}} \\
\rowcolor{tabheader}
Type & System & Model & Layers & Params. & GPUs & min & GPU$\cdot$min & min & GPU$\cdot$min & min & GPU$\cdot$min & min & GPU$\cdot$min & min & GPU$\cdot$min & min & GPU$\cdot$min \\
\midrule
\multirow{2}{*}{\rotatebox[origin=c]{90}{\scriptsize\itshape FHE}} & AEGIS~\cite{gong2026aegis} & BERT-base & $12$ & $110$M & $1$ & $12.4^{\ddagger}$ & $12.4^{\ddagger}$ & NR & NR & $34.9^{\ddagger}$ & $34.9^{\ddagger}$ & NR & NR & OOM & OOM & NR & NR \\
 & AEGIS & BERT-base & $12$ & $110$M & $4$ & $3.2$ & $12.8$ & NR & NR & $9.1$ & $36.4$ & NR & NR & $83.9$ & $335.6$ & NR & NR \\
\midrule
\multirow{3}{*}{\rotatebox[origin=c]{90}{\scriptsize\itshape MPC}} & MPCFormer~\cite{li2023mpcformer} & BERT-base & $12$ & $110$M & $1$ & $5.5^{\dagger}$ & $5.5$ & $10.6^{\dagger}$ & $10.6$ & $29.9^{\dagger}$ & $29.9$ & NR & NR & NR & NR & NR & NR \\
 & SIGMA~\cite{gupta2024sigma} & BERT-base & $12$ & $110$M & $1$ & $7.8^{\dagger}$ & $7.8$ & NR & NR & NR & NR & NR & NR & NR & NR & NR & NR \\
 & SHAFT~\cite{kei2025shaft} & BERT-base & $12$ & $110$M & $1$ & $2.9^{\dagger}$ & $2.9$ & NR & NR & NR & NR & NR & NR & NR & NR & NR & NR \\
\midrule
\multirow{8}{*}{\rotatebox[origin=c]{90}{\scriptsize\itshape Hybrid}} & BOLT~\cite{pang2024bolt} & BERT-base & $12$ & $110$M & $1$ & $24.0$ & $24.0$ & NR & NR & NR & NR & NR & NR & NR & NR & NR & NR \\
 & BumbleBee~\cite{lu2025bumblebee} & BERT-base & $12$ & $110$M & $1$ & $5.1$ & $5.1$ & NR & NR & NR & NR & NR & NR & NR & NR & NR & NR \\
 & BLB~\cite{xu2025blb} & BERT-base & $12$ & $110$M & $1$ & $2.6$ & $2.6$ & OOM & OOM & OOM & OOM & NR & NR & NR & NR & NR & NR \\
 & EncFormer~\cite{zhu2026encformer} & BERT-base & $12$ & $110$M & $1$ & $2.2$ & $2.2$ & NR & NR & NR & NR & NR & NR & NR & NR & NR & NR \\
 & \cellcolor{oursrow}\system{}-base & \cellcolor{oursrow}Mamba-base & \cellcolor{oursrow}$12$ & \cellcolor{oursrow}$84$M & \cellcolor{oursrow}$1$ & \cellcolor{oursrow}$4.4$ & \cellcolor{oursrow}$4.4$ & \cellcolor{oursrow}$9.0$ & \cellcolor{oursrow}$9.0$ & \cellcolor{oursrow}$17.8$ & \cellcolor{oursrow}$17.8$ & \cellcolor{oursrow}$38.1$ & \cellcolor{oursrow}$38.1$ & \cellcolor{oursrow}$77.3$ & \cellcolor{oursrow}$77.3$ & \cellcolor{oursrow}$149.8$ & \cellcolor{oursrow}$149.8$ \\
 & \system{}-base & Mamba-base & $12$ & $84$M & $2$ & $2.8$ & $5.6$ & $5.6$ & $11.2$ & $11.7$ & $23.4$ & $24.9$ & $49.8$ & $50.6$ & $101.2$ & $98.2$ & $196.4$ \\
 & \system{}-base & Mamba-base & $12$ & $84$M & $4$ & $2.2$ & $8.8$ & $4.4$ & $17.6$ & $8.6$ & $34.4$ & $17.8$ & $71.2$ & $36.0$ & $144.0$ & $68.5$ & $274.0$ \\
 & \system{}-130M & Mamba-130M & $24$ & $130$M & $1$ & $7.8$ & $7.8$ & $17.7$ & $17.7$ & $34.2$ & $34.2$ & $69.8$ & $69.8$ & $148.6$ & $148.6$ & $291.7$ & $291.7$ \\
\bottomrule
\end{tabular}
\vspace{2pt}
\parbox{\linewidth}{\scriptsize NR means no end-to-end latency is paper-reported or reproducible from a released artifact. OOM means reported or reproduced out of memory. $^{\dagger}$ cells are reproduced and normalized to the paper network profile. $^{\ddagger}$ cells are computed from the paper-reported four-A100 rows and speedup ratio. \textsuperscript{\scriptsize \S} reports $L{=}4{,}096$ latency only.}
\end{table*}

\system{}-base is the row used throughout the systems measurements, and \system{}-130M provides a parameter-size reference near 110M-parameter BERT-base systems.
At $L{=}2{,}048$, \system{} completes in $77.3$ minutes on one GPU, faster than the only reported encrypted Transformer result at that length, AEGIS on \emph{four} GPUs at $83.9$ minutes, while consuming $4.34\times$ fewer GPU-minutes, and four-GPU \system{} reaches $36.0$ minutes. From $L{=}128$ to $L{=}2{,}048$, one-GPU \system{} latency grows by $17.6\times$, close to the $16\times$ sequence growth, against $26.2\times$ for reported four-GPU AEGIS.

\subsection{Communication Cost}
\label{sec:comm-cost}

Online communication in \system{} comes from CKKS--MPC conversion and MPC nonlinear computations, scaling linearly from $4.7$\,GB at $L{=}128$ to $73.8$\,GB at $L{=}2048$ for the $12$-layer pipeline. Figure~\ref{fig:network_cost} shows measured latency across the network profiles from \S\ref{sec:exp-setup}, with full values in Appendix~\ref{app:full-results}. The main sensitivity is bandwidth rather than round-trip latency, as moving from LAN to WAN1 adds $16.0$\,min at $L{=}2048$ and the $100$\,Mbps profile makes communication the largest term, while WAN3's higher RTT adds only $1.6$\,min over WAN2 after online conversion and MPC protocol layers.

\begin{figure}[t]
\centering
\includegraphics[width=1.0\columnwidth]{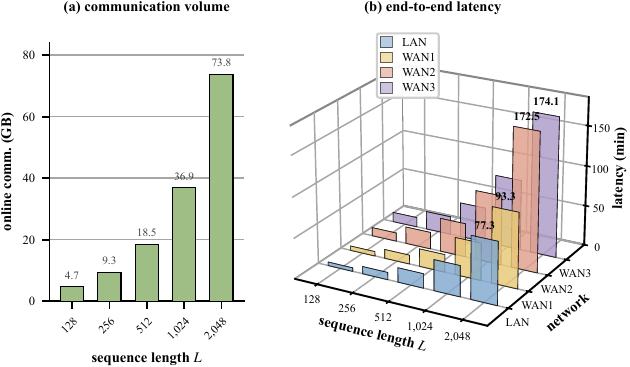}
\caption{\system{} online communication (a) and end-to-end latency across network profiles (b) versus sequence length. Exact values appear in Appendix~\ref{app:full-results}.}
\label{fig:network_cost}
\end{figure}

\runinhead{Boundary-Placement Ablation.} Table~\ref{tab:ablation_boundary} measures the gains from FHE contraction and compact factor transport over a dense boundary.
The dense bridge transfers scan maps and states, FHE contraction removes the outbound state transfer, and the factorized boundary replaces inbound maps with compact factors expanded inside HEScan.
At $L{=}2048$, FHE contraction reduces online communication by $1.30\times$, while factorized inputs add another $12.4\times$ reduction and deliver the full $16.1\times$ gain, sustained across sequence lengths.

\begin{table}[t]
\centering
\caption{Boundary-placement ablation across sequence lengths.}
\label{tab:ablation_boundary}
\scriptsize
\setlength{\tabcolsep}{2.4pt}
\renewcommand{\arraystretch}{1.08}
\begin{adjustbox}{max width=\columnwidth}
\begin{tabular}{@{}l|l|ccccc@{}}
\toprule
\rowcolor{tabheader}
Boundary variant & Metric & $128$ & $256$ & $512$ & $1024$ & $2048$ \\
\midrule
\multirow{3}{*}{\makecell[l]{Dense bridge}} & CTs/layer & $3,123$ & $6,245$ & $12,489$ & $24,978$ & $49,955$ \\
 & Online (GB) & $6.20$ & $12.39$ & $24.79$ & $49.57$ & $99.15$ \\
 & 12-layer LAN (min) & $10.6$ & $21.3$ & $42.6$ & $85.2$ & $170.3$ \\
\midrule
\multirow{3}{*}{\makecell[l]{$+$ FHE contraction}} & CTs/layer & $1,623$ & $3,245$ & $6,489$ & $12,978$ & $25,955$ \\
 & Online (GB) & $4.78$ & $9.56$ & $19.12$ & $38.23$ & $76.46$ \\
 & 12-layer LAN (min) & $8.2$ & $16.4$ & $32.8$ & $65.7$ & $131.4$ \\
\midrule
\multirow{3}{*}{\makecell[l]{$+$ Factorized inputs}} & CTs/layer & $124$ & $246$ & $490$ & $980$ & $1,958$ \\
 & Online (GB) & $0.39$ & $0.77$ & $1.54$ & $3.08$ & $6.16$ \\
 & 12-layer LAN (min) & $0.7$ & $1.3$ & $2.6$ & $5.3$ & $10.6$ \\
\bottomrule
\end{tabular}
\end{adjustbox}
\end{table}

\subsection{Long-Sequence Memory Feasibility}
\label{sec:mem-feasibility}

At $L{=}2048$, the full-length HEScan schedule requires $59.2$\,GB of live state, exceeding the $40$\,GB device capacity.
Figure~\ref{fig:scan_residency} shows the depth--residency tradeoff, where small blocks exhaust multiplicative depth, while larger blocks increase residency until the full-length schedule exceeds device memory.
The deployed $B_{\mathrm{blk}}{=}1024$ serialized-carry schedule has depth $22$, limits live-state peak to $32.7$\,GB, and runs the complete $12$-layer pipeline in $77.3$\,min end-to-end.

\begin{figure}[t]
\centering
\includegraphics[width=0.9\columnwidth]{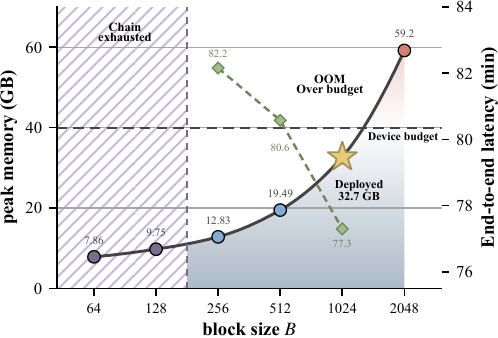}
\caption{HEScan block-size tradeoff between depth and residency at $L{=}2048$.}
\label{fig:scan_residency}
\end{figure}

\subsection{Additional Deployment Evaluations}
\label{sec:additional-deployment}

Additional evaluations establish that \system{} extends beyond the headline Mamba-base, A100-40GB, and $L{\le}4096$ operating point.
Appendix~\ref{app:dimension-robustness} sweeps hidden width, head count, state dimension, and layer depth across configurations up to \system{}-370M and confirms that latency follows the theoretical state-chunk scaling law.
Appendix~\ref{app:cross-ssm-family} implements S4D, S5, LRU, Mamba-1/S6, Mamba-2, and Mamba-3 through the same HEScan interface, validating a shared encrypted primitive across SSM recurrences.
Appendix~\ref{app:gpu-memory-envelope} runs the native backend through $L{=}2048$ on T4, Titan, A100-40GB, A100-80GB, H100, and H200 GPUs, establishing portability across different hardware platforms.
Appendix~\ref{app:longseq} reports execution at the longer sequence length $L{=}8192$ with secure carry refresh at depth $24$.

% !TEX root = ../main.tex
\section{Conclusion}

We introduced \system{}, a hybrid FHE--MPC pipeline that makes selective SSMs practical for private long-context inference. Its factorized scan-contract achieves linear encrypted work and logarithmic depth while reducing memory and communication costs. Specialized MPC protocols further reduce online nonlinear cost by up to $7.7\times$. Across sequence lengths from $L{=}128$ to $L{=}4096$, \system{} executes a native 12-layer \mambabase{} model end to end on a single GPU. At $L{=}2048$, it completes inference in $77.3$ minutes on one A100 with $32.7$\,GB peak memory. It maintains near-plaintext accuracy across three long-document tasks, reaches $36.0$ minutes on four A100s, runs across six GPU classes, and applies the same scan-contract interface across representative invariant and selective SSMs. Together, these results establish encrypted selective SSMs as a scalable foundation for private long-context inference beyond encrypted attention.

\runinhead{Limitations and Future Work.} The end-to-end evaluation targets semi-honest, single-query document classification with Mamba-2, while the same interface instantiates more SSM families including Mamba-3. Mamba-3 exposes a separate crypto-friendliness challenge because its factor construction shifts more work into the secure nonlinear path. Future work can optimize Mamba-3's secure factor construction, extend to malicious threat models, autoregressive decoding, and model-agnostic automatic FHE--MPC implementations.

% Ethics Considerations is inline in ../tex/main.tex (not a section file); kept
% in sync here. build_arxiv.sh warns if the two drift.  [SYNC: ethics]
\section{Ethics Considerations}
This work introduces \system{}, a system for privacy-preserving machine-learning inference.
It does not introduce new attack capabilities, train on sensitive data, or generate harmful content.
The underlying models are publicly available pretrained checkpoints fine-tuned on public benchmarks.
No human subjects data was collected, and no personally identifiable information was used beyond these public benchmarks.

% Bibliography: references.bib is shared from ../tex. No \nocite list, matching
% ../tex/main.tex exactly, so the reference list is identical (nothing forced
% in beyond what's actually \cite'd in the shared sections).  [SYNC: nocite]
\bibliographystyle{IEEEtran}
\bibliography{references}

% !TeX root = ../main.tex
\appendices

% Float packing for the appendix, which carries 37 floats across roughly
% twenty pages. The defaults reserve too much of each page for text and
% allow too few floats per page, so the queue backs up and LaTeX falls back
% to float-only pages. These settings are declared here, after the body has
% been typeset, so the main-paper layout is unaffected.
\setcounter{topnumber}{3}
\setcounter{bottomnumber}{2}
\setcounter{totalnumber}{5}
\setcounter{dbltopnumber}{3}
\renewcommand{\topfraction}{0.95}
\renewcommand{\bottomfraction}{0.85}
\renewcommand{\textfraction}{0.06}
\renewcommand{\floatpagefraction}{0.80}
\renewcommand{\dbltopfraction}{0.95}
\renewcommand{\dblfloatpagefraction}{0.80}

\section*{Appendix Roadmap}
This appendix opens with the released artifact and then follows the technical dependencies of \system{}.
Appendix~\ref{app:open-science} describes the released package and the checks it supports.
Appendix~\ref{app:model-details} clarifies notation and model semantics.
Appendix~\ref{app:encrypted-kernels} gives the detailed encrypted kernel construction, packing, resident schedule, correctness proof, complex reductions, and scan-network choice.
Appendix~\ref{app:hybrid-security} specifies the FHE--MPC boundary, nonlinear protocols, deployed block protocol, and security argument.
Appendix~\ref{app:cost-accounting} derives operation, communication, and preprocessing costs.
Appendix~\ref{app:evaluation-methodology} records the accuracy-evaluation setup, baseline provenance, hardware limits, key-switch accounting, and protocol-cost comparison.
Appendix~\ref{app:deployment-scaling} gives runtime breakdowns, multi-GPU HEScan execution, model-shape scaling, longer-sequence deployment, and cross-GPU portability evidence.
Appendix~\ref{app:cross-ssm-family} gives the common scan-contract interface, representative invariant and selective SSM implementations including Mamba-3, and cross-family cost results.
Appendix~\ref{app:training-accuracy} gives training details, audit scope, complete accuracy tables, and nonlinear-sweep results.

\section{Artifact and Open Science}
\label{app:open-science}

We provide an anonymous \system{} artifact at \url{https://anonymous.4open.science/r/FESC-23D8}.
The artifact contains the source code, experiment logs, and verification drivers used to check the scan-contract primitive, resident schedule, CKKS parameters, nonlinear protocols, model-shape scaling, and accuracy results.
After review, the same artifact will be de-anonymized and archived with more trained checkpoints, complete logs, and permanent public run instructions.

\section{Notation and Model Semantics}
\label{app:model-details}

\subsection{Notation Conventions}
\label{app:notation}

Table~\ref{tab:notation} lists the recurring symbols whose interpretation depends on context.
Bold symbols denote packed vectors or tensors in the kernel-packing sections,
while unbolded symbols denote abstract recurrence operands in the background
and cross-family equations. Fully indexed expressions such as $x_k[h,p]$ and
$B_k[g,i]$ denote scalar components. Mathematical token indices are one-based, while algorithmic array and block indices are zero-based.
All other symbols retain their local definitions.

\begin{table*}[!t]
\centering
\caption{Notation conventions across \system{}.}
\label{tab:notation}
\scriptsize
\setlength{\tabcolsep}{6pt}
\renewcommand{\arraystretch}{1.13}
\begin{tabularx}{\textwidth}{
@{}
>{\raggedright\arraybackslash}p{0.1\textwidth}
|
>{\raggedright\arraybackslash}X
@{}}
\toprule
\rowcolor{tabheader}
\textbf{Symbol} & \textbf{Meaning and convention} \\
\midrule

$\Delta,\boldsymbol{\Delta}$ &
\textbf{CKKS.} $\Delta$ denotes the encoding scale.
\textbf{SSM.} $\Delta_k[h]$ is the Mamba timestep and
$\boldsymbol{\Delta}$ is the timestep tensor.
\textbf{Evaluation.} $\Delta$ denotes an accuracy difference. \\

$N,n_{\mathrm{ckks}}$ &
$N$ is the CKKS ring degree and
$n_{\mathrm{ckks}}=N/2$ is the complex-slot count.
\\

\makecell[l]{$d_{\mathrm{model}},d_{\mathrm{inner}},$\\$d_{\mathrm{poly}},d_s$} &
$d_{\mathrm{model}}$ is model hidden width,
$d_{\mathrm{inner}}=E=HP$ is expanded width,
$d_{\mathrm{poly}}$ is nonlinear-approximation degree, and
$d_s$ is SSM state dimension. \\

\makecell[l]{$s_{\mathrm{state}},c_{\mathrm{state}},$\\$s_y$} &
$s_{\mathrm{state}}$ is the active state-packing capacity,
$c_{\mathrm{state}}=s_{\mathrm{state}}/d_s$ is the number of expanded
channels stored in one state chunk, and $s_y$ is the effective number of
real output entries stored in one ciphertext. \\

$E,e,\kappa,\lambda_\kappa$ &
In the HEScan kernel, $E=HP$ is the expanded channel count,
$e=hP+p$ indexes an expanded channel,
$\kappa$ indexes a state chunk, and
$\lambda_\kappa$ maps chunk-local state coordinates to CKKS slots. \\

$G,N_{\mathrm{gpu}}$ &
$G$ is the number of SSM groups, and $N_{\mathrm{gpu}}$ is the number of GPUs
used by the multi-GPU execution. \\

$t_{\mathrm{NR}}$ &
In $\Pi_{\mathsf{NeRMS}}$, $t_{\mathrm{NR}}$ denotes the number of
Newton iterations. \\

$\mathrm{i}, i$ &
Upright $\mathrm{i}$ is the imaginary unit in complex CKKS slots.
Italic $i$ indexes the SSM state coordinate. \\

$B_{\mathrm{blk}},B_j,\mathbf{B}_k$ &
$B_{\mathrm{blk}}$ is the resident HEScan block capacity, $B_j$ the valid
token count in block $j$, and $\mathbf{B}_k$ the dynamic SSM input factor
with scalar entry $B_k[g,i]$. \\

$F,\mathbf{s}_k,\mathcal{S}_j^{(\kappa)}$ &
$F$ is the number of fractional bits in the MPC fixed-point encoding.
In the HEScan kernel, $\mathbf{s}_k=\mathbf{x}_k\otimes_g\mathbf{B}_k$ is the
additive state update and $\mathbf{s}_k^{(\kappa)}$ is its chunk-local form.
$\mathcal{S}_j^{(\kappa)}$ is the affine summary of block $j$. \\

$K$ &
$K$ is the number of NeRMS initializer buckets. \\

\makecell[l]{$K_s,K_{\mathrm{blk}},$\\$K_y,K_{\mathrm{fac}}$} &
$K_s$ is the number of state chunks,
$K_{\mathrm{blk}}$ is the number of sequence blocks,
$K_y$ is the number of output ciphertexts, and
$K_{\mathrm{fac}}$ is the number of resident compact-factor ciphertexts. \\

\makecell[l]{$g(h),\mathbf{g}^{\mathrm{out}}_k,$\\$\boldsymbol{\gamma}_{\mathrm{rms}}$} &
$g(h)$ maps head $h$ to its SSM group.
$\mathbf{g}^{\mathrm{out}}_k$ is the Mamba output gate.
$\boldsymbol{\gamma}_{\mathrm{rms}}$ is the RMSNorm weight. \\

$\delta$ &
In the SSM background, $\delta$ denotes a continuous-time discretization step. \\

$\sigma,\sigma_{\mathrm{stat}}$ &
$\sigma(\cdot)$ denotes the logistic sigmoid.
$\sigma_{\mathrm{stat}}$ denotes the statistical-security parameter. \\

\bottomrule
\end{tabularx}
\end{table*}

\subsection{Mamba-2 Recurrence Details}
\label{app:mamba-details}

This appendix expands the factor construction and discretization details summarized in \S\S\ref{sec:ssm_model} and~\ref{sec:mamba_model}.

\runinhead{Factor Construction.} Let $X\in\mathbb{R}^{L\times d_{\mathrm{model}}}$ be the block input after the surrounding normalization.
The mixer has $H$ SSM heads, per-head channel dimension $P$, $G$ groups, state dimension $d_s$, and inner width $d_{\mathrm{inner}}=HP$.
Algebraically, one input projection produces the gate branch, the SSM factor branch, and the timestep branch
\begin{align*}
  [z_k,\; xBC_k,\; dt_k]
  &=
  X_k W_{\mathrm{in}} + b_{\mathrm{in}},
  \\
  W_{\mathrm{in}}
  &\in
  \mathbb{R}^{d_{\mathrm{model}}\times(2HP + 2Gd_s + H)},
  \\
  b_{\mathrm{in}}
  &\in
  \mathbb{R}^{2HP + 2Gd_s + H}.
\end{align*}
Here $z_k\in\mathbb{R}^{H\times P}$ is the gate branch, $dt_k\in\mathbb{R}^{H}$ is the timestep branch, and $xBC_k\in\mathbb{R}^{HP+2Gd_s}$ contains the SSM input stream and the dynamic $B,C$ factors.
If an implementation stores the gate projection as a separate matrix $W_z\in\mathbb{R}^{d_{\mathrm{model}}\times HP}$, it is algebraically equivalent to concatenating that projection into $W_{\mathrm{in}}$.
\system{} evaluates these plaintext-weight projections with the same FHE input-projection kernel.

Only the $xBC$ branch passes through the causal depthwise convolution and SiLU
\begin{align*}
  \xi_k
  &=
  \operatorname{SiLU}
  \bigl(
    \operatorname{Conv1D}_{\mathrm{causal}}(xBC)_k
  \bigr),
  \\
  [x^{\mathrm{raw}}_k,\; B_k,\; C_k]
  &=
  \operatorname{split}(\xi_k),
\end{align*}
where $x^{\mathrm{raw}}_k\in\mathbb{R}^{H\times P}$ and
$B_k,C_k\in\mathbb{R}^{G\times d_s}$.
We reserve $x_k$ for the step-scaled scan input.
The timestep, scan input, decay, and output gate are
\begin{align*}
  \Delta_k[h]
  &=
  \operatorname{softplus}\bigl(dt_k[h]+b_{\Delta,h}\bigr),
  \\
  A_h
  &=
  -\exp(A_{\log,h}),
  \\
  x_k[h,p]
  &=
  \Delta_k[h]\,x^{\mathrm{raw}}_k[h,p],
  \\
  a_k[h]
  &=
  \exp\bigl(\Delta_k[h]A_h\bigr),
  \\
  g^{\mathrm{out}}_k[h,p]
  &=
  \operatorname{SiLU}\bigl(z_k[h,p]\bigr).
\end{align*}
Since $A_h<0$ and $\Delta_k[h]>0$ in the source recurrence, $a_k[h]\in(0,1)$ is a stable decay.
The block output is
\begin{align*}
  \hat y_k
  &=
  \operatorname{RMSNorm}_{\boldsymbol\gamma_{\mathrm{rms}}}(\tilde y_k),
  \\
  o_k
  &=
  \hat y_k W_{\mathrm{out}},
  \qquad
  W_{\mathrm{out}}\in\mathbb{R}^{HP\times d_{\mathrm{model}}}.
\end{align*}

\runinhead{Zero-Order-Hold Discretization.} For the continuous system in \S\ref{sec:ssm_model} with diagonal $\mathbf A$ and scalar step $\delta$, zero-order hold gives
\[
  \bar{\mathbf A}
  =
  \exp(\delta\mathbf A),
  \qquad
  \bar{\mathbf B}
  =
  \left(\int_0^\delta \exp(\mathbf A\tau)\,d\tau\right)\mathbf B .
\]
When $\mathbf A$ is invertible, the input map can be written as
\[
  \bar{\mathbf B}
  =
  \mathbf A^{-1}
  \bigl(\exp(\delta\mathbf A)-\mathbf I\bigr)\mathbf B
  =
  (\delta\mathbf A)^{-1}
  \bigl(\exp(\delta\mathbf A)-\mathbf I\bigr)
  \delta\mathbf B .
\]
Mamba-2 keeps the exponential decay from this discretization, where
$a_k[h]=\exp(\Delta_k[h]A_h)$ is the per-head diagonal transition for head $h$.
For the input map, Mamba-2 uses the simplified selective-scan parameterization
\[
  \bar{\mathbf B}_k \approx \Delta_k \mathbf B_k,
\]
so the scan input is
\[
  x_k[h,p]
  =
  \Delta_k[h]x^{\mathrm{raw}}_k[h,p],
\]
and the dynamic factor $B_k$ is applied separately in the grouped state update~\cite{dao2024mamba2}.
Thus only the decay branch requires an online exponential, which \system{} evaluates in MPC, while the input branch requires one elementwise product between $\Delta_k$ and $x^{\mathrm{raw}}_k$.

These model details justify the compact packet used by the encrypted scan contract in \S\ref{sec:ks-he:scan}.
After MPC factor construction, \system{} sends $(x_k,a_k,B_k,C_k)$ to CKKS, expands state-shaped products only inside the current HEScan chunk, contracts the scanned state with $C_k$, and returns the $H\times P$ output to MPC for the skip and gate, after which the gated result passes through the hybrid NeRMS normalization before HEOutProj applies $W_{\mathrm{out}}$ in CKKS.

\section{Encrypted Kernel Construction}
\label{app:encrypted-kernels}

\subsection{CKKS Backend and Parameters}
\label{app:he_choice}

The main text fixes CKKS as the FHE backend and reports the stage-level parameter sets in Table~\ref{tab:ckks_config}, while this appendix records the primitive timings that justify the concrete GPU backend choice.
The hybrid split keeps the headline encrypted path within leveled CKKS because nonlinear factor construction runs in MPC, the deployed depth fits the coefficient-modulus budget audited in Appendix~\ref{app:bootstrap-boundary}, and longer-sequence refresh is evaluated separately in Appendix~\ref{app:longseq}.
Thus the relevant operating point is GPU CKKS without bootstrapping.

Table~\ref{tab:he_primitive_bench} compares PhantomFHE~\cite{yang2024phantom}, the GPU backend used by \system{}, with DESILO~\cite{desiloFHE}, a CKKS library with bootstrapping support.
All measurements use one A100 GPU and the listed leveled chains.
PhantomFHE is faster for every measured primitive, so the hybrid design that avoids bootstrapping in the headline path lets \system{} leverage the faster GPU backend.

\begin{table}[ht]
\centering
\caption{CKKS primitive latency in milliseconds.}
\label{tab:he_primitive_bench}
\label{tab-he-primitive-bench}
\scriptsize
\setlength{\tabcolsep}{4pt}
\renewcommand{\arraystretch}{1.12}
\begin{tabular}{@{}l|l|ccccc@{}}
\toprule
\rowcolor{tabheader}
Chain & Library & add & pt mul & ct mul & rot & conj \\
\midrule
\multirow{2}{*}{\makecell{3-prime\\$[60,40,60]$}}
& PhantomFHE & 0.06 & 0.06 & 0.74 & 0.35 & 0.05 \\
& DESILO     & 0.55 & 3.40 & 3.92 & 13.4 & 9.38 \\
\midrule
\multirow{2}{*}{\makecell{10-prime\\$[60,40^{\times 8},60]$}}
& PhantomFHE & 0.07 & 0.07 & 2.07 & 1.00 & 0.04 \\
& DESILO     & 0.54 & 3.54 & 3.95 & 13.2 & 8.59 \\
\bottomrule
\end{tabular}
\end{table}

\subsection{Projection Kernel Breakdown}
\label{app:he-shifts}
\label{app:he-kernels:packing}
\label{app:he-kernels:proj}

This subsection expands the HEInProj and HEOutProj kernels introduced in \S\ref{sec:ks-he:ctpt}.
It fixes the segment shifts used by the packed ciphertext layout, then gives the BSGS plaintext-ciphertext projection, offline diagonal encoding, online lane extraction, and key-switch sources used by the layer-level ledger.
The kernel follows the packed CKKS matrix multiplication form used by BLB~\cite{xu2025blb} and EncFormer~\cite{zhu2026encformer}.

\runinhead{Segment View.} Following the notation in \S\ref{sec:ks-he:packing}, we view the slots as
\[
  n=\ell_{\mathrm{seg}}N_{\mathrm{seg}},
  \qquad
  \operatorname{idx}(r,s)=s\ell_{\mathrm{seg}}+r,
\]
where $r\in\{0,\ldots,\ell_{\mathrm{seg}}-1\}$ indexes positions inside one segment and
$s\in\{0,\ldots,N_{\mathrm{seg}}-1\}$ indexes the segment.
In \system{}, a segment stores the corresponding axis of a state or factor packing. For a slot vector $\mathbf{x}\in\mathbb{C}^n$, define segment and intra-segment shifts by
\[
\begin{aligned}
  (\Phi^\delta\mathbf{x})_{\operatorname{idx}(r,s)}
  &=
  \mathbf{x}_{\operatorname{idx}(r,(s+\delta)\bmod N_{\mathrm{seg}})},\\
  (\Psi^t\mathbf{x})_{\operatorname{idx}(r,s)}
  &=
  \mathbf{x}_{\operatorname{idx}((r+t)\bmod \ell_{\mathrm{seg}},s)} .
\end{aligned}
\]
We use the same symbols for their homomorphic realizations on ciphertexts.
A whole-segment shift is one CKKS rotation
\[
  \Phi^\delta(\langle\mathbf{x}\rangle)
  =
  \rho(\langle\mathbf{x}\rangle;\delta\ell_{\mathrm{seg}}).
\]
An intra-segment shift is implemented with two rotations and public masks
\[
  \Psi^t(\langle\mathbf{x}\rangle)
  =
  \mathbf{M}^{\mathrm{stay}}_t
  \odot
  \rho(\langle\mathbf{x}\rangle;t)
  +
  \mathbf{M}^{\mathrm{wrap}}_t
  \odot
  \rho(\langle\mathbf{x}\rangle;t-\ell_{\mathrm{seg}}).
\]
Here $\mathbf{M}^{\mathrm{stay}}_t$ keeps entries that do not cross a segment boundary and
$\mathbf{M}^{\mathrm{wrap}}_t$ keeps entries that wrap within the same segment.
When only the first $C$ segments are active, $\Phi^\delta_C$ denotes the same rotation-and-mask construction restricted to the first $C\ell_{\mathrm{seg}}$ slots.
These local maps use rotations, plaintext masks, and additions only.

\runinhead{Packing and Online Cost.} The projection kernel evaluates plaintext-weight maps of the form
\[
  Y=XW,
  \qquad
  X\in\mathbb{R}^{T\times d_{\mathrm{in}}},
  \quad
  W\in\mathbb{R}^{d_{\mathrm{in}}\times d_{\mathrm{out}}},
\]
where $T\le \ell_{\mathrm{seg}}$ is the token block stored inside each segment.
Let $C$ be the number of active feature segments in one ciphertext.
Input block $g$ is packed as
\[
  \mathbf{x}^{(g)}_{\operatorname{idx}(r,c)}
  =
  X[r,gC+c],
  \qquad
  c=0,\ldots,C-1,
\]
with out-of-range entries set to zero.
Output block $b$ uses the same convention for columns $bC,\ldots,bC+C-1$ of $Y$.
If $d_{\mathrm{in}}$ or $d_{\mathrm{out}}$ exceeds $C$, the matrix is split across ciphertext blocks.

\system{} uses the standard complex BSGS plaintext-ciphertext matrix multiplication kernel for this packing, which pairs two real input blocks in the real and imaginary CKKS channels
\[
  \langle\widetilde{\mathbf{x}}^{(u)}\rangle
  =
  \langle
    \mathbf{x}^{(2u)}
    +
    \mathrm{i}\mathbf{x}^{(2u+1)}
  \rangle .
\]
Choose BSGS parameters $n_1n_2=C$.
For output block $b$ and giant step $p$, the kernel forms
\[
  \langle\widetilde{\mathbf{c}}^{(b)}_p\rangle
  =
  \sum_{u}
  \sum_{q=0}^{n_1-1}
  \Phi_C^q
  \bigl(
    \langle\widetilde{\mathbf{x}}^{(u)}\rangle
  \bigr)
  \odot
  \widetilde{\mathbf{D}}^{(b)}_{u,p,q},
\]
where $\widetilde{\mathbf{D}}^{(b)}_{u,p,q}$ is the pre-encoded plaintext diagonal containing the two real weight slices for paired input blocks.
The real output accumulator is extracted as
\[
  \langle\mathbf{c}^{(b)}_p\rangle
  =
  \frac{1}{2}
  \left(
    \langle\widetilde{\mathbf{c}}^{(b)}_p\rangle
    +
    \overline{
      \langle\widetilde{\mathbf{c}}^{(b)}_p\rangle
    }
  \right),
\]
and the giant steps are folded by segment shifts
\[
  \langle\mathbf{y}^{(b)}\rangle
  =
  \sum_{p=0}^{n_2-1}
  \Phi_C^{pn_1}
  \bigl(
    \langle\mathbf{c}^{(b)}_p\rangle
  \bigr).
\]
All plaintext diagonals are encoded offline and reused across requests.
Figure~\ref{fig:cpmm-mamba} shows a toy projection instance.
Key-switches arise from the rotations in $\Phi_C^q$ and $\Phi_C^{pn_1}$ and from conjugations used for real-lane extraction, while plaintext diagonal multiplications and additions do not key-switch.

\begin{figure}[t]
\centering
\includegraphics[width=\columnwidth,keepaspectratio,page=9,trim=502pt 237pt 577pt 155.04pt,clip]{figures/FESC_diagrams.pdf}
\caption{Toy BSGS plaintext-ciphertext projection packing for the HEInProj kernel with $T{=}2$, $C{=}4$, and $n_1{=}n_2{=}2$. The toy multiplies $X\in\mathbb{R}^{2\times4}$ by $W\in\mathbb{R}^{4\times4}$ to produce $Y\in\mathbb{R}^{2\times4}$, while the complex path pairs two real feature blocks in the CKKS real and imaginary channels.}
\label{fig:cpmm-mamba}
\end{figure}

\runinhead{Use in \system{}.} HEInProj uses this method to produce $\mathbf{z}$, $\mathbf{xBC}$, and $\mathbf{dt}$.
A separately stored gate matrix $W_z$ is evaluated in the same fused projection call.
HEOutProj uses the same method for $W_{\mathrm{out}}$ after NeRMS.

\subsection{HEScan Kernel Packing and Correctness}
\label{app:hescan-details}

\phantomsection
\label{app:hescan-packing}
\runinhead{Exact State and Slot Maps.} Let $E=HP$ and $c_{\mathrm{state}}=s_{\mathrm{state}}/d_s$, let $\mathcal{E}_{\kappa}$ and $\lambda_{\kappa}$ be the chunk channel set and slot map of Eq.~\ref{eq:hescan-chunk-map}, and for an expanded channel $e=hP+p$ define $h(e)=\lfloor e/P\rfloor$ and $p(e)=e\bmod P$.
For a sequence block, the compact source packings are
\[
\begin{aligned}
  \lambda_a(k,h)
  &=
  kH+h,
  \\
  \lambda_x(k,e)
  &=
  kE+e,
  \\
  \lambda_B(k,g,i)
  &=
  kGd_s+gd_s+i,
  \\
  \lambda_C(k,g,i)
  &=
  kGd_s+gd_s+i.
\end{aligned}
\]
Writing $j=\lambda_{\kappa}(e,i)$, the public broadcasts satisfy
\[
\begin{aligned}
  [\mathrm{br}^{(\kappa)}(\mathbf{a}_k)]_j
  &=
  a_k[h(e)],
  \\
  [\mathrm{br}^{(\kappa)}(\mathbf{x}_k)]_j
  &=
  x_k[h(e),p(e)],
  \\
  [\mathrm{br}^{(\kappa)}(\mathbf{B}_k)]_j
  &=
  B_k[g(h(e)),i],
  \\
  [\mathrm{br}^{(\kappa)}(\mathbf{C}_k)]_j
  &=
  C_k[g(h(e)),i].
\end{aligned}
\]
The reduction map is
\[
  [\Sigma_{d_s}(\mathbf{v}^{(\kappa)})]_e
  =
  \sum_{i=0}^{d_s-1}
  v^{(\kappa)}_{\lambda_{\kappa}(e,i)},
  \qquad
  e\in\mathcal{E}_{\kappa},
\]
with inactive slots masked to zero.
In sequence packing, token $k$ is placed at $(e-\kappa c_{\mathrm{state}})L+k$ before each chunk emits its slice of the global $LHP$ output stream.

\phantomsection
\label{app:resident-hescan-protocol}

\phantomsection
\label{app:hescan-correctness}
\begin{proof}[\textnormal{\textbf{Correctness Proof}}]
For each state chunk $\kappa$, the tuple $\mathcal{T}_k^{(\kappa)}$ represents the affine map induced by token $k$.
Expanding right-after-left composition and applying multiplicative compatibility of $\mathrm{br}^{(\kappa)}$ gives Eq.~\ref{eq:compact_affine_compose}, so the operator is exact function composition and remains associative.

Within one block, the prefix network produces the same map as packet-induced sequential evaluation.
For longer sequences, $\mathcal{S}_{j}^{(\kappa)}$ is the exact map of block $j$, while $\mathcal{G}_{j}^{(\kappa)}$ composes all preceding block maps.
Equation~\ref{eq:resident-block-schedule} therefore gives the global prefix through local position $t$, and replay leaves its value unchanged. Finally, Eq.~\ref{eq:hescan-contract-output} applies the output factor to every exact prefix chunk and sums a partition of the full recurrent coordinates after alignment to the common output packing.
The resulting ciphertext contains the same $\mathbf{m}_k$ as Eq.~\ref{eq:fesc_contract}.
Only CKKS arithmetic and the specified conversion quantization separate the deployed result from this real-arithmetic value.
\end{proof}

\phantomsection
\label{app:fused-fhe-post-ssm}
\runinhead{Fused Post-SSM Path.} In the secure end-to-end pipeline, $\langle \mathbf{m}_k\rangle$ is converted to MPC shares before the skip and gate path.
For the standalone fused FHE component benchmark, the same algebra remains in CKKS.
\begin{align*}
  \langle \mathbf{y}_k\rangle
    &=
    \langle \mathbf{m}_k\rangle
    +
    D_{\mathrm{skip}}\odot\langle \mathbf{x}_{\mathrm{skip},k}\rangle,
    \\
  \langle\tilde{\mathbf{y}}_k\rangle
    &=
    \langle \mathbf{y}_k\rangle\odot\langle\mathbf{g}^{\mathrm{out}}_k\rangle.
\end{align*}

\subsection{Complex-Packed Reduction Accounting}
\label{app:complex-reduction-accounting}

When two ciphertexts encode real slot vectors
\[
  \mathbf{v}^{(0)},\mathbf{v}^{(1)}\in\mathbb{R}^n
\]
with the same $d$-axis packing, \system{} reduces both with one complex-packed rotate-add tree.
It forms
\[
  \langle \boldsymbol{\chi}\rangle
  =
  \langle \mathbf{v}^{(0)}\rangle
  +
  \mathrm{i}\,\langle \mathbf{v}^{(1)}\rangle
  =
  \langle \mathbf{v}^{(0)}+\mathrm{i}\mathbf{v}^{(1)}\rangle ,
\]
where multiplication by $\mathrm{i}$ is a plaintext multiplication.
Linearity of $\Sigma_d$ gives
\[
  \langle \boldsymbol{\Xi}\rangle
  =
  \Sigma_d\bigl(\langle \boldsymbol{\chi}\rangle\bigr)
  =
  \left\langle
    \Sigma_d(\mathbf{v}^{(0)})
    + \mathrm{i}\,\Sigma_d(\mathbf{v}^{(1)})
  \right\rangle .
\]
One conjugation then separates the two real reductions
\begin{align}
  \left\langle\Sigma_d(\mathbf{v}^{(0)})\right\rangle
    &= \tfrac12\bigl(\langle\boldsymbol{\Xi}\rangle+\overline{\langle\boldsymbol{\Xi}\rangle}\bigr),
    \label{eq:complex_reduce_unpack_re}\\
  \left\langle\Sigma_d(\mathbf{v}^{(1)})\right\rangle
    &= -\tfrac{\mathrm{i}}{2}\bigl(\langle\boldsymbol{\Xi}\rangle-\overline{\langle\boldsymbol{\Xi}\rangle}\bigr).
    \label{eq:complex_reduce_unpack_im}
\end{align}
The constants $\mathrm{i}$, $1/2$, and $-\mathrm{i}/2$ are plaintext constants, so the only additional key-switch beyond the rotate-add tree is the conjugation in Eqs.~\ref{eq:complex_reduce_unpack_re}--\ref{eq:complex_reduce_unpack_im}.

For \system{}, the contracted state chunks $\langle\mathbf{v}_k^{(\kappa)}\rangle$ are paired as
\[
  (2\ell,2\ell+1)
\]
before the state-axis reduction, and an odd final chunk falls back to the ordinary real $\Sigma_{d_s}$ tree.
With
\[
  K_s=\left\lceil \frac{HPd_s}{s_{\mathrm{state}}}\right\rceil,
\]
the complex-packed $d_s$-reduction contributes
\begin{align*}
  N_{\mathrm{rot}}^{\mathrm{red}}
    &= L\left\lceil\frac{K_s}{2}\right\rceil\lceil\log_2 d_s\rceil,
    &
  N_{\mathrm{conj}}^{\mathrm{red}}
    &= L\left\lfloor\frac{K_s}{2}\right\rfloor.
\end{align*}
When $K_s$ is even, this is
\[
  L\frac{K_s}{2}\bigl(\lceil\log_2 d_s\rceil+1\bigr)
\]
total reduction key-switches, versus
\[
  LK_s\lceil\log_2 d_s\rceil
\]
for the real-only path.

\subsection{Prefix-Network Selection}
\label{app:scan-comparison}

\system{} supports three parallel prefix network topologies for the SSM scan.
For a padded length
\[
  L' = 2^{\lceil\log_2 L\rceil},
\]
each time step $k$ is represented as an affine tuple $(p_k, s_k)$.
The associative operator $\bullet$ composes two consecutive steps by
\[
  (p_j, s_j)\bullet(p_k, s_k)
  =
  (p_j p_k,\; s_j p_k + s_k).
\]
All three networks compute the inclusive prefix
\[
  t_{1:k} = t_1 \bullet \cdots \bullet t_k
\]
for every $k$.

\runinhead{Kogge--Stone Scan.} Kogge--Stone scan~\cite{kogge1973parallel}
is the fully parallel prefix network used when minimum prefix depth is
prioritized. At each stage
\[
  s = 1, \ldots, \lceil\log_2 L'\rceil,
\]
every position $k \geq 2^{s-1}$ combines with the position $2^{s-1}$ steps to its left using values from the previous stage
\[
  t_k^{(s)} \;\leftarrow\; t_{k - 2^{s-1}}^{(s-1)} \;\bullet\; t_k^{(s-1)}, \qquad k \geq 2^{s-1}.
\]
The stride doubles each stage, so after $\lceil\log_2 L'\rceil$ stages every position $k$ accumulates the full prefix $t_{1:k}$.
At stage $s$, the active-position count is
\[
  L' - 2^{s-1}.
\]
The total number of compositions is therefore
\[
  \sum_{s=1}^{\log_2 L'}(L' - 2^{s-1})
  =
  L'\log_2 L' - (L'-1).
\]

\runinhead{Sklansky Scan.} Sklansky scan~\cite{sklansky1960conditional}
keeps logarithmic depth by broadcasting each lower-half prefix across an aligned
upper half. At stage $s$, positions are grouped into aligned blocks of $2^s$.
Every position $k$ in the upper half of block $b$,
\[
  k\in
  [b\cdot 2^s + 2^{s-1},\,(b+1)\cdot 2^s),
\]
combines with the last position of the lower half
\[
  t_k^{(s)} \;\leftarrow\; t_{b\cdot 2^s + 2^{s-1} - 1}^{(s-1)} \;\bullet\; t_k^{(s-1)}.
\]
Exactly $L'/2$ positions update per stage, yielding
\[
  \frac{L'}{2}\lceil\log_2 L'\rceil
\]
total compositions at the same depth $\lceil\log_2 L'\rceil$ as Kogge--Stone but with half the work.
Fan-out is explicit since one source, the last element of each lower half, supplies $2^{s-1}$ destinations simultaneously.

\runinhead{Brent--Kung Scan.} Brent--Kung scan~\cite{brent1982regular}
trades an extra logarithmic pass for fewer composition nodes and lower fan-out.
The network proceeds in two phases.
During the up-sweep,
\[
  s = 1, \ldots, \lceil\log_2 L'\rceil,
\]
every $2^s$-aligned position $k$ combines with position $k - 2^{s-1}$
\[
  t_k^{(s)} \;\leftarrow\; t_{k-2^{s-1}}^{(s-1)} \;\bullet\; t_k^{(s-1)}, \qquad 2^s \mid k.
\]
After the up-sweep, only positions at power-of-two multiples carry complete prefixes. The remaining ``gap'' positions hold only short partial prefixes from the up-sweep stage at which they were last touched.
The down-sweep uses stages
\[
  s = \lceil\log_2 L'\rceil - 1, \ldots, 1
\]
and propagates the missing prefixes by having the rightmost element of every aligned lower half pass its accumulated prefix to the adjacent gap position
\[
  t_{k + 2^{s-1}}^{(\mathrm{dn},s)} \;\leftarrow\; t_k^{(\mathrm{dn},s+1)} \;\bullet\; t_{k+2^{s-1}}^{(\mathrm{up},s)}, \qquad 2^s \mid k,\; 2^{s+1} \nmid k.
\]
The up-sweep performs $L'-1$ compositions, and the down-sweep performs
\[
  L' - 1 - \log_2 L'.
\]
The total composition count is
\[
  2(L'-1) - \log_2 L'
\]
at depth
\[
  2\lceil\log_2 L'\rceil - 1.
\]

Figure~\ref{fig:scan-nets-compare} illustrates the computations of the Kogge--Stone and Sklansky networks at $L=8$, with the Brent--Kung wiring shown in Figure~\ref{fig:scan-tree}.
A filled circle at position $k$ and stage $s$ marks an affine composition $\bullet$, with the diagonal line carrying the cross-input from position $j<k$ at stage $s{-}1$ and the vertical wire carrying the self-input from stage $s{-}1$.

\begin{figure}[t]
\centering
\includegraphics[width=0.6\columnwidth,keepaspectratio,page=6,trim=520pt 105pt 569pt 118.04pt,clip]{figures/FESC_diagrams.pdf}
\caption{Illustrative diagrams for the Kogge--Stone and Sklansky parallel prefix networks at $L=8$.}
\label{fig:scan-nets-compare}
\end{figure}

\subsection{Multiplicative Depth Budget}
\label{app:bootstrap-boundary}

The scan is evaluated without bootstrapping under the FHE-standard 128-bit coefficient-modulus budgets, which are $881$ bits at ring degree $N=32{,}768$ and $1{,}782$ bits at $N=65{,}536$~\cite{bossuat2021efficient}.
A chain of the form $[60,40^{\times k},60]$ consumes $120+40k$ bits, and Table~\ref{tab:ckks-level-audit} reports the deployed-level audit using $N=32{,}768$ for the short-scan rows and $N=65{,}536$ for the long-scan rows, where every deployed length leaves a positive coefficient-modulus margin.

\begin{table}[ht]
\centering
\caption{CKKS level and coefficient-modulus budget.}
\label{tab:ckks-level-audit}
\scriptsize
\setlength{\tabcolsep}{4pt}
\renewcommand{\arraystretch}{1.1}
\begin{adjustbox}{max width=\columnwidth}
\begin{tabular}{@{}c|c|c|c|c@{}}
\toprule
\rowcolor{tabheader}
$L$ & Ring $N$ & Data levels & Chain bits & Margin \\
\midrule
128  & $32{,}768$ & 17 & 800  & $+81$  \\
256  & $32{,}768$ & 19 & 880  & $+1$   \\
512  & $65{,}536$ & 21 & 960  & $+812$ \\
1024 & $65{,}536$ & 23 & 1040 & $+732$ \\
2048 & $65{,}536$ & 22 & 1000 & $+772$ \\
\bottomrule
\end{tabular}
\end{adjustbox}
\end{table}

The projection stages use independent shallow chains and remain at $N=32{,}768$. The larger long-scan ring is used only for the non-bootstrapped HEScan kernel whose depth exceeds the short-ring budget.

% ============================================================

\section{Hybrid Protocols and Security}
\label{app:hybrid-security}

\subsection{CKKS--MPC Boundary Protocols}
\label{app:conversion}

This appendix specifies the CKKS--MPC conversion boundary used by \system{}.
\system{} instantiates the complex-lane conversions of EncFormer~\cite{zhu2026encformer}, whose masking and share-modulus conversion follow BLB and SiRNN~\cite{xu2025blb,rathee2021sirnn}.
The conversion primitive, correctness argument, and semi-honest security are inherited from those protocols.
The \system{} contribution at this boundary is the compact factorized payload and its placement in the selective-SSM pipeline.
The algorithms below state the exact interfaces and parameters used by the deployed system.

\runinhead{Boundary Interface.} At a boundary, a CKKS ciphertext at scale $\Delta$ decrypts and decodes to a packed slot vector
\[
  \mathbf{u}+\mathrm{i}\mathbf{v}\in\mathbb{C}^{n},
\]
where the real lane $\mathbf{u}$ and imaginary lane $\mathbf{v}$ are fixed-point vectors whose centered representatives fit in $\mathbb{Z}_{2^\ell}$.
The complex conversion protocols are
\[
  \left\{
  \begin{aligned}
    \Pi_{\mathrm{C2M}}^{\mathbb{C}}\bigl(\langle \mathbf{u}+\mathrm{i}\mathbf{v}\rangle\bigr)
    &\rightarrow
    \bigl([[\mathbf{u}]]_{2^\ell},[[\mathbf{v}]]_{2^\ell}\bigr), \\
    \Pi_{\mathrm{M2C}}^{\mathbb{C}}\bigl([[\mathbf{u}]]_{2^\ell},[[\mathbf{v}]]_{2^\ell}\bigr)
    &\rightarrow
    \langle \mathbf{u}+\mathrm{i}\mathbf{v}\rangle .
  \end{aligned}
  \right.
\]
The real-only case is obtained by setting the imaginary lane to zero.

\runinhead{CKKS-to-MPC Conversion.} Let $q=Q_{\mathrm{conv}}$ be the CKKS boundary modulus and let $\hat{\mathbf{p}}=\mathrm{Encode}(\mathbf{u}+\mathrm{i}\mathbf{v},\Delta)$ be the plaintext-ring element encrypted in $\langle\mathbf{m}\rangle$.
Following the secure polynomial-ring masking approach of BLB~\cite{xu2025blb} and EncFormer~\cite{zhu2026encformer}, the server samples a uniform mask
\[
  \hat{\mathbf{r}}\leftarrow R_q
\]
directly in the plaintext ring, not by encoding a random slot vector.
The client decrypts only the masked plaintext-ring element
\[
  \hat{\mathbf{p}}+\mathbf{e}+\hat{\mathbf{r}}\pmod q,
\]
where $\mathbf{e}$ is the CKKS decryption error.
The server keeps $-\hat{\mathbf{r}}$ as its share.
The parties then convert the resulting additive shares from modulus $q$ to $\mathbb{Z}_{2^\ell}$ and decode the two complex lanes locally.

\begin{algorithm}[t]
\caption{Complex CKKS-to-MPC conversion instantiated from EncFormer~\cite{zhu2026encformer}, Algorithm~3}
\label{alg:complex-ckks-to-mpc}
\small
\begin{algorithmic}[1]\raggedright
\Require $P_1$ holds $\langle\mathbf{m}\rangle\in A^2_{N,q}$ at scale $\Delta$
\Ensure Each $P_b$ receives shares $[[\mathbf{u}]]^b_{2^\ell}$ and $[[\mathbf{v}]]^b_{2^\ell}$
\State $P_1$ samples $\hat{\mathbf{r}}\leftarrow R_q$ uniformly.
\State $P_1$ sends $\langle\mathbf{d}\rangle\gets \langle\mathbf{m}\rangle+\hat{\mathbf{r}}$ to $P_0$.
\State $P_1$ sets $[[\hat{\mathbf{t}}]]^q_1\gets-\hat{\mathbf{r}}$.
\State $P_0$ decrypts and sets $[[\hat{\mathbf{t}}]]^q_0\gets\mathrm{Dec}(\langle\mathbf{d}\rangle)$.
\State $P_0,P_1$ invoke $[[\hat{\mathbf{t}}]]_{2^\ell}\gets \Pi_{\mathrm{Field2Ring}}([[\hat{\mathbf{t}}]]_q)$.
\State Each $P_b$ decodes $\mathbf{z}_b\gets\mathrm{Decode}([[\hat{\mathbf{t}}]]^b_{2^\ell})$.
\State $P_b$ outputs $[[\mathbf{u}]]^b_{2^\ell}\gets\Re(\mathbf{z}_b)$ and $[[\mathbf{v}]]^b_{2^\ell}\gets\Im(\mathbf{z}_b)$.
\end{algorithmic}
\end{algorithm}

The two plaintext-ring shares reconstruct the masked value because
\[
  [[\hat{\mathbf{t}}]]^q_0+[[\hat{\mathbf{t}}]]^q_1
  =
  \hat{\mathbf{p}}+\mathbf{e}
  \pmod q .
\]
The ring conversion preserves the centered representative except with statistical error at most $2^{-\sigma_{\mathrm{stat}}}$.
Since CKKS decoding is linear before the final fixed-point rounding, local decoding of the two plaintext-ring shares gives additive shares of the decoded real and imaginary lanes.
Correctness therefore holds up to CKKS decoding error, fixed-point rounding, and the statistical error of the ring conversion.

\runinhead{MPC-to-CKKS Conversion.} For the reverse direction, the parties locally center-lift their fixed-point shares, encode them as plaintext-ring shares, convert those shares to the CKKS boundary modulus, and let the client encrypt one share.
The server adds its own plaintext share to obtain an encryption of the sum.

\begin{algorithm}[t]
\caption{Complex MPC-to-CKKS conversion instantiated from EncFormer~\cite{zhu2026encformer}, Algorithm~4}
\label{alg:complex-mpc-to-ckks}
\small
\begin{algorithmic}[1]\raggedright
\Require Shares $[[\mathbf{u}]]_{2^\ell}$ and $[[\mathbf{v}]]_{2^\ell}$ at scale $\Delta$
\Ensure $P_1$ receives $\langle\mathbf{m}\rangle\in A^2_{N,q}$
\State Each $P_b$ center-lifts its shares to $\tilde{\mathbf{u}}_b,\tilde{\mathbf{v}}_b$.
\State Each $P_b$ encodes $[[\hat{\mathbf{t}}]]^{2^\ell}_b \gets \mathrm{Encode}(\tilde{\mathbf{u}}_b+\mathrm{i}\tilde{\mathbf{v}}_b,\Delta)$.
\State $P_0,P_1$ invoke $[[\hat{\mathbf{t}}]]^q\gets \Pi_{\mathrm{Ring2Field}}([[\hat{\mathbf{t}}]]^{2^\ell})$.
\State $P_0$ encrypts and sends $\langle\mathbf{c}\rangle\gets \mathrm{Enc}_{pk}([[\hat{\mathbf{t}}]]^q_0)$ to $P_1$.
\State $P_1$ outputs $\langle\mathbf{m}\rangle\gets \langle\mathbf{c}\rangle+[[\hat{\mathbf{t}}]]^q_1$.
\end{algorithmic}
\end{algorithm}

The encoded plaintext shares satisfy
\[
  [[\hat{\mathbf{t}}]]^q_0+[[\hat{\mathbf{t}}]]^q_1
  \equiv
  \mathrm{Encode}(\mathbf{u}+\mathrm{i}\mathbf{v},\Delta)
  \pmod q
\]
except for the statistical error of $\Pi_{\mathrm{Ring2Field}}$ and the fixed public rounding used by the encoder.
After the server adds its plaintext share to the encrypted client share, the resulting ciphertext decrypts to the encoded sum up to CKKS encryption noise.

Correctness and semi-honest security follow from the complex-lane conversion analysis of EncFormer~\cite{zhu2026encformer} and the assumed security of the share-extension primitive.
In $\Pi_{\mathrm{C2M}}^{\mathbb{C}}$, the client decrypts a plaintext-ring value masked by fresh uniform server randomness.
In $\Pi_{\mathrm{M2C}}^{\mathbb{C}}$, the server receives only a CKKS encryption of the client's share.
With public tensor shapes and the fixed \system{} schedule, the instantiated conversions add no leakage beyond the public parameters, message sizes, schedule-induced timing, and abort status already included in the functionality of Theorem~\ref{thm:main-security}.

\phantomsection\label{app:ring-field}
\runinhead{Share-Modulus Conversion.} We record the SiRNN/BLB-style share-modulus instantiation used by \system{} for the inherited conversion boundary.
The conversion protocols use additive shares over two moduli.
For any modulus $M$, write $[[w]]_M$ for a two-party additive sharing with $w_0+w_1=w\bmod M$.
Let $\mathrm{cl}_M(w)$ be the centered representative in $(-M/2,M/2]$.
We use the standard share-extension primitive $\Pi_{\mathrm{Ext}}$ from SiRNN/BLB-style conversion protocols~\cite{rathee2021sirnn,xu2025blb}.

For field-to-ring conversion, $\Pi_{\mathrm{Field2Ring}}$ maps $[[w]]_q$ to $[[w]]_{2^\ell}$.
The parties invoke
\[
  \Pi^{q,2^{\ell'}}_{\mathrm{Ext}},
  \qquad
  \ell'\ge \max(\lceil\log_2 q\rceil,\ell),
\]
to obtain shares representing $\mathrm{cl}_q(w)$ in $\mathbb{Z}_{2^{\ell'}}$, except with error at most $2^{-\sigma_{\mathrm{stat}}}$.
They then locally reduce shares modulo $2^\ell$.

For ring-to-field conversion, $\Pi_{\mathrm{Ring2Field}}$ maps $[[w]]_{2^\ell}$ to $[[w]]_q$.
The parties first extend to $\mathbb{Z}_{2^{\ell+\sigma_{\mathrm{stat}}}}$.
Let the extended shares be $w'_0,w'_1$.
They locally output
\[
  [[w]]^{q}_0=w'_0\bmod q,
  \qquad
  [[w]]^{q}_1=(w'_1-2^{\ell+\sigma_{\mathrm{stat}}})\bmod q .
\]
Except with probability at most $2^{-\sigma_{\mathrm{stat}}}$, the two new shares reconstruct the centered value modulo $q$.

\phantomsection\label{app:modtrim}
\runinhead{Adopted Modulus Trimming.} Following EncFormer~\cite{zhu2026encformer}, \system{} modulus-switches a ciphertext to the shortest admissible boundary modulus before CKKS-to-MPC conversion.
A CKKS ciphertext at a boundary need not retain levels for future CKKS multiplications before it enters MPC.
When the current modulus is larger than needed, $P_1$ modulus-switches the ciphertext down to a boundary modulus $q=Q_{\mathrm{conv}}$ satisfying
\[
  \log_2 q \ge \ell+\sigma_{\mathrm{stat}}+1,
  \qquad
  q/2 > \Delta B_{\max}.
\]
The first condition gives statistical slack for share conversion.
The second prevents plaintext-ring wraparound for boundary magnitude $B_{\max}$.
This trimming changes only the transmitted ciphertext size.
It does not change the converted plaintext value, the MPC ring, or the conversion functionality.

\subsection{MPC Nonlinear Calibration and Cost}
\label{app:mpc-nonlinear-details}

Section~\ref{sec:mpc} defines the four MPC nonlinear protocols.
This appendix complements the task-level sweep by reporting the numerical approximations used by each protocol together with the online communicating primitives required by each candidate configuration.
All approximation domains, coefficients, thresholds, bucket boundaries, and iteration counts are fixed before inference and reused across all tasks and sequence lengths.

\runinhead{Approximation Calibration.} MoSiLU and MoSoft share the even-residual construction from \S\ref{sec:mpc-mosilu-mosoft}, and both use threshold $\tau=4$ with Remez minimax polynomials fitted over $[0,\tau^2]$.
Degree four attains maximum errors of $0.072$ for MoSiLU and $0.018$ for MoSoft, while degree six provides no further improvement under the fixed tail rule.
NeRMS uses $K=8$ public initializer buckets and one Newton iteration, with the bucket boundaries and initializers calibrated once from the variance range and then held fixed across datasets and sequence lengths.
MMExp exploits the Mamba decay domain $\Delta_k[h]A_h\leq0$, where $A_h=-\exp(A_{\log,h})$, and uses a degree-four minimax approximation of $\exp(z)$ over $z\in[-8,0]$.
Figure~\ref{fig:mpc_approx_accuracy} compares the selected approximations with their exact functions, and Figure~\ref{fig:mpc_approx_accuracy_err} reports the corresponding pointwise absolute errors and the numerical tradeoffs among the candidate degrees and Newton-step counts.

\begin{figure}[t]
\centering
\includegraphics[width=\linewidth]
{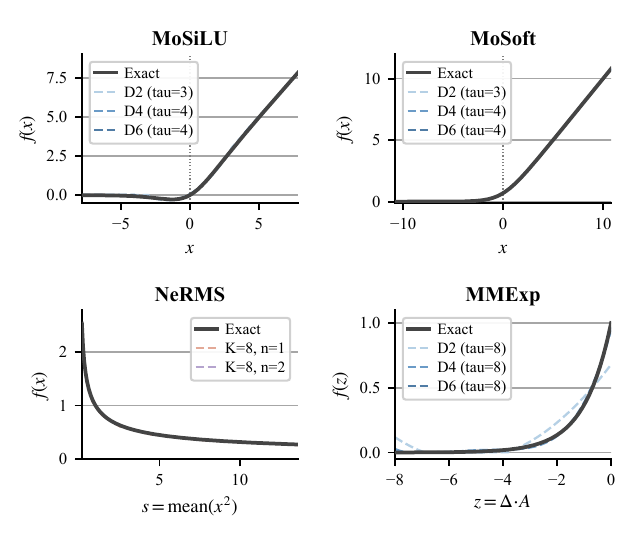}
\caption{Exact functions and calibrated approximations for the MPC nonlinear protocols.}
\label{fig:mpc_approx_accuracy}
\end{figure}

\begin{figure}[t]
\centering
\includegraphics[width=\linewidth]
{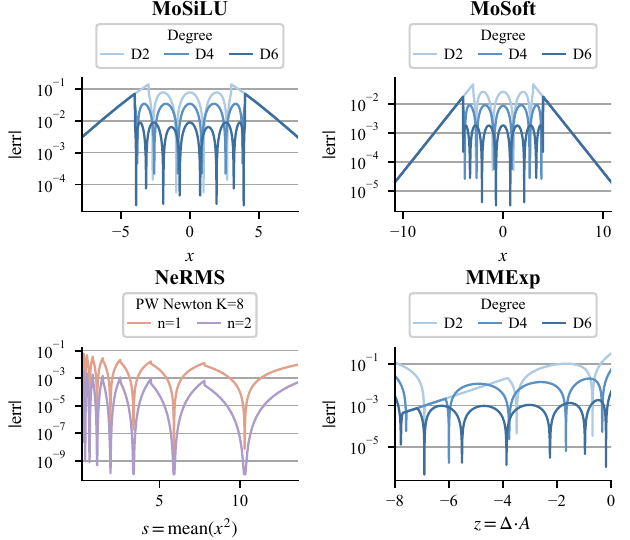}
\caption{Pointwise absolute error over the calibrated approximation domains.}
\label{fig:mpc_approx_accuracy_err}
\end{figure}

\runinhead{Online Primitive Cost.} Table~\ref{tab:mpc_deployed_counts} counts only the communicating primitives defined in Table~\ref{tab:mpc_ops}, since additions, reductions, and public-coefficient operations are local.
FP prod.\ denotes a secure fixed-point product, Cmp.\ a secure comparison, and Mux a secret bit-value selection.
Settings D$d$ use polynomial degree $d$, while N$t$ use $t$ Newton iterations.
MoSiLU and MoSoft have different public coefficients but the same online protocol graph.

\begin{table}[t]
\centering
\caption{Per-output online communicating primitive counts for the nonlinear configurations.}
\label{tab:mpc_deployed_counts}
\scriptsize
\setlength{\tabcolsep}{3.5pt}
\renewcommand{\arraystretch}{1.10}
\begin{tabular}{@{}l|l|l|ccc@{}}
\toprule
\rowcolor{tabheader}
Protocol & Setting & Unit & FP prod. & Cmp. & Mux \\
\midrule
MoSiLU \& MoSoft & D2 & element & $1$ & $2$ & $2$ \\
\rowcolor{oursrow}
MoSiLU \& MoSoft & D4 & element & $2$ & $2$ & $2$ \\
MoSiLU \& MoSoft & D6 & element & $3$ & $2$ & $2$ \\
\midrule
\rowcolor{oursrow}
NeRMS & N1 & token vector & $2$ & $7$ & $0$ \\
NeRMS & N2 & token vector & $5$ & $7$ & $0$ \\
NeRMS & N3 & token vector & $8$ & $7$ & $0$ \\
\midrule
MMExp & D2 & head & $3$ & $1$ & $1$ \\
\rowcolor{oursrow}
MMExp & D4 & head & $5$ & $1$ & $1$ \\
MMExp & D6 & head & $7$ & $1$ & $1$ \\
\bottomrule
\end{tabular}
\end{table}

\subsection{End-to-End \system{} Block Protocol}
\label{app:fesc-block-protocol}

Algorithm~\ref{alg:fesc-block-protocol} instantiates the block pipeline in Figure~\ref{fig:pipeline} as the ordered sequence of FHE kernels, boundary conversions, and MPC nonlinear protocols.

\begin{algorithm}[t]
\caption{\system{} Algorithm for one Block.}
\label{alg:fesc-block-protocol}
\small
\begin{algorithmic}[1]
\Require Encrypted block input $\langle\mathbf X\rangle$ and server-held parameters
\Ensure Encrypted block output $\langle\mathbf o\rangle$
\State $(\langle\mathbf z\rangle,\langle\mathbf{xBC}\rangle,\langle\mathbf{dt}\rangle)
       \gets\mathsf{HEInProj\ kernel}(\langle\mathbf X\rangle)$
\State $\langle\boldsymbol\eta\rangle\gets\mathsf{HEConv\ kernel}(\langle\mathbf{xBC}\rangle)$
\State $([[\mathbf z]],[[\mathbf{dt}]],[[\boldsymbol\eta]])
       \gets\Pi_{\mathsf{C2M}}^{\mathbb C}(\langle\mathbf z\rangle,\langle\mathbf{dt}\rangle,\langle\boldsymbol\eta\rangle)$
\State $[[\boldsymbol\xi]]\gets\Pi_{\mathsf{MoSiLU}}([[\boldsymbol\eta]])$
\State $([[\mathbf x^{\mathrm{raw}}]],[[\mathbf B]],[[\mathbf C]])\gets\mathsf{Split}([[\boldsymbol\xi]])$
\State $[[\boldsymbol\Delta]]\gets\Pi_{\mathsf{MoSoft}}([[\mathbf{dt}]]+[[\mathbf b_\Delta]])$
\State $[[\mathbf a]]\gets\Pi_{\mathsf{MMExp}}([[\boldsymbol\Delta]],[[\mathbf A]])$
\State $[[\mathbf x]]\gets\Pi_\times([[\boldsymbol\Delta]],[[\mathbf x^{\mathrm{raw}}]])$
\State $[[\mathbf g^{\mathrm{out}}]]\gets\Pi_{\mathsf{MoSiLU}}([[\mathbf z]])$
\State $(\langle\mathbf x\rangle,\langle\mathbf a\rangle,\langle\mathbf B\rangle,\langle\mathbf C\rangle)
       \gets\Pi_{\mathsf{M2C}}^{\mathbb C}([[\mathbf x]],[[\mathbf a]],[[\mathbf B]],[[\mathbf C]])$
\State $\langle\mathbf m\rangle\gets\mathsf{HEScan\ kernel}(\langle\mathbf x\rangle,\langle\mathbf a\rangle,\langle\mathbf B\rangle,\langle\mathbf C\rangle)$
\State $[[\mathbf m]]\gets\Pi_{\mathsf{C2M}}^{\mathbb C}(\langle\mathbf m\rangle)$
\State $[[\mathbf y]]\gets[[\mathbf m]]+\Pi_\times([[\mathbf D_{\mathrm{skip}}]],[[\mathbf x^{\mathrm{raw}}]])$
\State $[[\widetilde{\mathbf y}]]\gets\Pi_\times([[\mathbf y]],[[\mathbf g^{\mathrm{out}}]])$
\State $\langle\widetilde{\mathbf y}\rangle\gets\Pi_{\mathsf{M2C}}^{\mathbb C}([[\widetilde{\mathbf y}]])$
\State $\langle\mathbf u\rangle\gets\langle\widetilde{\mathbf y}\rangle\odot\langle\widetilde{\mathbf y}\rangle$
\State $[[\mathbf u]]\gets\Pi_{\mathsf{C2M}}^{\mathbb C}(\langle\mathbf u\rangle)$
\State $[[\mathbf s_{\mathrm{RMS}}]]\gets\Pi_{\mathsf{NeRMS}}([[\mathbf u]],HP,\epsilon)$
\State $\langle\mathbf s_{\mathrm{RMS}}\rangle\gets\Pi_{\mathsf{M2C}}^{\mathbb C}([[\mathbf s_{\mathrm{RMS}}]])$
\State $\langle\widehat{\mathbf y}\rangle\gets
       \boldsymbol\gamma_{\mathrm{RMS}}\odot
       (\langle\widetilde{\mathbf y}\rangle\odot\mathrm{br}(\langle\mathbf s_{\mathrm{RMS}}\rangle))$
\State $\langle\mathbf o\rangle\gets\mathsf{HEOutProj\ kernel}(\langle\widehat{\mathbf y}\rangle)$
\State \Return $\langle\mathbf o\rangle$
\end{algorithmic}
\end{algorithm}

\subsection{Security Analysis}
\label{app:security-analysis}

\runinhead{Functionality and Leakage.}
The ideal functionality receives the private input $x$, the private model $\theta$, and public metadata $\mu$, then returns only the deployed inference output to the client.
\[
  \mathcal{F}_{\mathrm{SelSSM}}(x,\theta;\mu)
  \rightarrow
  \left(
    f^{\mathrm{dep}}_{\theta}(x),
    \bot
  \right).
\]
The function $f^{\mathrm{dep}}_{\theta}$ includes the public approximation, fixed-point, and quantization rules defined in \S\ref{sec:prelim_model}.
Table~\ref{tab:threat-model} summarizes the private values, public metadata, and transcript leakage captured by this functionality. Malicious behavior, party collusion, implementation side channels, and data-dependent hardware leakage are outside the scope of this analysis.

\begin{table}[t]
\centering
\caption{Privacy and leakage of the deployed protocol.}
\label{tab:threat-model}
\scriptsize
\setlength{\tabcolsep}{3pt}
\renewcommand{\arraystretch}{1.10}
\begin{tabularx}{\columnwidth}{@{}>{\raggedright\arraybackslash}p{0.4\columnwidth}|X@{}}
\toprule
\rowcolor{tabheader}
Object & Status \\
\midrule
Client input and secret key
& Private to $P_0$ \\

Model weights
& Private to $P_1$ except through the authorized output \\

Selective factors and scan states
& Private ciphertexts or additive shares \\

CKKS public and evaluation keys
& Available to $P_1$ \\

Final output
& Revealed only to $P_0$ \\

Public metadata
& Architecture, dimensions, length, crypto parameters, packing, block size, and schedule \\

Observable transcript
& Public message count, dimensions, and abort status \\
\bottomrule
\end{tabularx}
\end{table}

The explicit protocol leakage is
\[
  \mathsf{Leak}
  =
  \left(
    \mu,\,
    \{|M_j|\}_{j=1}^{J},\,
    \mathsf{abort}
  \right),
\]
where $J$ is the number of composed protocol components and the number and dimensions of all messages are determined by the public schedule.
No token value, selective factor, recurrent state, comparison result, bucket index, clipping decision, or activation-region decision is opened.

\phantomsection
\label{app:approx-nonlinear-security}
\runinhead{Component Security.} CKKS evaluation blocks reveal no encrypted client value under CKKS semantic security.
The server-held plaintext weights used by these blocks are part of $P_1$'s private input and are not public metadata.

At a CKKS-to-MPC boundary, the client decrypts only
\[
  \hat v + e + \hat r
  \pmod{Q_{\mathrm{conv}}},
\]
where $\hat v$ is the encoded value, $e$ is the CKKS approximation error, and $\hat r$ is uniform over the conversion ring.
The decrypted value is statistically independent of $\hat v$.
At an MPC-to-CKKS boundary, the server receives only an encryption of the client's additive share and combines it with its own share under encryption.

MoSiLU, MoSoft, MMExp, and NeRMS are fixed compositions of $\Pi_{\times}$, $\Pi_{\mathrm{cmp}}$, $\Pi_{\mathrm{mux}}$, truncation, and the boundary protocols, with public coefficients, thresholds, bucket boundaries, iteration counts, tensor shapes, and schedules.
Comparison and selection bits remain shared under a fixed message schedule, while approximation and fixed-point errors are included in the deployed functionalities and do not introduce additional leakage.

\phantomsection
\label{app:pipeline-security}
\begin{proof}[\textnormal{\textbf{End-to-End Security}}]
Fix a corrupted party $P_b$ for $b\in\{0,1\}$.
We construct a simulator $\mathcal{S}_b$ from the corrupted party's input, authorized output, and public leakage.

For a corrupted server, CKKS semantic security hides the client input and all ciphertext intermediates.
The MPC transcripts are simulatable from the server's model input and the prescribed interfaces of MPC functionalities, while MPC-to-CKKS conversion reveals only encryptions of client shares.

For a corrupted client, the model weights remain server-local.
Every intermediate CKKS decryption occurs inside CKKS-to-MPC conversion after the ciphertext has been masked by fresh server randomness.
The resulting value is statistically independent of the underlying intermediate, while the remaining MPC transcripts are simulatable from the client input and prescribed outputs.

The protocol is a fixed sequential composition $\Pi_{\mathsf{Infer}}=\Pi_1\circ\cdots\circ\Pi_J$ of CKKS blocks, MPC blocks, and conversion boundaries.
Each component admits a simulator under the assumptions of Theorem~\ref{thm:main-security}.
Let $\mathsf{H}_j$ denote the execution in which the first $j$ component transcripts are simulated.
Then
\[
  \mathsf{H}_{j-1}\equiv_c\mathsf{H}_j
  \qquad
  \text{for }j=1,\ldots,J .
\]
The real execution is computationally indistinguishable from the ideal execution of $\mathcal{F}_{\mathrm{SelSSM}}$ with leakage $\mathsf{Leak}$.
\end{proof}

\section{Cost and Resource Accounting}
\label{app:cost-accounting}

\subsection{HEScan Kernel Work and Residency}
\label{app:hescan-cost}
\label{app:hescan-work-accounting}

The HEScan kernel work and residency analysis uses the Brent--Kung affine-composition count
\[
  C_{\mathrm{BK}}(n)
  =
  2n-\log_2 n-2
\]
for a power-of-two scan of size $n$.
For block $j$, let $\overline B_j=2^{\lceil\log_2 B_j\rceil}$ and $\overline K_{\mathrm{blk}}=2^{\lceil\log_2 K_{\mathrm{blk}}\rceil}$.
The two-pass resident schedule executes
\[
  C_{\mathrm{res}}
  =
  2\sum_{j=0}^{K_{\mathrm{blk}}-1}
    C_{\mathrm{BK}}(\overline B_j)
  +
  C_{\mathrm{BK}}(\overline K_{\mathrm{blk}})
  +
  \sum_{j=1}^{K_{\mathrm{blk}}-1} B_j .
\]
The terms account for block-summary construction and replay, the summary scan, and carry application.

We charge two ciphertext products for every composition and state chunk. Update construction and output contraction each add one product per token and chunk, giving
\[
  N_{\mathsf{ctmul}}
  =
  2K_sC_{\mathrm{res}}
  +
  2LK_s
  =
  O(LK_s).
\]

Let $R_{\mathrm{comp}}$, $R_{\mathrm{build}}$, and $R_{\mathrm{out}}$ denote the rotations required by affine composition, update construction, and contraction with state-axis reduction.
The aggregate key-switch count is
\[
  N_{\mathsf{ks}}
  =
  N_{\mathsf{ctmul}}
  +
  K_sC_{\mathrm{res}}R_{\mathrm{comp}}
  +
  LK_s
  \left(
    R_{\mathrm{build}}+R_{\mathrm{out}}
  \right)
  +
  N_{\mathsf{conj}}^{\mathrm{pair}} .
\]
For fixed model dimensions and slot maps, $N_{\mathsf{ks}}=O(LK_s)$, and Appendix~\ref{app:complex-reduction-accounting} gives the instantiated rotation and conjugation counts.

With $D_{\mathrm{BK}}(n)=2\lceil\log_2 n\rceil-1$, the multiplicative dependency depth satisfies
\[
  D_{\mathrm{scan}}
  \le
  D_{\mathrm{BK}}(B_{\mathrm{blk}})
  +
  D_{\mathrm{BK}}(\overline K_{\mathrm{blk}})
  +
  1
  =
  O(\log L).
\]

This protocol-level bound assumes the exclusive parallel-summary scan used by the resident HEScan schedule. Peak ciphertext residency satisfies
\[
  N_{\mathsf{live}}
  \le
  c_{\mathrm{blk}}B_{\mathrm{blk}}
  +
  c_{\mathrm{sum}}K_{\mathrm{blk}}
  +
  K_y
  +
  K_{\mathrm{fac}}
  +
  c_0 .
\]
The state-chunk count
\[
  K_s
  =
  \left\lceil
    \frac{Ed_s}{s_{\mathrm{state}}}
  \right\rceil
\]
multiplies total work but not peak residency because the HEScan kernel processes one state chunk at a time. Peak device memory stays within $1.11\times$ at any given length while $K_s$ varies by $8\times$, so residency follows the resident schedule rather than the model shape.

\subsection{Online Communication}
\label{app:comm-analysis}

The online communication is the sum of CKKS--MPC conversion traffic and MPC protocol traffic.
\[
  C_{\mathrm{online}}
  =
  C_{\mathrm{conv}}
  +
  C_{\mathrm{MPC}} .
\]
For boundary $b$ containing $V_b$ real fixed-point values, complex conversion uses
\[
  K_b
  =
  \left\lceil
    \frac{V_b}{2n_{\mathrm{ckks}}}
  \right\rceil
\]
ciphertexts.
If $S_{\mathrm{ct}}(Q_b)$ denotes the serialized ciphertext size at boundary modulus $Q_b$, then
\[
  C_{\mathrm{conv}}
  =
  \sum_b
  \left(
    K_bS_{\mathrm{ct}}(Q_b)
    +
    C_{\mathrm{ext}}(V_b)
  \right),
\]
where $C_{\mathrm{ext}}$ is the ring-to-field or field-to-ring share-conversion traffic.
Rounding is applied independently at each boundary.

The deployed inbound boundary contains
\[
  V_{\mathrm{in}}
  =
  L(H+HP+2Gd_s)
\]
real values from $(\mathbf x,\mathbf a,\mathbf B,\mathbf C)$, while the outbound boundary contains
\[
  V_{\mathrm{out}}
  =
  LHP
\]
values from the contracted output $\mathbf m$.
The state-shaped tensors $\mathbf p$, $\mathbf s$, and $\mathbf h$ never cross the boundary.

The MPC term follows from the deployed MoSiLU/MoSoft D4, NeRMS N1, and MMExp D4 primitive counts in Appendix~\ref{app:mpc-nonlinear-details}.
Consequently,
\[
  C_{\mathrm{online}}
  =
  O\!\left(
    L(E+Gd_s+H)
  \right),
\]
so the complete online path scales linearly with sequence length and compact factor dimensions, eliminating both token-pair work and dense-state conversion.

% ============================================================

\subsection{Offline Preprocessing}
\label{app:offline-budget}

The MPC protocols consume input-independent Beaver triples and comparison correlations generated before inference by standard two-party OT/VOLE/PCG preprocessing~\cite{boyle2020pcg,pang2024bolt,xu2025blb,huang2022cheetah,kei2025shaft}.
For one layer, the deployed MoSiLU/MoSoft D4, NeRMS N1, and MMExp D4 configuration consumes
\[
\begin{aligned}
  T_{\mathrm{layer}}(L)
  &=
  L\Bigl(
    4(E+2Gd_s)+4E+4H+6H
  \\
  &\qquad
    +E+2E+2
  \Bigr)
  \\
  &=
  L(11E+8Gd_s+10H+2)
\end{aligned}
\]
triples.
For $N_L$ layers,
\[
  T_{\mathrm{Beaver}}(L)
  =
  N_LT_{\mathrm{layer}}(L).
\]
With arithmetic width $\ell$, dealer-based generation transmits
\[
  C_{\mathrm{dealer}}
  =
  \frac{6\ell}{8}
  T_{\mathrm{Beaver}}
\]
bytes in aggregate.
The OT-extension estimate used in Table~\ref{tab:offline_triples} is
\[
  C_{\mathrm{OT}}
  \approx
  300T_{\mathrm{Beaver}}
\]
bytes~\cite{ishai2003extending,gilboa1999two,keller2020mpspdz}.
For \mambabase{} with $E{=}1{,}536$, $H{=}24$, $G{=}1$, $d_s{=}128$, and $N_L{=}12$, Table~\ref{tab:offline_triples} reports the resulting preprocessing volume.

\begin{table}[ht]
\centering
\caption{Offline Beaver-triple budget.}
\label{tab:offline_triples}
\scriptsize
\setlength{\tabcolsep}{5pt}
\renewcommand{\arraystretch}{1.1}
\begin{tabular}{@{}c|cccc@{}}
\toprule
\rowcolor{tabheader}
$L$ & Triples/layer & Total triples & Dealer (GB) & OT (GB) \\
\midrule
$128$    & $2.32$\,M & $27.9$\,M  & $0.92$  & $8.4$  \\
$512$    & $9.30$\,M & $111.6$\,M & $3.68$  & $33.5$ \\
$1{,}024$ & $18.60$\,M & $223.2$\,M & $7.37$  & $67.0$ \\
$2{,}048$ & $37.20$\,M & $446.3$\,M & $14.73$ & $133.9$ \\
\bottomrule
\end{tabular}
\end{table}

\section{Evaluation Methodology and Baselines}
\label{app:evaluation-methodology}

\subsection{Training and Accuracy Evaluation Setup}
\label{app:hyperparams}

The Mamba and BERT controls use the same dataset splits, target lengths, document-prefix policy, seed set, and validation-loss checkpoint selection, while retaining their native tokenizers and fixed architecture-specific training recipes.
For $L>512$, BERT-base copies its final pretrained position vector into the added positions and fine-tunes the extended embedding table at the target length.
The reported long-context accuracy therefore reflects full-context evaluation rather than truncation to $512$ tokens.

\subsection{Baseline Evidence and Provenance}
\label{app:baseline_sources}

\runinhead{Evidence Matrix.}
Table~\ref{tab:baseline-evidence-matrix} follows the system order of Table~\ref{tab:e2e_comparison} and records the evidence source, validated execution scope, paper use, and treatment of unavailable long-sequence results.
Artifact-derived values are used within the scope that we successfully validated.

\begin{table*}[t]
\centering
\caption{Evidence and provenance for baseline comparisons.}
\label{tab:baseline-evidence-matrix}
\scriptsize
\setlength{\tabcolsep}{4pt}
\renewcommand{\arraystretch}{1.10}
\newcommand{\baselinegroup}[2][0pt]{\raisebox{#1}{\makecell[c]{\rotatebox[origin=c]{90}{\scriptsize\itshape #2}}}}
\begin{tabularx}{\textwidth}{@{}c|>{\raggedright\arraybackslash}p{0.1\textwidth}|>{\raggedright\arraybackslash}p{0.11\textwidth}|>{\raggedright\arraybackslash}p{0.14\textwidth}|>{\raggedright\arraybackslash}p{0.15\textwidth}|>{\raggedright\arraybackslash}X@{}}
\toprule
\rowcolor{tabheader}
Type & System & Evidence & Validated scope & Paper use & Long-sequence treatment \\
\midrule
\baselinegroup[-0.70em]{FHE} &
AEGIS~\cite{gong2026aegis}
& Paper reported
& BERT-base at $L{=}128$, $512$, and $2048$
& Long-sequence reference
& Paper-reported long-sequence rows are used at the reported lengths. \\
\midrule
\multirow{3}{*}{\baselinegroup[-2.10em]{MPC}} &
MPCFormer~\cite{li2023mpcformer}
& Paper and artifact
& BERT-base at $L{=}128$, $256$, and $512$
& Latency baseline
& Reproduced rows are scaled to the paper network profile. \\
& SIGMA~\cite{gupta2024sigma}
& Artifact reproduced
& BERT-base at $L{=}128$
& LAN-normalized latency
& Reproduced short-context rows are used, with no validated longer executable path. \\
& SHAFT~\cite{kei2025shaft}
& Artifact reproduced
& BERT-base at $L{=}128$
& LAN-normalized latency
& Reproduced short-context rows are used, with no validated longer executable path. \\
\midrule
\multirow{5}{*}{\baselinegroup[-4.00em]{Hybrid}} &
BOLT~\cite{pang2024bolt}
& Paper reported
& BERT-base at $L{=}128$
& Latency baseline
& Paper-reported short-context rows are used, with no validated long-sequence executable path. \\
& BumbleBee~\cite{lu2025bumblebee}
& Artifact reproduced
& BERT-base at $L{=}128$
& Latency baseline
& Reproduced short-context rows are used, with no validated longer executable path. \\
& BLB~\cite{xu2025blb}
& Artifact reproduced
& BERT-base at $L{=}128$
& Latency and key-switch anchor
& Reproduced short-context rows are used, while longer probes exceed the host-memory threshold or expose no executable path. \\
& EncFormer~\cite{zhu2026encformer}
& Artifact reproduced
& BERT-base at $L{=}128$
& Latency and key-switch anchor
& Reproduced short-context encrypted rows are used, with longer key-switch rows generated by its CKKS simulator. \\
& \cellcolor{oursrow}\system{}
& \cellcolor{oursrow}This work
& \cellcolor{oursrow}SSM at $L{=}128$--$2048$
& \cellcolor{oursrow}Deployed system
& \cellcolor{oursrow}Native encrypted rows are measured at every reported length. \\
\bottomrule
\end{tabularx}
\end{table*}

\runinhead{Artifact and Latency Provenance.}
AEGIS~\cite{gong2026aegis} reports four-A100-40GB BERT-base SST-2 end-to-end latencies directly.
Its one-GPU-equivalent entries at $L{=}128$ and $L{=}512$ multiply the reported four-GPU values by the midpoint $3.844\times$ of the paper's $3.823$--$3.865\times$ four-GPU speedup range.
At $L{=}2048$, AEGIS reports single-GPU memory infeasibility, so Table~\ref{tab:e2e_comparison} marks the one-GPU cell as OOM.
MPCFormer's~\cite{li2023mpcformer} released profiler was built and executed through $L{=}512$, but it times same-node passes on synthetic inputs. Table~\ref{tab:e2e_comparison} therefore anchors on its directly measured $L{=}128$ latency and scales the reproduced $L{=}256$ and $L{=}512$ costs by the $9.45\times$ ratio between that measurement and the profiler's $L{=}128$ cost normalized to the same network profile.
SIGMA~\cite{gupta2024sigma} is reproduced from its artifact and normalized to the paper network profile using its measured online traffic and offline FSS material, while SHAFT~\cite{kei2025shaft} is reproduced from its released short-context artifact and converted to the same network profile using its compute, communication, and round ledger.
BOLT~\cite{pang2024bolt} was built locally, but its two-party inference loop did not complete after initialization, so the paper-reported value is retained.
BumbleBee~\cite{lu2025bumblebee} completed its released $L{=}128$ configuration, and its artifact exposes no validated executable path beyond that length.
BLB~\cite{xu2025blb} was built and executed from its Zenodo release, which provides longer-input options, but the $L{=}256$ and $L{=}512$ runs exceed the memory threshold defined in the paper.
EncFormer~\cite{zhu2026encformer} runs its native encrypted path only through $L{=}128$.
We therefore anchor its key-switch counts on the reproduced artifact at that length and take longer sequences from its simulator ledger, rather than using EncFormer as a long-sequence latency baseline.
CipherPrune~\cite{zhang2025cipherprune} changes the securely executed sequence through token pruning and is therefore discussed as an orthogonal long-context strategy.
The two strategies are summarized in Figure~\ref{fig:related_strategies}, where token pruning shortens the securely executed sequence and multi-GPU placement distributes encrypted attention across devices, so both reduce cost without changing the quadratic scaling.

\begin{figure}[t]
\centering
\includegraphics[width=0.9\columnwidth,keepaspectratio,page=11,trim=630pt 441pt 374pt 232pt,clip]{figures/FESC_diagrams.pdf}
\caption{Orthogonal long-context strategies, where (a) token pruning shortens the secure sequence and (b) multi-GPU placement distributes encrypted attention.}
\label{fig:related_strategies}
\end{figure}

\subsection{Hardware Memory Limits}
\label{app:oom-threshold}

We classify reproduced configurations against the physical memory limits of the evaluation hardware. The device reference is one A100-40GB, a widely deployed accelerator used in AWS \texttt{p4d.24xlarge} instances, and the reproduction host provides $256$\,GB of system memory~\cite{aws_ec2_p4}. Configurations exceeding $40$\,GB of live GPU allocation or $256$\,GB of peak resident host allocation are classified as OOM. This host limit is permissive relative to a \texttt{p4d.24xlarge}, which provides $1{,}152$\,GiB across eight A100s, or $144$\,GiB per GPU, making our limit approximately $1.78\times$ larger than the equal-share reference~\cite{aws_ec2_p4}. We reserve OOM for configurations that fail from memory exhaustion or whose measured live allocation exceeds the corresponding physical limit. 

\subsection{Baseline Key-Switch Accounting}
\label{app:baseline-ks-accounting}

Figure~\ref{fig:ks_scaling} and Table~\ref{tab:ks_full_layer} compare one-layer key-switch counts under a common accounting setting, with one relinearization for each ciphertext multiplication. The deployed long-context HEScan-kernel runtime rows instead use $n_{\mathrm{ckks}}{=}32{,}768$ complex slots. The HEInProj and HEOutProj kernels retain the $16{,}384$-slot configurations in Table~\ref{tab:ckks_config}.

\begin{table}[t]
\centering
\caption{One-layer key-switch counts at $16{,}384$ slots.}
\label{tab:ks_full_layer}
\scriptsize
\setlength{\tabcolsep}{5pt}
\renewcommand{\arraystretch}{1.08}
\begin{tabular}{@{}c|ccc|cc@{}}
\toprule
\rowcolor{tabheader}
& \multicolumn{3}{c|}{Key switches}
& \multicolumn{2}{c}{Ratio to \system{}} \\
\rowcolor{tabheader}
$L$ & \system{} & EncFormer & BLB & EncFormer & BLB \\
\midrule
$128$ & $13.9\mathrm{K}$ & $3.7\mathrm{K}$ & $4.9\mathrm{K}$ & $0.27\times$ & $0.35\times$ \\
$256$ & $38.8\mathrm{K}$ & $21.5\mathrm{K}$ & $22.5\mathrm{K}$ & $0.55\times$ & $0.58\times$ \\
$512$ & $79.2\mathrm{K}$ & $99.2\mathrm{K}$ & $136.0\mathrm{K}$ & $1.25\times$ & $1.72\times$ \\
$1{,}024$ & $160.0\mathrm{K}$ & $552.7\mathrm{K}$ & $948.3\mathrm{K}$ & $3.46\times$ & $5.93\times$ \\
$2{,}048$ & $321.5\mathrm{K}$ & $3.0\mathrm{M}$ & $7.1\mathrm{M}$ & $9.45\times$ & $21.97\times$ \\
\bottomrule
\end{tabular}
\end{table}

At short context lengths, the wider Mamba factor path gives \system{} a higher fixed operation cost, and the ordering reverses at $L{=}512$ as token-pair attention begins to dominate.
At $L{=}2{,}048$, EncFormer and BLB require $9.45\times$ and $21.97\times$ more key-switches, respectively, reflecting quadratic attention growth against the linear scan-contract path.

\subsection{MPCMamba Comparison and Protocol Cost}
\label{app:mpcmamba-protocols}
\label{app:protocol-cost-baseline}

MPCMamba~\cite{yu2025mpcmamba} is the only prior secure Mamba inference system, but it evaluates Vision Mamba under MPC on short image-patch sequences using six RTX~4090 GPUs across two servers.
\system{} instead evaluates encrypted SSM on long-document text using a hybrid CKKS--MPC backend at sequence lengths from $128$ to $8192$.
MPCMamba provides no public artifact, and its workload, protocol family, and hardware differ from \system{}.
We therefore use MPCMamba as a same-family protocol reference rather than a direct end-to-end latency baseline, with the evaluation scope summarized in Table~\ref{tab:mpcmamba-setting}.

\begin{table}[t]
\centering
\caption{Evaluation scope of MPCMamba and \system{}.}
\label{tab:mpcmamba-setting}
\scriptsize
\setlength{\tabcolsep}{4pt}
\renewcommand{\arraystretch}{1.12}
\begin{tabularx}{\columnwidth}{@{}l|X|X@{}}
\toprule
\rowcolor{tabheader}
Axis & MPCMamba~\cite{yu2025mpcmamba} & \system{} \\
\midrule
Workload & Vision classification & Long-document text classification \\
Model & Vision Mamba & Mamba-2 \\
Sequence regime & Image-patch sequences & $128$ to $8192$ text tokens \\
Private inference & MPC only & Hybrid CKKS--MPC \\
Nonlinear protocols & RMSNorm, SiLU, softplus, and exponential & NeRMS, MoSiLU, MoSoft, and MMExp \\
Backend & CrypTen with SEMI-2K & Phantom CKKS with SCI-style MPC \\
Hardware & Two servers with six RTX~4090 GPUs & One to four GPUs in the main evaluation \\
Reported metrics & Accuracy and latency & Accuracy, latency, communication, and memory \\
Artifact & Not publicly available & Paper artifact provided \\
\bottomrule
\end{tabularx}
\end{table}

We reconstruct MPCMamba's nonlinear protocol costs from the published algorithms and parameters. The comparison reports protocol-level operation counts under the shared cost model, complementing the end-to-end latency measurements. MPCMamba uses $n_{\exp}=8$ limit-approximation squarings and $t=3$ Newton or Householder iterations.
Its resulting per-output operation counts are
\[
\begin{aligned}
  C_{\mathrm{SiLU}}
  &= n_{\exp}+2t+4 = 18,
  \\
  C_{\mathrm{Softplus}}
  &= n_{\exp}+t(n_{\exp}+2)+n_{\exp} = 46,
  \\
  C_{\mathrm{RMSNorm}}
  &= 2+4t+n_{\exp} = 22,
  \\
  C_{\exp(\Delta A)}
  &= n_{\exp} = 8.
\end{aligned}
\]

\system{} uses the deployed D4, N1, and D4 configurations reported in Appendix~\ref{app:mpc-nonlinear-details}.
MoSiLU and MoSoft each require two products, two comparisons, and two muxes.
NeRMS requires two products and seven comparisons.
MMExp requires five products, one comparison, and one mux.

\section{Deployment Scaling}
\label{app:deployment-scaling}

This section validates \system{} deployment across runtime stages, GPU counts, model shapes, six accelerator classes, and sequence lengths through $L{=}8192$.

\subsection{End-to-End Runtime Breakdown}
\label{app:runtime-breakdown}

Table~\ref{tab:runtime-breakdown} decomposes the measured $L{=}2{,}048$ $12$-layer latency, where the left panel gives a per-layer stage breakdown on one A100-40GB GPU and the pie panel visualizes the same shares.
The scan compositions dominate at ${\sim}74\%$ of single-GPU per-layer time, with the remaining time split across projections, MPC nonlinear work, CKKS--MPC conversion, and LAN communication.
One-time offline weight encoding for the BSGS projections costs roughly $43$\,s per layer at this length, and is amortized across inferences and excluded from the per-layer online breakdown below.

\begin{table}[t]
\centering
\caption{Runtime breakdown for deployed \mambabase{} at $L{=}2{,}048$.}
\label{tab:runtime-breakdown}
\begin{minipage}[c]{0.58\columnwidth}
\centering
\scriptsize
\setlength{\tabcolsep}{3.5pt}
\renewcommand{\arraystretch}{1.1}
\begin{tabular}{@{}l|c@{}}
\toprule
\rowcolor{tabheader}
Stage, 1 GPU & Time (s/layer) \\
\midrule
MPC RMSNorm                              & $5.3$   \\
HEInProj kernel                          & $15.2$  \\
MPC factored activation                  & $9.8$   \\
HEScan kernel                            & $284.2$ \\
HEOutProj kernel                         & $17.4$  \\
CKKS--MPC conversion                      & $1.5$   \\
LAN transfer and protocol layers         & $52.9$  \\
\midrule
\textbf{Online Total, 1 GPU} & $\mathbf{386.3}$ \\

\bottomrule
\end{tabular}
\end{minipage}\hspace{0.01\columnwidth}
\begin{minipage}[c]{0.39\columnwidth}
\centering
\includegraphics[width=\linewidth]{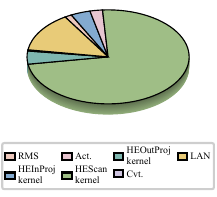}
\end{minipage}
\end{table}

\subsection{Multi-GPU HEScan Kernel Execution}
\label{app:multi-gpu-hescan}

Parallel prefix evaluation separates the associative recurrence operator from
the hardware partition that evaluates it~\cite{blelloch1990prefix,harris2007scan,gu2024mamba}.
After the compact FHE--MPC boundary, the HEScan kernel admits a direct device partition
because distinct state chunks cover disjoint expanded channels. \system{} assigns
complete state chunks to GPUs and keeps every sequence prefix, block summary,
carry, and contraction associated with a chunk on the same device. 

\runinhead{State-Chunk Partitioning.} Let the $K_s$ state chunks be those
defined in \S\ref{sec:ks-he:hescan-packing}. For $N_{\mathrm{gpu}}$ GPUs,
device $g$ receives the contiguous chunk set
\begin{equation}
  \mathcal{Q}_g
  =
  \left\{
    \kappa\ \middle|
    \left\lfloor
      \frac{gK_s}{N_{\mathrm{gpu}}}
    \right\rfloor
    \le
    \kappa
    <
    \left\lfloor
      \frac{(g+1)K_s}{N_{\mathrm{gpu}}}
    \right\rfloor
  \right\}.
  \label{eq:multi-gpu-chunk-partition}
\end{equation}
The sets $\mathcal{Q}_0,\ldots,\mathcal{Q}_{N_{\mathrm{gpu}}-1}$ are disjoint,
cover all state chunks, and differ in size by at most one chunk. Contiguous
assignment preserves the global expanded-channel order and simplifies output
assembly.

Each GPU receives a device-local copy of the common CKKS context, public
encryption key, evaluation keys, and compact factor ciphertexts, while only the
state-packing ciphertexts for chunks in $\mathcal{Q}_g$ are constructed on
device $g$.

\begin{algorithm}[t]
\caption{Multi-GPU HEScan kernel execution}
\label{alg:multi-gpu-hescan}
\small
\begin{algorithmic}[1]
\Require Encrypted factor streams $\langle\mathbf x_k\rangle,\langle\mathbf a_k\rangle,\langle\mathbf B_k\rangle,\langle\mathbf C_k\rangle$
\Require Chunks $K_s$, block size $B_{\mathrm{blk}}$, GPUs $N_{\mathrm{gpu}}$
\Ensure Contracted outputs $\langle\mathbf m_1\rangle,\ldots,\langle\mathbf m_L\rangle$
\State Partition chunks into $\mathcal{Q}_0,\ldots,\mathcal{Q}_{N_{\mathrm{gpu}}-1}$ by Eq.~\ref{eq:multi-gpu-chunk-partition}
\State Replicate the CKKS context, keys, and compact factor ciphertexts
\ForAll{GPUs $g=0,\ldots,N_{\mathrm{gpu}}-1$}
  \State Initialize shard output $\langle\mathbf M_g\rangle\gets0$
  \For{each state chunk $\kappa\in\mathcal{Q}_g$}
    \For{blocks $j=0,\ldots,K_{\mathrm{blk}}-1$}
      \State Build block maps $\{\mathcal{T}_{j,t}^{(\kappa)}\}_t$ with identity padding
      \State $(\{\mathcal{P}_{j,t}^{(\kappa)}\}_t,\mathcal{S}_{j}^{(\kappa)})
             \gets\mathsf{BKScan}(\{\mathcal{T}_{j,t}^{(\kappa)}\}_t)$
      \State Store $\mathcal{S}_{j}^{(\kappa)}$ and release block-local prefixes
    \EndFor
    \State $\{\mathcal{G}_{j}^{(\kappa)}\}_j
           \gets\mathsf{ExclusiveScan}(\{\mathcal{S}_{j}^{(\kappa)}\}_j)$
    \For{blocks $j=0,\ldots,K_{\mathrm{blk}}-1$}
      \State Replay block $j$ to rebuild $\{\mathcal{P}_{j,t}^{(\kappa)}\}_{t<B_j}$
      \State $\{\widehat{\mathcal{P}}_{j,t}^{(\kappa)}\}_{t<B_j}
             \gets\{\mathcal{P}_{j,t}^{(\kappa)}
             \circ\mathcal{G}_{j}^{(\kappa)}\}_{t<B_j}$
      \State $\langle\mathbf M_g\rangle\gets
             \langle\mathbf M_g\rangle+
             \mathsf{Contract}_{d_s}(\widehat{\mathcal{P}}_{j}^{(\kappa)},
             \mathrm{br}^{(\kappa)}(\langle\mathbf C_j\rangle))$
    \EndFor
  \EndFor
\EndFor
\State \Return $\sum_{g=0}^{N_{\mathrm{gpu}}-1}\langle\mathbf M_g\rangle$
\end{algorithmic}
\end{algorithm}

\runinhead{Cross-Device Communication.} Each state chunk stays on one GPU, so
all affine composition, carry propagation, and contraction are device local.
Multi-GPU execution replicates the compact factor ciphertexts once and collects
the contracted shard outputs at the end, with no encrypted collective during
the prefix scan. Since shards occupy disjoint expanded-channel coordinates, their
ciphertext sum reconstructs the single-GPU packed output, or they can be added
as MPC shares before the post-SSM stage. This chunk partition avoids the
token-pair matrix split and prefix-node communication used by multi-GPU
encrypted attention placement~\cite{gong2026aegis}.

\runinhead{Correctness.} For a fixed chunk $\kappa$,
Algorithm~\ref{alg:multi-gpu-hescan} executes the same two-pass resident
schedule defined in Eq.~\ref{eq:resident-block-schedule}. The first pass
computes exact block summaries, the exclusive summary scan computes the exact
incoming carry for every block, and the replay pass composes that carry with
each local prefix before contraction. Distributing the outer loop over
$\kappa$ therefore changes only the device that evaluates each independent
chunk.

Each
device consequently produces a disjoint subset of the terms in
Eq.~\ref{eq:hescan-contract-output}, and summing the shard outputs gives the same
contracted ciphertext as the single-GPU HEScan kernel path. The block size and GPU
count affect work placement and peak residency but not the recurrence being
evaluated.

\runinhead{Scaling Analysis.} Let $f$ denote the fraction of one-GPU
execution time spent in the HEScan kernel. When the non-scan stages remain unsharded and
scan work is balanced across $N_{\mathrm{gpu}}$ devices, the idealized
execution time and speedup are
\begin{equation}
  T_{N_{\mathrm{gpu}}}^{\mathrm{ideal}}
  =
  T_{\mathrm{other}}
  +
  \frac{T_{\mathrm{scan}}}{N_{\mathrm{gpu}}},
  \qquad
  S_{N_{\mathrm{gpu}}}^{\mathrm{ideal}}
  =
  \frac{1}{1-f+f/N_{\mathrm{gpu}}}.
  \label{eq:hescan-amdahl}
\end{equation}
At $L{=}2048$, the HEScan kernel accounts for $73.6\%$ of one-GPU per-layer latency in Table~\ref{tab:runtime-breakdown}, and four-GPU execution achieves a measured $2.15\times$ speedup, closely matching the $2.23\times$ Amdahl prediction from Eq.~\ref{eq:hescan-amdahl}.

\subsection{Model-Shape Scaling}
\label{app:dimension-robustness}

\system{} supports model shapes beyond the deployed Mamba-base configuration, and we validate this breadth by sweeping hidden width $d_{\mathrm{model}}$, head count $H$, and state dimension $d_s$ while retaining the same PhantomFHE CKKS backend, SCI/EzPC-style MPC backend, and blocked-scan execution path.
With active state-packing capacity $s_{\mathrm{state}}{=}16{,}384$, each shape induces the encrypted state-chunk count
\[
  K_s
  =
  \left\lceil
    \frac{HPd_s}{s_{\mathrm{state}}}
  \right\rceil .
\]
\phantomsection
\label{app:ks-scaling-model}
The scan key-switch count follows directly from the Brent--Kung network, where
each node uses two composition key-switches per state chunk, giving
\[
  2K_s\mathrm{BK}(L)
\]
composition key-switches, where
\[
  \mathrm{BK}(L)
  =
  2(L'-1)-\log_2 L',
  \qquad
  L'
  =
  2^{\lceil\log_2 L\rceil}.
\]
Latency therefore scales with the number of resident state chunks, while larger active packings trade larger ciphertexts for fewer chunks, and Table~\ref{tab:shape_scaling_backend} reports both the concrete configurations and an independent shape-sweep latency measurement over all evaluated sequence lengths.

\begin{table}[t]
\centering
\caption{Shape sweep and measured latency across \system{} configurations.}
\label{tab:shape_scaling_backend}
\begin{minipage}[c]{0.5\columnwidth}
\centering
\scriptsize
\setlength{\tabcolsep}{1.5pt}
\renewcommand{\arraystretch}{1.08}
\begin{tabular}{@{}l|cccc@{}}
\toprule
\rowcolor{tabheader}
Config & $d_{\mathrm{model}}$ & $H$ & $d_s$ & Layers \\
\midrule
\system{}-small  & $384$  & $12$ & $64$  & $12$ \\
\system{}-slim   & $768$  & $24$ & $64$  & $12$ \\
\cellcolor{oursrow}\system{}-base & \cellcolor{oursrow}$768$ & \cellcolor{oursrow}$24$ & \cellcolor{oursrow}$128$ & \cellcolor{oursrow}$12$ \\
\system{}-wide   & $1024$ & $32$ & $128$ & $12$ \\
\system{}-state+ & $768$  & $24$ & $256$ & $12$ \\
\system{}-130M   & $768$  & $24$ & $128$ & $24$ \\
\system{}-370M   & $1024$ & $32$ & $128$ & $48$ \\
\bottomrule
\end{tabular}
\end{minipage}\hspace{0.02\columnwidth}
\begin{minipage}[c]{0.46\columnwidth}
\centering
\includegraphics[width=\linewidth]{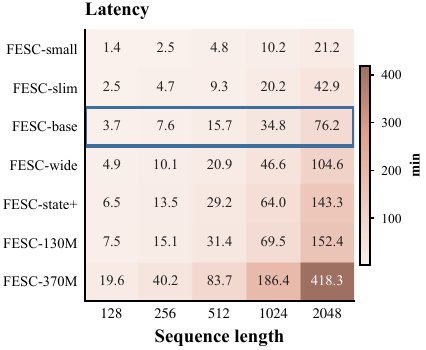}
\end{minipage}
\end{table}

\subsection{Longer-Sequence Deployment}
\label{app:longseq}

The main evaluation establishes \system{} through $L{=}2048$, and this section pushes the system to $L{=}8192$ while preserving the factorized boundary, block-local Brent--Kung scan, and resident execution schedule within the defined OOM budget.

These configurations serialize the block carries instead of scanning the
summaries in parallel, so for resident block size $B_{\mathrm{blk}}$ and
$K_{\mathrm{blk}}=\lceil L/B_{\mathrm{blk}}\rceil$ the scan consumes
\[
  D_{\mathrm{scan}}
  =
  2\left\lceil\log_2 B_{\mathrm{blk}}\right\rceil
  +
  K_{\mathrm{blk}}
\]
multiplicative levels.
The first term is the block-local Brent--Kung depth.
The second term is the carry depth across blocks.
At $L{=}2048$ and $B_{\mathrm{blk}}{=}1024$, this expression gives
$22$ levels and matches the $1000$-bit deployed chain in
Table~\ref{tab:ckks-level-audit}.
The longer-sequence configurations use $N{=}65{,}536$, which provides
$41$ levels within a $1782$-bit modulus budget.

Feasibility requires both the level count and ciphertext residency to remain
within the device limits.
At $L{=}4096$, $B_{\mathrm{blk}}{=}512$ requires $26$ levels and reaches
a measured peak of $37.8$\,GB.
The configuration completes a measured $12$-layer run without bootstrapping at serialized-carry depth $26$, with $748.9$\,s per layer.

At $L{=}8192$, the unrefreshed depth and memory constraints no longer overlap.
With $B_{\mathrm{blk}}{=}512$, the scan requires only $34$ levels, but the
measured $40.4$\,GB footprint exceeds the usable allocator envelope.
Reducing the block to $B_{\mathrm{blk}}{=}256$ lowers the peak to
$27.6$\,GB, but increases the requirement to $48$ levels, beyond the
available chain.
Since residency increases with block size while carry depth decreases, these
two configurations bracket the unrefreshed feasibility boundary.

A secure carry refresh every $R$ blocks bounds the scan depth by
\[
  D_{\mathrm{refresh}}
  =
  2\left\lceil\log_2 B_{\mathrm{blk}}\right\rceil
  +
  R .
\]
With $B_{\mathrm{blk}}{=}256$ and $R{=}8$, the HEScan kernel requires $24$ levels
and $1160$ chain bits.
The measured peak falls to $28.6$\,GB, and the complete $L{=}8192$ scan runs
in $1748.9$\,s per layer.
The refresh converts the carry to shares with $\Pi_{\mathrm{C2M}}^{\mathbb{C}}$
and back with $\Pi_{\mathrm{M2C}}^{\mathbb{C}}$, returning it at the top of the
modulus chain without bootstrapping and without revealing it to either party.
The resulting execution remains bootstrapping-free, and Table~\ref{tab:longseq} collects the depth, residency, and latency of every longer-sequence configuration.

\begin{table}[t]
\centering
\caption{Longer-sequence execution on one A100-40GB GPU.}
\label{tab:longseq}
\scriptsize
\setlength{\tabcolsep}{2.4pt}
\renewcommand{\arraystretch}{1.1}
\newcolumntype{Y}{>{\centering\arraybackslash}X}
\newcommand{\lsz}{\phantom{0}}
\newcommand{\lsspan}[1]{%
  \multicolumn{3}{c@{}}{#1}}
% The last three columns are equal-width X columns. With plain c columns the
% wide \lsspan rows push all of their overhang into the final column, which
% leaves "12L min" far wider than "s/layer" and "GB".
\begin{tabularx}{\columnwidth}{@{}r|r|c|c|YYY@{}}
\toprule
\rowcolor{tabheader}
$L$ & $B_{\mathrm{blk}}$ & Ref. & $D$ & s/layer & GB & $12$L min \\
\midrule
$2048$ & $1024$ & none
& $22$ & \lsz$386.3$ & $32.7$ & \lsz$77.3$ \\
$4096$ & $512$ & none
& $26$ & \lsz$748.9$ & $37.8$ & $149.8$ \\
\midrule
$8192$ & $512$ & none
& $34$ & \lsspan{allocator failure at a measured peak of $40.4$\,GB} \\
$8192$ & $256$ & none
& $48$ & \lsspan{exceeds the available $41$ levels} \\
$8192$ & $256$ & every $8$ blocks
& $24$ & $1748.9$ & $28.6$ & $349.8$ \\
\bottomrule
\end{tabularx}
\let\lsz\relax
\let\lsspan\relax
\end{table}

\subsection{Cross-GPU Portability}
\label{app:gpu-memory-envelope}

Figure~\ref{fig:gpu_envelope} tests whether the resident HEScan schedule preserves long-context deployment across GPU generations and memory capacities by executing the same native backend build on T4, Titan, A100-40GB, A100-80GB, H100, and H200 devices.
\system{} reaches $L{=}2048$ on every tested class, demonstrating portable execution reach while latency tracks device capability.

\begin{figure}[t]
\centering
\includegraphics[width=\columnwidth]{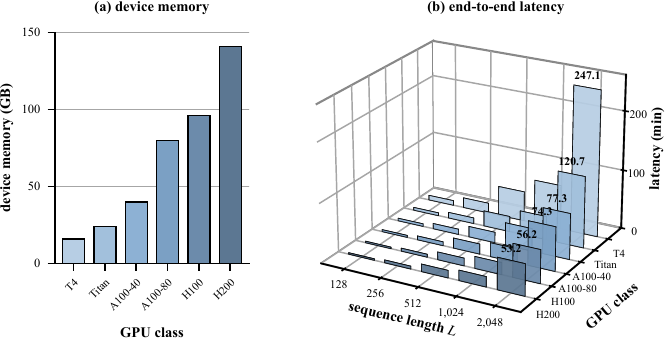}
\caption{\system{} execution across six GPU classes. Panel a reports device memory, and Panel b reports end-to-end latency from $L{=}128$ to $L{=}2{,}048$.}
\label{fig:gpu_envelope}
\end{figure}

\section{Cross-Family SSM Implementations}
\label{app:ssm-implementations}
\label{app:cross-ssm-family}

This appendix shows how invariant and selective SSM families map to the same scan-contract interface exposed by the HEScan kernel in the deployed Mamba-2 pipeline.

\subsection{Common Scan-Contract Interface}

Let $r$ denote an SSM family and let $e\in\{0,\ldots,E^{(r)}-1\}$ index its expanded scan-channel dimension.
For the grouped Mamba-2 family, $e=(h,p)$ and $E^{(r)}=HP$.
Let $i\in\{0,\ldots,d_s^{(r)}-1\}$ index the SSM state coordinate, and let $g_r(e)$ denote the group assigned to channel $e$ by family $r$.
Each family adapter first emits a compact factor packet
\[
  \mathsf{Pack}^{(r)}_k
  =
  \bigl(
    x^{(r)}_k,\;
    a^{(r)}_k,\;
    B^{(r)}_k,\;
    C^{(r)}_k,\;
    \ldots
  \bigr),
  \qquad
  k=1,\ldots,L .
\]
The packet contains recurrence factors rather than a materialized dense transition, and the family adapter maps it to the affine packet consumed by HEScan,
\[
  \mathsf{Aff}^{(r)}_k
  =
  \bigl(
    a^{(r)}_k,\;
    s^{(r)}_k,\;
    \widehat{C}^{(r)}_k
  \bigr).
\]
\system{} interprets the family-specific decay through a broadcast map $\operatorname{br}^{(r)}_a$,
\[
  \alpha^{(r)}_k[e,i]
  =
  \operatorname{br}^{(r)}_a
    \bigl(a^{(r)}_k\bigr)[e,i],
\]
and scans the additive state update already represented by $s^{(r)}_k$,
\[
  h^{(r)}_k[e,i]
  =
  \alpha^{(r)}_k[e,i]\,
  h^{(r)}_{k-1}[e,i]
  +
  s^{(r)}_k[e,i],
  \qquad
  h^{(r)}_0=0.
\]
The contracted scan output is
\[
  m^{(r)}_k[e]
  =
  \sum_{i=0}^{d_s^{(r)}-1}
    h^{(r)}_k[e,i]\,
    \widehat{C}^{(r)}_k[e,i].
\]
For S4D, S5, LRU, Mamba-1/S6, and Mamba-2, the affine packet is the one-endpoint implementation
\[
  \begin{aligned}
  s^{(r)}_k[e,i]
  &=
  x^{(r)}_k[e]\,
  B^{(r)}_k[g_r(e),i],\\
  \widehat{C}^{(r)}_k[e,i]
  &=
  C^{(r)}_k[g_r(e),i].
  \end{aligned}
\]
Mamba-3 uses the same affine packet after its exponential--trapezoidal factor builder forms a two-endpoint state input.
Thus the scan-contract interface is shared, while the construction of $(a_k,s_k,\widehat{C}_k)$ remains family-specific, as summarized in Table~\ref{tab:ssm-family-packet-summary}.

\begin{table*}[t]
\centering
\caption{Scan-packet factors by family at $L{=}128$ for $d_{\mathrm{model}}{=}768$, $H{=}24$, $P{=}64$, $E{=}1536$.}
\label{tab:ssm-family-packet-summary}
\label{tab:cross_ssm_family}
\scriptsize
\setlength{\tabcolsep}{6pt}
\renewcommand{\arraystretch}{1.12}
\begin{tabular}{@{}l|l|lll|rrr@{}}
\toprule
\rowcolor{tabheader}
Family & Regime & Decay & State input & Contraction & $d_s$ & $K_s$ & Affine relin. \\
\midrule
S4D~\cite{gu2022s4d} & invariant & Fixed diagonal & $x_k\bar B$ & Fixed $\bar C$ & $64$ & $6$ & $0$ \\
S5~\cite{smith2023s5} & invariant & Fixed shared transition & $x_k\bar B$ & Fixed $\bar C$ & $128$ & $12$ & $0$ \\
LRU~\cite{orvieto2023lru} & invariant & Fixed complex recurrence & $x_k\bar B$ & Fixed $\bar C$ & $128$ & $12$ & $0$ \\
\midrule
Mamba-1/S6~\cite{gu2024mamba} & selective & Token-dependent $\exp(\Delta_kA)$ & $x_kB_k$ & Token-dependent $C_k$ & $16$ & $2$ & $988$ \\
\rowcolor{oursrow}
Mamba-2~\cite{dao2024mamba2} & selective & Headwise token-dependent & $x_kB_k$ & Grouped dynamic $C_k$ & $128$ & $12$ & $5{,}928$ \\
Mamba-3~\cite{lahoti2026mamba3} & selective & Trapezoidal complex & $\beta_kv_{k-1}+\gamma_kv_k$ & Dynamic or RoPE-adjusted $C_k$ & --- & --- & --- \\
\bottomrule
\end{tabular}
\end{table*}

\subsection{Invariant Families}

\runinhead{S4D.}
S4D instantiates the invariant packet with a diagonal state-transition spectrum and static input/output factors
\[
  \begin{aligned}
  a^{\mathrm{S4D}}_k[i]=\bar a^{\mathrm{S4D}}[i],
  \quad
  B^{\mathrm{S4D}}_k[g,i]=\bar B^{\mathrm{S4D}}[g,i],\\
  C^{\mathrm{S4D}}_k[g,i]=\bar C^{\mathrm{S4D}}[g,i].
  \end{aligned}
\]
Only the scan input varies with the token
\[
  x^{\mathrm{S4D}}_k[e]
  =
  x^{\mathrm{scan}}_k[e].
\]
The resulting \system{} state update is
\[
  h^{\mathrm{S4D}}_k[e,i]
  =
  \bar a^{\mathrm{S4D}}[i]\,
  h^{\mathrm{S4D}}_{k-1}[e,i]
  +
  x^{\mathrm{scan}}_k[e]\,
  \bar B^{\mathrm{S4D}}[g_r(e),i],
\]
with contracted output
\[
  m^{\mathrm{S4D}}_k[e]
  =
  \sum_i
    h^{\mathrm{S4D}}_k[e,i]\,
    \bar C^{\mathrm{S4D}}[g_r(e),i].
\]
This is the diagonal invariant-decay case.

\runinhead{S5.}
The S5-style artifact row uses the same invariant scan law as S4D, but with richer static MIMO-style $B,C$ factors.
In the diagonal-basis view used by the \system{} adapter, the recurrence factors are
\[
  \begin{aligned}
  a^{\mathrm{S5}}_k[i]=\bar a^{\mathrm{S5}}[i],
  \quad
  B^{\mathrm{S5}}_k[g,i]=\bar B^{\mathrm{S5}}[g,i],\\
  C^{\mathrm{S5}}_k[g,i]=\bar C^{\mathrm{S5}}[g,i],
  \end{aligned}
\]
and
\[
  x^{\mathrm{S5}}_k[e]
  =
  x^{\mathrm{scan}}_k[e].
\]
The update and contraction are
\[
  h^{\mathrm{S5}}_k[e,i]
  =
  \bar a^{\mathrm{S5}}[i]\,
  h^{\mathrm{S5}}_{k-1}[e,i]
  +
  x^{\mathrm{scan}}_k[e]\,
  \bar B^{\mathrm{S5}}[g_r(e),i],
\]
and
\[
  m^{\mathrm{S5}}_k[e]
  =
  \sum_i
    h^{\mathrm{S5}}_k[e,i]\,
    \bar C^{\mathrm{S5}}[g_r(e),i].
\]
It should be read as an invariant scan row with richer static $B,C$ mixing, not as a general dense-transition benchmark.

\runinhead{LRU.}
LRU is also an invariant affine recurrence.
In complex form, let
\[
  \bar\lambda^{\mathrm{LRU}}[i]
  =
  \rho_i e^{\mathrm{i}\theta_i},
  \qquad
  |\rho_i|<1,
\]
or use the equivalent real block form for complex-conjugate pairs.
The artifact uses the same invariant factorization
\[
  \begin{aligned}
  a^{\mathrm{LRU}}_k[i]=\bar\lambda^{\mathrm{LRU}}[i],
  \quad
  B^{\mathrm{LRU}}_k[g,i]=\bar B^{\mathrm{LRU}}[g,i],\\
  C^{\mathrm{LRU}}_k[g,i]=\bar C^{\mathrm{LRU}}[g,i],
  \end{aligned}
\]
while
\[
  x^{\mathrm{LRU}}_k[e]
  =
  x^{\mathrm{scan}}_k[e].
\]
Thus
\[
  h^{\mathrm{LRU}}_k[e,i]
  =
  \bar\lambda^{\mathrm{LRU}}[i]\,
  h^{\mathrm{LRU}}_{k-1}[e,i]
  +
  x^{\mathrm{scan}}_k[e]\,
  \bar B^{\mathrm{LRU}}[g_r(e),i],
\]
and
\[
  m^{\mathrm{LRU}}_k[e]
  =
  \sum_i
    h^{\mathrm{LRU}}_k[e,i]\,
    \bar C^{\mathrm{LRU}}[g_r(e),i].
\]
LRU differs from S4D/S5 in its lightweight linear-recurrent parameterization, but cryptographically it remains in the invariant-factor regime.

\subsection{Selective Families}

\runinhead{Mamba-1/S6.}
Mamba-1/S6 instantiates the same packet with token-dependent selective factors.
Let $e$ denote the selective-scan channel index.
The timestep and decay are
\[
  \Delta^{\mathrm{S6}}_k[e]
  =
  \operatorname{softplus}
  \bigl(
    d^{\mathrm{S6}}_k[e]+b^{\mathrm{S6}}_{\Delta}[e]
  \bigr),
\]
and
\[
  a^{\mathrm{S6}}_k[e,i]
  =
  \exp
  \bigl(
    \Delta^{\mathrm{S6}}_k[e]\,
    A^{\mathrm{S6}}[e,i]
  \bigr).
\]
The scan input is step-scaled
\[
  x^{\mathrm{S6}}_k[e]
  =
  \Delta^{\mathrm{S6}}_k[e]\,
  x^{\mathrm{raw,S6}}_k[e].
\]
The dynamic input and output factors are produced by the family-specific factor projection and represented through the same group map $g_{\mathrm{S6}}$
\[
  B^{\mathrm{S6}}_k[g_{\mathrm{S6}}(e),i],
  \qquad
  C^{\mathrm{S6}}_k[g_{\mathrm{S6}}(e),i].
\]
The \system{} recurrence is
\[
  h^{\mathrm{S6}}_k[e,i]
  =
  a^{\mathrm{S6}}_k[e,i]\,
  h^{\mathrm{S6}}_{k-1}[e,i]
  +
  x^{\mathrm{S6}}_k[e]\,
  B^{\mathrm{S6}}_k[g_{\mathrm{S6}}(e),i],
\]
with contraction
\[
  m^{\mathrm{S6}}_k[e]
  =
  \sum_i
    h^{\mathrm{S6}}_k[e,i]\,
    C^{\mathrm{S6}}_k[g_{\mathrm{S6}}(e),i].
\]
Unlike the invariant families, $a_k$, $B_k$, and $C_k$ are private token-dependent factors.
Unlike Mamba-2, whose decay is compact over heads and broadcast over $p,i$, the S6 row allows decay to vary over the scan channel and state coordinate. The common $\operatorname{br}^{(r)}_a$ map covers both decay structures.

\runinhead{Mamba-2.}
Mamba-2 uses the grouped selective scan form.
Let $e=(h,p)$, where $h$ is the SSM head and $p$ is the per-head channel.
Let $g(h)\in\{1,\ldots,G\}$ be the group assigned to head $h$.
The timestep, step-scaled input, and decay are
\[
  \Delta^{\mathrm{M2}}_k[h]
  =
  \operatorname{softplus}
  \bigl(
    d^{\mathrm{M2}}_k[h]+b^{\mathrm{M2}}_{\Delta}[h]
  \bigr),
\]
\[
  x^{\mathrm{M2}}_k[h,p]
  =
  \Delta^{\mathrm{M2}}_k[h]\,
  x^{\mathrm{raw,M2}}_k[h,p],
\]
and
\[
  a^{\mathrm{M2}}_k[h]
  =
  \exp
  \bigl(
    \Delta^{\mathrm{M2}}_k[h]\,
    A^{\mathrm{M2}}[h]
  \bigr).
\]
The grouped factors are
\[
  B^{\mathrm{M2}}_k[g,i],
  \qquad
  C^{\mathrm{M2}}_k[g,i],
  \qquad
  g=1,\ldots,G .
\]
The broadcast decay is $\alpha^{\mathrm{M2}}_k[h,p,i]=a^{\mathrm{M2}}_k[h]$, so the recurrence is
\[
  h^{\mathrm{M2}}_k[h,p,i]
  =
  a^{\mathrm{M2}}_k[h]\,
  h^{\mathrm{M2}}_{k-1}[h,p,i]
  +
  x^{\mathrm{M2}}_k[h,p]\,
  B^{\mathrm{M2}}_k[g(h),i],
\]
and the contracted scan output is
\[
  m^{\mathrm{M2}}_k[h,p]
  =
  \sum_i
    h^{\mathrm{M2}}_k[h,p,i]\,
    C^{\mathrm{M2}}_k[g(h),i].
\]
Mamba-2 differs from Mamba-1/S6 by grouped SSD-style $B_k,C_k$ sharing while retaining private token-dependent selective decay.

\runinhead{Mamba-3.}
Mamba-3 remains affine and prefix-computable, but its factor builder no longer emits the one-endpoint state input $x_kB_k$~\cite{lahoti2026mamba3,mamba3code}.
Its private builder produces a positive timestep $\Delta^{\mathrm{M3}}_k$, a sigmoid gate $\lambda^{\mathrm{M3}}_k$, and a negative transition parameter $A^{\mathrm{M3}}_k$, from which the exponential--trapezoidal coefficients follow,
\[
  \begin{aligned}
  \rho^{\mathrm{M3}}_k
  &=
  \exp\!\left(\Delta^{\mathrm{M3}}_kA^{\mathrm{M3}}_k\right),
  &\qquad
  \gamma^{\mathrm{M3}}_k
  &=
  \lambda^{\mathrm{M3}}_k\Delta^{\mathrm{M3}}_k,\\
  \beta^{\mathrm{M3}}_k
  &=
  \bigl(1-\lambda^{\mathrm{M3}}_k\bigr)\Delta^{\mathrm{M3}}_k\rho^{\mathrm{M3}}_k .
  \end{aligned}
\]
Writing $v^{\mathrm{M3}}_k[h,p,j]=B^{\mathrm{M3}}_k[h,j]\,x^{\mathrm{M3}}_k[h,p]$, the affine state input combines two endpoints,
\[
  s^{\mathrm{M3}}_k
  =
  \beta^{\mathrm{M3}}_k v^{\mathrm{M3}}_{k-1}
  +
  \gamma^{\mathrm{M3}}_k v^{\mathrm{M3}}_k ,
\]
so HEScan runs the same affine recurrence under multiplier $\rho^{\mathrm{M3}}_k$ and contracts with $C^{\mathrm{M3}}_k$, and the extra secure work falls on constructing $s_k$ rather than on the scan algebra.

The complex transition admits two CKKS representations, a paired-real rotation and the direct complex multiplier $q^{\mathrm{M3}}_k[h,j]=\rho^{\mathrm{M3}}_k[h]\bigl(\cos\phi^{\mathrm{M3}}_k[h,j]+\mathrm{i}\sin\phi^{\mathrm{M3}}_k[h,j]\bigr)$.
Both use the same secure factors and differ only in how the transition is packed, and Appendix~\ref{app:cross-family-results} selects the faster direct-complex path.

The MIMO variant raises the state-input and state-output rank from one to $R$, summing $R$ rank-indexed copies of the recurrence and contracting with $R$ output factors.
These factors stay private in \system{}, so the rank-$R$ products become additional ciphertext or MPC operations rather than ordinary low-precision matrix multiplications, and Table~\ref{tab:mamba3_implementation_l128} reports the resulting native secure implementation at $L{=}128$.

\subsection{Shared HEScan Kernel Backend}

The family adapters above differ in how they construct the affine packet
\[
  \mathsf{Aff}^{(r)}_k
  =
  \bigl(
    a^{(r)}_k,\;
    s^{(r)}_k,\;
    \widehat{C}^{(r)}_k
  \bigr).
\]
For S4D, S5, and LRU, the recurrence factors are invariant server-held plaintexts.
Mamba-1/S6 and Mamba-2 produce private one-endpoint selective factors, while Mamba-3 produces the two-endpoint affine input $s_k$ described in the preceding subsection.
Once this packet is available, all families use the same resident prefix schedule and state-contraction backend.

\runinhead{Affine Prefix Composition.}
For state chunk $\kappa$, token $k$ is represented by
\[
  \mathcal{T}^{(r,\kappa)}_k
  =
  \bigl(
    a^{(r)}_k,\;
    s^{(r,\kappa)}_k
  \bigr).
\]
For adjacent scan items $i<j$, HEScan applies
\[
  \mathcal{T}^{(r,\kappa)}_i
  \bullet
  \mathcal{T}^{(r,\kappa)}_j
  =
  \bigl(
    a^{(r)}_j a^{(r)}_i,\;
    \operatorname{br}^{(r,\kappa)}_a(a^{(r)}_j)
    \odot
    s^{(r,\kappa)}_i
    +
    s^{(r,\kappa)}_j
  \bigr).
\]
The Brent--Kung resident schedule computes
\[
  \mathcal{P}^{(r,\kappa)}_k
  =
  \mathcal{T}^{(r,\kappa)}_1
  \bullet\cdots\bullet
  \mathcal{T}^{(r,\kappa)}_k,
  \qquad
  h^{(r,\kappa)}_k
  =
  \mathcal{P}^{(r,\kappa)}_k(0),
\]
and contracts each prefix before releasing the chunk workspace,
\[
  \langle m^{(r)}_k\rangle
  =
  \sum_{\kappa}
  \Sigma_{d_s^{(r)}}\!\left(
    \langle h^{(r,\kappa)}_k\rangle
    \odot
    \widehat{C}^{(r,\kappa)}_k
  \right).
\]

\runinhead{Encrypted Operand Regimes.}
The common backend receives either
\[
  \bigl(
    \bar a^{(r)},\;
    \langle s^{(r)}_k\rangle,\;
    \bar C^{(r)}
  \bigr)
\]
for invariant families, or
\[
  \bigl(
    \langle a^{(r)}_k\rangle,\;
    \langle s^{(r)}_k\rangle,\;
    \langle\widehat{C}^{(r)}_k\rangle
  \bigr)
\]
for selective families.
Invariant decay therefore introduces no ciphertext relinearization during affine composition.
Selective families pay a ciphertext composition cost according to the size and structure of their private transition factors.

\subsection{Cross-Family HEScan Cost and Latency}
\label{app:cross-family-results}

Figure~\ref{fig:cross_ssm_family} summarizes how invariant and selective SSM adapters feed the common scan-contract backend. Their costs differ because the representations of $a_k$, $s_k$, and $\widehat{C}_k$ differ, with invariant factors remaining plaintext, Mamba-2 retaining compact headwise decay, Mamba-1/S6 using a state-expanded private transition, and Mamba-3 requiring a two-endpoint secure factor builder. Table~\ref{tab:cross_ssm_family} reports the canonical state shape and affine-composition cost of the one-endpoint families at $L{=}128$.
Table~\ref{tab:cross_ssm_family_e2e_latency} then fixes $d_{\mathrm{model}}{=}768$, $H{=}24$, $P{=}64$, and $d_s{=}128$, so that $E{=}1536$, $K_s{=}12$, slot packing, residency, contraction, and output packing are held constant.
The resulting differences therefore reflect each family's factor representation and encrypted affine composition rather than model dimensions.

The invariant families have nearly identical latency because their recurrence factors remain plaintext and only the encrypted scan input varies across tokens.
Mamba-1/S6 is substantially slower because its private transition expands across both channel and state coordinates, which motivates the grouped Mamba-2 factorization used by the \system{} pipeline.

\begin{figure*}[t]
\centering
\includegraphics[
  width=\textwidth,
  keepaspectratio,
  page=10,
  trim=42pt 234pt 129pt 206pt,
  clip
]{figures/FESC_diagrams.pdf}
\caption{Invariant and selective SSM families use the same \system{} scan-contract backend.}
\label{fig:cross_ssm_family}
\end{figure*}

\begin{table}[t]
\centering
\caption{Matched-shape latency for SSM families, in seconds.}
\label{tab:cross_ssm_family_e2e_latency}
\scriptsize
\setlength{\tabcolsep}{3.5pt}
\begin{tabular}{l|ccccc}
\toprule
\rowcolor{tabheader}
Family & $L{=}128$ & $L{=}256$ & $L{=}512$ & $L{=}1{,}024$ & $L{=}2{,}048$ \\
\midrule
S4D          & $6.7$   & $12.2$    & $26.9$      & $65.2$      & $132.6$ \\
S5           & $6.6$   & $12.0$    & $26.5$      & $65.1$      & $132.6$ \\
LRU          & $6.5$   & $11.9$    & $26.5$      & $64.8$      & $132.5$ \\
\midrule
Mamba-1/S6   & $722.6$ & $1{,}430.6$ & $2{,}865.5$ & $5{,}771.6$ & $12{,}623.6$ \\
\rowcolor{oursrow}
Mamba-2      & $18.5$  & $36.9$    & $77.4$      & $173.1$     & $381.5$ \\
\bottomrule
\end{tabular}
\end{table}

Mamba-3 is reported in Table~\ref{tab:mamba3_implementation_l128} because its native secure-layer scope includes the exponential--trapezoidal factor builder and, for MIMO, rank-$R$ state-input and output factors. For Mamba-3, we implement both paired-real and direct-complex transition representations and select the faster direct-complex path for the native secure-layer evaluation.
The table compares Mamba-2 and Mamba-3 under the same single-layer component harness, while Table~\ref{tab:e2e_comparison} reports deployed end-to-end model latency.
At $d_s{=}128$, Table~\ref{tab:mamba3_implementation_l128} reports $179.1$\,s for SISO and $405.8$\,s for MIMO-R4, compared with $48.3$\,s for the matched Mamba-2 control.
Profiling attributes the additional cost primarily to secure two-endpoint and rank-$R$ factor construction rather than to the resident prefix schedule.
The result therefore demonstrates compatibility with the shared HEScan backend while confirming that compact Mamba-2 remains the faster encrypted deployment at this scale.
Reducing Mamba-3 factor-construction overhead is left to future work.

\begin{table}[t]
\centering
\caption{Native single-layer latency for the Mamba-3 compatibility study at $L{=}128$.}
\label{tab:mamba3_implementation_l128}
\scriptsize
\setlength{\tabcolsep}{4pt}
\renewcommand{\arraystretch}{1.08}
\begin{tabular}{l|c|c|c|c|c}
\toprule
\rowcolor{tabheader}
Model & Rank & $d_s$ & Layer time (s) & Rel. err. & Ratio \\
\midrule
\rowcolor{oursrow}
Mamba-2 & $1$ & $64$  & $34.2$  & $0.010$ & $1.00\times$ \\
Mamba-3 SISO    & $1$ & $64$  & $130.3$ & $0.059$ & $3.81\times$ \\
Mamba-3 MIMO-R4 & $4$ & $64$  & $274.6$ & $0.060$ & $8.03\times$ \\
\midrule
\rowcolor{oursrow}
Mamba-2 & $1$ & $128$ & $48.3$  & $0.004$ & $1.00\times$ \\
Mamba-3 SISO    & $1$ & $128$ & $179.1$ & $0.059$ & $3.71\times$ \\
Mamba-3 MIMO-R4 & $4$ & $128$ & $405.8$ & $0.060$ & $8.39\times$ \\
\bottomrule
\end{tabular}
\end{table}

% ============================================================

\section{Training and Accuracy}
\label{app:training-accuracy}

\subsection{Approximation-Aware Fine-Tuning Procedure}
\label{app:cotrain}

The AAT procedure described in \S\ref{sec:workload-setup} starts from the same task-specific $12$-layer student used for exact evaluation. Before fine-tuning, it replaces every exact Mamba nonlinearity with its deployment polynomial and calibrates the bucket boundaries once on 500 training samples, after which the boundaries remain frozen. Table~\ref{tab:cotrain-details} summarizes the configuration.

\begin{table}[ht]
\centering
\caption{Model construction and approximation configuration.}
\label{tab:cotrain-details}
\scriptsize
\setlength{\tabcolsep}{4pt}
\renewcommand{\arraystretch}{1.1}
\begin{tabular}{@{}l|l@{}}
\toprule
\rowcolor{tabheader}
Setting & Value \\
\midrule
Teacher initialization & \texttt{state-spaces/mamba2-130m}  \\
Teacher adaptation & One task epoch \\
Student distillation & One task epoch, $T{=}2$, $\lambda{=}0.5$ \\
Exact/AAT initialization & Task-specific $12$-layer student \\
Optimizer & AdamW \\
Learning rate & $1 \times 10^{-5}$ \\
Weight decay & 0.01 \\
LR schedule & Linear warmup (6\%) then linear decay \\
Precision & AMP (fp16 optimizer, fp32 master weights) \\
Activation swap & Before fine-tuning, not during \\
Bucket calibration & Single pass, 500 train samples, frozen \\
\bottomrule
\end{tabular}
\end{table}

With the approximations active from the first gradient update, the model can adapt to their error throughout fine-tuning. Post-swap evaluation instead applies them to an already fine-tuned exact checkpoint without further optimization, providing a lower-bound deployment scenario without learned compensation.

% ============================================================

\subsection{Encrypted Accuracy Audit Scope}
\label{app:encrypted-audit-scope}

The encrypted audit runs $2{,}670$ stratified long-document queries per seed end-to-end on the deployed native encrypted backend. The full test splits contain $1{,}400$ SCOTUS, $2{,}500$ arXiv, and $5{,}000$ Patent examples, so evaluating every example at the three deployed lengths $L\in\{512,1024,2048\}$ would require $26{,}700$ encrypted queries per seed.
Table~\ref{tab:encrypted_audit_runtime} converts the measured latency rows into native encrypted wall-clock time and motivates the stratified $10\%$ subset as the accuracy-evaluation unit.
On this paired subset, Encrypted \system{} stays within $0.48$ percentage points of Plain \system{} in every row and within $0.15$ percentage points in aggregate.

\begin{table}[t]
\centering
\caption{Serial encrypted-evaluation runtime implied by Table~\ref{tab:e2e_comparison}.}
\label{tab:encrypted_audit_runtime}
\scriptsize
\setlength{\tabcolsep}{2.4pt}
\renewcommand{\arraystretch}{1.08}
\begin{adjustbox}{max width=\columnwidth}
\begin{tabular}{@{}l|l|c|cc|cc@{}}
\toprule
\rowcolor{tabheader}
& & & \multicolumn{2}{c|}{1 GPU} & \multicolumn{2}{c}{4 GPUs} \\
\rowcolor{tabheader}
System & Scope & Queries/seed & 1 seed & 3 seeds & 1 seed & 3 seeds \\
\midrule
\multirow{2}{*}{AEGIS$^\dagger$~\cite{gong2026aegis}} & Full test & $26{,}700$ & $2{,}865.4$\,d ($7.84$\,y) & $23.5$\,y & $745.6$\,d ($2.04$\,y) & $6.12$\,y \\
 & 10\% audit & $2{,}670$ & $286.5$\,d & $859.6$\,d ($2.35$\,y) & $74.6$\,d & $223.7$\,d \\
\midrule
\multirow{2}{*}{\system{}} & Full test & $26{,}700$ & $823.3$\,d ($2.26$\,y) & $6.76$\,y & $385.7$\,d ($1.06$\,y) & $3.17$\,y \\
 & \cellcolor{oursrow}10\% audit & \cellcolor{oursrow}$2{,}670$ & \cellcolor{oursrow}$82.3$\,d & \cellcolor{oursrow}$247.0$\,d & \cellcolor{oursrow}$38.6$\,d & \cellcolor{oursrow}$115.7$\,d \\
\bottomrule
\end{tabular}
\end{adjustbox}
\vspace{1pt}
\parbox{\linewidth}{\scriptsize $^{\dagger}$AEGIS one-GPU entries are serial-equivalent values computed from paper-reported multi-GPU rows and the interpolation below, not native one-GPU runs.}
\end{table}

The AEGIS row gives an illustrative cross-system estimate of the cost scale for encrypted long-sequence evaluation. Applying its reported per-query latencies from Table~\ref{tab:gpu_scaling}, with the missing $L{=}1024$ point geometrically interpolated from the $L{=}512$ and $L{=}2048$ rows, to the same query count would put a single-seed full-test sweep at $7.84$ years on one GPU and $2.04$ years on four, about $3.5\times$ and $1.9\times$ longer than \system{}.

We also audit encrypted-versus-plaintext residual-stream agreement at all three deployed lengths. Figure~\ref{fig:layerwise-3task-L512} shows that the cumulative residual error grows monotonically with depth, with the growth rate stabilizing across all $12$ layers.

\begin{figure}[ht]
\centering
\includegraphics[width=0.9\linewidth]{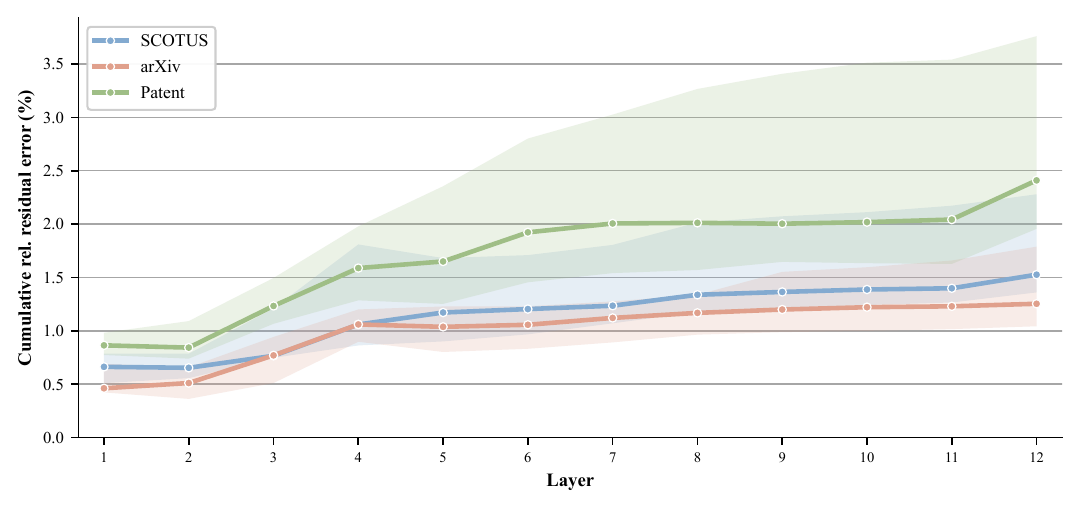}
\caption{Layerwise residual-stream error on the encrypted audit subset.}
\label{fig:layerwise-3task-L512}
\end{figure}

\subsection{Complete Accuracy Tables}
\label{app:full-results}

This subsection gives the complete values behind the main evaluation figures: Table~\ref{tab:exact_baselines} reports the three-seed unapproximated accuracy in Figure~\ref{fig:exact_accuracy}, Table~\ref{tab:fesc_network_comm} reports the per-length communication and latency in Figure~\ref{fig:network_cost}, and Table~\ref{tab:scan_residency} reports the scan residency and key-switch counts in Figure~\ref{fig:scan_residency}.
For Table~\ref{tab:scan_residency}, the peak column records the logged live ciphertext inventory, while the result column applies the one-A100 memory envelope and CKKS chain-depth constraints.

The approximation sweeps use paired controls: Exact and AAT start from the same task-specific distilled \mambabase{} checkpoint and receive one additional exact or approximation-aware epoch, while BERT uses \texttt{bert-base-uncased} trained directly for two epochs.
Table~\ref{tab:degree_selection} varies MoSiLU and MoSoft with RMSNorm and the exponential exact; Table~\ref{tab:allapprox} varies the NeRMS Newton count with both fixed at D4; and Table~\ref{tab:sweep_summary} varies the MMExp degree with both fixed.

\begin{table}[ht]
\centering
\caption{Three-seed mean task accuracy [\%].}
\label{tab:exact_baselines}
\scriptsize
\setlength{\tabcolsep}{4pt}
\renewcommand{\arraystretch}{1.1}
\begin{adjustbox}{max width=\columnwidth}
\begin{tabular}{@{}l|l|ccccc@{}}
\toprule
\rowcolor{tabheader}
Dataset & Model & $L{=}128$ & $L{=}256$ & $L{=}512$ & $L{=}1024$ & $L{=}2048$ \\
\midrule
\multirow{2}{*}{SCOTUS} & BERT-base & $48.78$ & $58.38$ & $64.07$ & $64.36$ & $66.36$ \\
                         & \mambabase{} & $44.65$ & $58.86$ & $68.36$ & $70.50$ & $72.50$ \\
\midrule
\multirow{2}{*}{arXiv}  & BERT-base & $86.00$ & $86.40$ & $86.56$ & $87.76$ & $88.36$ \\
                         & \mambabase{} & $85.48$ & $87.60$ & $88.84$ & $89.96$ & $90.28$ \\
\midrule
\multirow{2}{*}{Patent} & BERT-base & $60.47$ & $64.77$ & $65.78$ & $66.98$ & $67.72$ \\
                         & \mambabase{} & $61.22$ & $64.48$ & $65.88$ & $67.44$ & $68.12$ \\
\bottomrule
\end{tabular}
\end{adjustbox}
\end{table}

\begin{table}[ht]
\centering
\caption{\system{} communication and end-to-end latency.}
\label{tab:fesc_network_comm}
\scriptsize
\setlength{\tabcolsep}{2.8pt}
\renewcommand{\arraystretch}{1.12}
\begin{adjustbox}{max width=\columnwidth}
\begin{tabular}{@{}c|c|cccc@{}}
\toprule
\rowcolor{tabheader}
 & & \multicolumn{4}{c}{Latency (min)} \\
\rowcolor{tabheader}
$L$ & Online (GB) & LAN & WAN1 & WAN2 & WAN3 \\
\midrule
$128$ & $4.7$ & $4.4$ & $5.5$ & $10.5$ & $12.1$ \\
$256$ & $9.3$ & $9.0$ & $11.1$ & $21.0$ & $22.6$ \\
$512$ & $18.5$ & $17.8$ & $21.9$ & $41.7$ & $43.3$ \\
$1{,}024$ & $36.9$ & $38.1$ & $46.1$ & $85.8$ & $87.4$ \\
$2{,}048$ & $73.8$ & $77.3$ & $93.2$ & $172.5$ & $174.1$ \\
\bottomrule
\end{tabular}
\end{adjustbox}
\end{table}

\begin{table}[ht]
\centering
\caption{Resident HEScan kernel schedules at $L{=}2048$.}
\label{tab:scan_residency}
\scriptsize
\setlength{\tabcolsep}{2.6pt}
\renewcommand{\arraystretch}{1.12}
\begin{adjustbox}{max width=\columnwidth}
\begin{tabular}{@{}l|cc|ccc|l@{}}
\toprule
\rowcolor{tabheader}
Scan schedule & $B_{\mathrm{blk}}$ & $K_{\mathrm{blk}}$ & Peak (GB) & HEScan KSw & Time (s/layer) & Result \\
\midrule
Two-pass & $64$ & $32$ & --- & $370\mathrm{K}$ & --- & chain exhausted \\
Two-pass & $128$ & $16$ & --- & $372\mathrm{K}$ & --- & chain exhausted \\
Two-pass & $256$ & $8$ & $12.8$ & $372\mathrm{K}$ & $302.1$ & feasible, slower \\
Two-pass & $512$ & $4$ & $19.5$ & $370\mathrm{K}$ & $296.3$ & feasible, slower \\
\cellcolor{oursrow}Two-pass & \cellcolor{oursrow}$1024$ & \cellcolor{oursrow}$2$ & \cellcolor{oursrow}$32.7$ & \cellcolor{oursrow}$372\mathrm{K}$ & \cellcolor{oursrow}$284.2$ & \cellcolor{oursrow}default \\
Full-length & $2048$ & $1$ & --- & $283\mathrm{K}$ & --- & OOM \\
\bottomrule
\end{tabular}
\end{adjustbox}
\end{table}

\phantomsection
\label{app:nonlinear-sweeps}
\begin{table*}[t]
\centering
\caption{MoSiLU/MoSoft degree sweep with three-seed deviations.}
\label{tab:degree_selection}
\scriptsize
\setlength{\tabcolsep}{2.2pt}
\renewcommand{\arraystretch}{1.1}
\begin{adjustbox}{max width=\textwidth}
\begin{tabular}{@{}l|c|cc|ccc|ccc@{}}
\toprule
\rowcolor{tabheader}
& & \multicolumn{2}{c|}{Baselines} & \multicolumn{3}{c|}{Post-Swap} & \multicolumn{3}{c}{AAT} \\
\rowcolor{tabheader}
Dataset & $L$ & BERT-base & Mamba-base & D2 & D4 & D6 & D2 & D4 & D6 \\
\midrule
\multirow{3}{*}{SCOTUS} & 512 & $64.07 \pm 2.20$ & $68.36 \pm 2.51$ & $68.00 \pm 0.37$ & $68.29 \pm 0.18$ & $68.50 \pm 0.15$ & $67.93 \pm 0.25$ & $68.43 \pm 0.25$ & $68.57 \pm 0.29$ \\
 & 1024 & $64.36 \pm 0.70$ & $70.50 \pm 2.27$ & $68.93 \pm 0.61$ & $70.43 \pm 0.46$ & $70.29 \pm 0.53$ & $69.14 \pm 0.82$ & $70.29 \pm 0.08$ & $70.21 \pm 0.27$ \\
 & 2048 & $66.36 \pm 2.07$ & $72.50 \pm 1.22$ & $70.86 \pm 0.12$ & $72.64 \pm 0.39$ & $72.71 \pm 0.35$ & $71.00 \pm 0.14$ & $72.71 \pm 1.39$ & $72.50 \pm 1.20$ \\
\midrule
\multirow{3}{*}{arXiv} & 512 & $86.56 \pm 1.31$ & $88.84 \pm 0.29$ & $88.92 \pm 0.42$ & $88.88 \pm 0.46$ & $88.88 \pm 0.50$ & $89.28 \pm 0.26$ & $88.96 \pm 0.22$ & $89.00 \pm 0.33$ \\
 & 1024 & $87.76 \pm 0.82$ & $89.96 \pm 0.42$ & $89.60 \pm 0.24$ & $89.96 \pm 0.34$ & $89.88 \pm 0.34$ & $89.92 \pm 0.16$ & $89.96 \pm 0.04$ & $89.88 \pm 0.12$ \\
 & 2048 & $88.36 \pm 0.77$ & $90.28 \pm 0.31$ & $89.96 \pm 0.09$ & $90.28 \pm 0.16$ & $90.28 \pm 0.20$ & $90.24 \pm 0.43$ & $90.36 \pm 0.17$ & $90.48 \pm 0.45$ \\
\midrule
\multirow{3}{*}{Patent} & 512 & $65.78 \pm 0.63$ & $65.88 \pm 0.49$ & $64.64 \pm 0.13$ & $65.84 \pm 0.42$ & $65.80 \pm 0.37$ & $65.34 \pm 0.43$ & $65.90 \pm 0.14$ & $65.90 \pm 0.18$ \\
 & 1024 & $66.98 \pm 0.67$ & $67.44 \pm 0.54$ & $66.20 \pm 0.34$ & $67.50 \pm 0.08$ & $67.44 \pm 0.13$ & $67.54 \pm 0.17$ & $67.60 \pm 0.12$ & $67.46 \pm 0.14$ \\
 & 2048 & $67.72 \pm 0.95$ & $68.12 \pm 0.50$ & $67.84 \pm 0.34$ & $68.14 \pm 0.29$ & $68.22 \pm 0.31$ & $68.14 \pm 0.05$ & $68.12 \pm 0.10$ & $68.18 \pm 0.17$ \\
\midrule
\multicolumn{2}{@{}l|}{MPC online ops} & --- & --- & $5$ & $6$ & $7$ & $5$ & $6$ & $7$ \\
\multicolumn{2}{@{}l|}{MPC online layers} & --- & --- & $2$ & $3$ & $4$ & $2$ & $3$ & $4$ \\
\midrule
\multicolumn{2}{@{}l|}{Avg $\Delta$ vs BERT-base} & --- & $+2.66$ & $+1.89$ & $+2.67$ & $+2.67$ & $+2.29$ & $\mathbf{+2.71}$ & $+2.69$ \\
\multicolumn{2}{@{}l|}{Avg $\Delta$ vs Mamba-base} & $-2.66$ & --- & $-0.77$ & $+0.01$ & $+0.01$ & $-0.37$ & $\mathbf{+0.05}$ & $+0.03$ \\
\bottomrule
\end{tabular}
\end{adjustbox}
\end{table*}

\begin{table*}[t]
\centering
\caption{NeRMS iteration sweep with three-seed deviations.}
\label{tab:allapprox}
\scriptsize
\setlength{\tabcolsep}{2.2pt}
\renewcommand{\arraystretch}{1.1}
\begin{adjustbox}{max width=\textwidth}
\begin{tabular}{@{}l|c|cc|ccc|ccc@{}}
\toprule
\rowcolor{tabheader}
& & \multicolumn{2}{c|}{Baselines} & \multicolumn{3}{c|}{Post-Swap} & \multicolumn{3}{c}{AAT} \\
\rowcolor{tabheader}
Dataset & $L$ & BERT-base & Mamba-base & N1 & N2 & N3 & N1 & N2 & N3 \\
\midrule
\multirow{3}{*}{SCOTUS} & 512 & $64.07 \pm 2.20$ & $68.36 \pm 2.51$ & $65.86 \pm 0.11$ & $65.86 \pm 0.46$ & $66.14 \pm 0.45$ & $68.29 \pm 0.62$ & $67.57 \pm 0.85$ & $67.14 \pm 0.80$ \\
 & 1024 & $64.36 \pm 0.70$ & $70.50 \pm 2.27$ & $66.64 \pm 0.29$ & $66.71 \pm 0.19$ & $66.64 \pm 0.33$ & $68.71 \pm 0.52$ & $68.71 \pm 0.41$ & $68.43 \pm 0.59$ \\
 & 2048 & $66.36 \pm 2.07$ & $72.50 \pm 1.22$ & $67.21 \pm 0.62$ & $67.50 \pm 0.37$ & $67.79 \pm 0.44$ & $70.07 \pm 0.95$ & $69.71 \pm 0.70$ & $69.36 \pm 0.83$ \\
\midrule
\multirow{3}{*}{arXiv} & 512 & $86.56 \pm 1.31$ & $88.84 \pm 0.29$ & $88.60 \pm 0.63$ & $88.76 \pm 0.72$ & $88.80 \pm 0.61$ & $88.84 \pm 1.09$ & $88.20 \pm 1.21$ & $88.16 \pm 1.03$ \\
 & 1024 & $87.76 \pm 0.82$ & $89.96 \pm 0.42$ & $89.28 \pm 0.24$ & $89.44 \pm 0.29$ & $89.64 \pm 0.18$ & $89.80 \pm 0.46$ & $89.80 \pm 0.54$ & $89.36 \pm 0.36$ \\
 & 2048 & $88.36 \pm 0.77$ & $90.28 \pm 0.31$ & $90.12 \pm 0.17$ & $90.08 \pm 0.08$ & $90.12 \pm 0.10$ & $90.04 \pm 0.39$ & $89.32 \pm 0.29$ & $8.48^{\dagger} \pm 1.31$ \\
\midrule
\multirow{3}{*}{Patent} & 512 & $65.78 \pm 0.63$ & $65.88 \pm 0.49$ & $65.18 \pm 0.17$ & $65.36 \pm 0.32$ & $65.36 \pm 0.30$ & $66.05 \pm 0.52$ & $66.19 \pm 0.65$ & $65.93 \pm 0.60$ \\
 & 1024 & $66.98 \pm 0.67$ & $67.44 \pm 0.54$ & $66.78 \pm 0.23$ & $66.78 \pm 0.24$ & $66.88 \pm 0.22$ & $67.63 \pm 0.46$ & $67.73 \pm 0.47$ & $67.51 \pm 0.42$ \\
 & 2048 & $67.72 \pm 0.95$ & $68.12 \pm 0.50$ & $67.62 \pm 0.06$ & $67.58 \pm 0.02$ & $67.58 \pm 0.09$ & $68.25 \pm 0.20$ & $68.11 \pm 0.15$ & $68.13 \pm 0.23$ \\
\midrule
\multicolumn{2}{@{}l|}{MPC online ops} & --- & --- & $9$ & $12$ & $15$ & $9$ & $12$ & $15$ \\
\multicolumn{2}{@{}l|}{MPC online layers} & --- & --- & $3$ & $6$ & $9$ & $3$ & $6$ & $9$ \\
\midrule
\multicolumn{2}{@{}l|}{Avg $\Delta$ vs BERT-base} & --- & $+2.66$ & $+1.04$ & $+1.12$ & $+1.22$ & $\mathbf{+2.19}$ & $+1.93$ & $-7.27$ \\
\multicolumn{2}{@{}l|}{Avg $\Delta$ vs Mamba-base} & $-2.66$ & --- & $-1.62$ & $-1.53$ & $-1.44$ & $\mathbf{-0.47}$ & $-0.73$ & $-9.93$ \\
\midrule
\multicolumn{2}{@{}l|}{Avg $\Delta$ vs BERT-base (excl.\ $\dagger$)} & --- & $+2.66$ & $+1.04$ & $+1.12$ & $+1.22$ & $\mathbf{+2.19}$ & $+1.93$ & $+1.80$ \\
\multicolumn{2}{@{}l|}{Avg $\Delta$ vs Mamba-base (excl.\ $\dagger$)} & $-2.66$ & --- & $-1.62$ & $-1.53$ & $-1.44$ & $\mathbf{-0.47}$ & $-0.73$ & $-0.95$ \\
\bottomrule
\end{tabular}
\end{adjustbox}
\par\vspace{1pt}
\noindent{\scriptsize\emph{Note.} $\dagger$ marks the collapsed arXiv L2048 N3 AAT cell.}
\end{table*}

\begin{table*}[t]
\centering
\caption{MMExp-degree sweep with three-seed deviations.}
\label{tab:sweep_summary}
\scriptsize
\setlength{\tabcolsep}{2.2pt}
\renewcommand{\arraystretch}{1.1}
\begin{adjustbox}{max width=\textwidth}
\begin{tabular}{@{}l|c|cc|cccc|cccc@{}}
\toprule
\rowcolor{tabheader}
& & \multicolumn{2}{c|}{Baselines} & \multicolumn{4}{c|}{Post-Swap} & \multicolumn{4}{c}{AAT} \\
\rowcolor{tabheader}
Dataset & $L$ & BERT-base & Mamba-base & D2 & D4 & D6 & D8 & D2 & D4 & D6 & D8 \\
\midrule
\multirow{3}{*}{SCOTUS}
 & 512  & $64.07 \pm 2.20$ & $68.36 \pm 2.51$ & $64.69 \pm 0.21$ & $65.79 \pm 0.69$ & $65.86 \pm 1.62$ & $65.86 \pm 0.27$ & $68.07 \pm 0.26$ & $67.71 \pm 2.91$ & $68.36 \pm 1.86$ & $68.43 \pm 2.37$ \\
 & 1024 & $64.36 \pm 0.70$ & $70.50 \pm 2.27$ & $65.33 \pm 0.42$ & $66.07 \pm 0.87$ & $66.64 \pm 2.88$ & $66.64 \pm 2.56$ & $69.52 \pm 0.36$ & $69.57 \pm 2.23$ & $69.93 \pm 0.48$ & $70.07 \pm 1.73$ \\
 & 2048 & $66.36 \pm 2.07$ & $72.50 \pm 1.22$ & $64.86 \pm 0.57$ & $67.21 \pm 0.72$ & $67.29 \pm 1.17$ & $67.21 \pm 0.89$ & $70.60 \pm 1.16$ & $70.93 \pm 1.93$ & $70.93 \pm 0.03$ & $70.93 \pm 1.51$ \\
\midrule
\multirow{3}{*}{arXiv}
 & 512  & $86.56 \pm 1.31$ & $88.84 \pm 0.29$ & $87.31 \pm 0.58$ & $88.60 \pm 0.67$ & $88.60 \pm 1.02$ & $88.60 \pm 1.07$ & $88.93 \pm 0.14$ & $89.08 \pm 0.36$ & $89.40 \pm 0.44$ & $88.52 \pm 1.84$ \\
 & 1024 & $87.76 \pm 0.82$ & $89.96 \pm 0.42$ & $88.23 \pm 0.36$ & $89.56 \pm 0.45$ & $89.32 \pm 0.44$ & $89.32 \pm 2.09$ & $62.65^{\dagger} \pm 46.92$ & $89.36 \pm 0.03$ & $87.72 \pm 0.27$ & $88.20 \pm 2.96$ \\
 & 2048 & $88.36 \pm 0.77$ & $90.28 \pm 0.31$ & $88.53 \pm 0.24$ & $90.08 \pm 0.30$ & $90.16 \pm 0.48$ & $90.12 \pm 1.40$ & $62.81^{\dagger} \pm 47.06$ & $89.80 \pm 1.49$ & $89.68 \pm 0.28$ & $89.96 \pm 1.28$ \\
\midrule
\multirow{3}{*}{Patent}
 & 512  & $65.78 \pm 0.63$ & $65.88 \pm 0.49$ & $64.15 \pm 0.34$ & $65.12 \pm 0.49$ & $65.10 \pm 0.50$ & $65.12 \pm 1.45$ & $66.01 \pm 0.30$ & $66.06 \pm 0.82$ & $66.08 \pm 0.21$ & $66.14 \pm 1.69$ \\
 & 1024 & $66.98 \pm 0.67$ & $67.44 \pm 0.54$ & $65.63 \pm 0.09$ & $66.84 \pm 1.03$ & $66.82 \pm 0.81$ & $66.80 \pm 1.91$ & $67.16 \pm 0.27$ & $67.00 \pm 0.26$ & $67.44 \pm 0.20$ & $66.86 \pm 2.47$ \\
 & 2048 & $67.72 \pm 0.95$ & $68.12 \pm 0.50$ & $66.59 \pm 0.18$ & $67.64 \pm 0.12$ & $67.62 \pm 0.48$ & $67.66 \pm 2.91$ & $49.35^{\dagger} \pm 29.68$ & $67.60 \pm 0.15$ & $68.20 \pm 2.10$ & $67.70 \pm 1.96$ \\
\midrule
\multicolumn{2}{@{}l|}{MPC online ops} & --- & --- & $5$ & $7$ & $9$ & $11$ & $5$ & $7$ & $9$ & $11$ \\
\multicolumn{2}{@{}l|}{MPC online layers} & --- & --- & $3$ & $4$ & $5$ & $6$ & $3$ & $4$ & $5$ & $6$ \\
\midrule
\multicolumn{2}{@{}l|}{Avg $\Delta$ vs BERT-base} & --- & $+2.66$ & $-0.29$ & $+1.00$ & $+1.05$ & $+1.04$ & $-5.87$ & $+2.13$ & $\mathbf{+2.20}$ & $+2.10$ \\
\multicolumn{2}{@{}l|}{Avg $\Delta$ vs Mamba-base} & $-2.66$ & --- & $-2.95$ & $-1.66$ & $-1.61$ & $-1.62$ & $-8.53$ & $-0.53$ & $\mathbf{-0.46}$ & $-0.56$ \\
\midrule
\multicolumn{2}{@{}l|}{Avg $\Delta$ vs BERT-base (excl.\ $\dagger$)} & --- & $+2.66$ & $-0.29$ & $+1.00$ & $+1.05$ & $+1.04$ & $\mathbf{+2.70}$ & $+2.13$ & $+2.20$ & $+2.10$ \\
\multicolumn{2}{@{}l|}{Avg $\Delta$ vs Mamba-base (excl.\ $\dagger$)} & $-2.66$ & --- & $-2.95$ & $-1.66$ & $-1.61$ & $-1.62$ & $-0.54$ & $-0.53$ & $\mathbf{-0.46}$ & $-0.56$ \\
\bottomrule
\end{tabular}
\end{adjustbox}
\par\vspace{1pt}
\noindent{\scriptsize\emph{Note.} $\dagger$ marks collapsed D2 AAT cells. The excluding-dagger rows summarize stable cells.}
\end{table*}

\end{document}